\documentclass{article}
\usepackage{ijcai26}

\usepackage{times}
\usepackage{soul}
\usepackage{url}
\usepackage[hidelinks]{hyperref}
\usepackage[utf8]{inputenc}
\usepackage[small]{caption}
\usepackage{graphicx}
\usepackage{amsmath}
\usepackage{amsthm}
\usepackage{booktabs}
\usepackage[switch]{lineno}

\usepackage{thm-restate}
\usepackage{natbib}
\usepackage{acronym}
\usepackage{amssymb}
\usepackage{environ}
\usepackage{tabularx}
\usepackage{todonotes}

\usepackage{tikz}
\usetikzlibrary{calc}
\usepackage{pifont}%
\usepackage[ruled, linesnumbered, noend]{algorithm2e} 
\usepackage{algpseudocode}

\usepackage{cleveref}

\definecolor{CoalitionColor}{HTML}{1D42A6}

\usetikzlibrary{decorations.pathreplacing}
\tikzstyle{agent}=[draw, shape=circle, minimum size=30pt]
\tikzstyle{smallagent}=[draw, shape=circle, minimum size=15pt]
\tikzstyle{aux}=[draw, shape=circle, minimum size=5pt]
\newcommand{\convexpath}[2]{
  [   
  create hullcoords/.code={
    \global\edef\namelist{#1}
    \foreach [count=\counter] \nodename in \namelist {
      \global\edef\numberofnodes{\counter}
      \coordinate (hullcoord\counter) at (\nodename);
    }
    \coordinate (hullcoord0) at (hullcoord\numberofnodes);
    \pgfmathtruncatemacro\lastnumber{\numberofnodes+1}
    \coordinate (hullcoord\lastnumber) at (hullcoord1);
  },
  create hullcoords
  ]
  ($(hullcoord1)!#2!-90:(hullcoord0)$)
  \foreach [
  evaluate=\currentnode as \previousnode using \currentnode-1,
  evaluate=\currentnode as \nextnode using \currentnode+1
  ] \currentnode in {1,...,\numberofnodes} {
    let \p1 = ($(hullcoord\currentnode) - (hullcoord\previousnode)$),
    \n1 = {atan2(\y1,\x1) + 90},
    \p2 = ($(hullcoord\nextnode) - (hullcoord\currentnode)$),
    \n2 = {atan2(\y2,\x2) + 90},
    \n{delta} = {Mod(\n2-\n1,360) - 360}
    in 
    {arc [start angle=\n1, delta angle=\n{delta}, radius=#2]}
    -- ($(hullcoord\nextnode)!#2!-90:(hullcoord\currentnode)$) 
  }
}

\newtheorem{example}{Example}
\newtheorem{theorem}{Theorem}
\newtheorem{lemma}{Lemma} 
\newtheorem{corollary}{Corollary}

\makeatletter
\newcommand{\problemtitle}[1]{\gdef\@problemtitle{#1}}
\newcommand{\probleminput}[1]{\gdef\@probleminput{#1}}
\newcommand{\problemsolution}[1]{\gdef\@problemsolution{#1}}
\NewEnviron{problem}{%
	\problemtitle{}\probleminput{}\problemsolution{}
	\BODY
	\par\addvspace{.5\baselineskip}
	\noindent
	\begin{tabularx}{\linewidth}{@{\hspace{\parindent}} l X c}
		\toprule
		\multicolumn{2}{@{\hspace{\parindent}}l}{\@problemtitle} \\
		\textbf{Input:} & \@probleminput \\
		\textbf{Solution:} & \@problemsolution \\
		\bottomrule
	\end{tabularx}
	\par\addvspace{.5\baselineskip}
}
\makeatother

\acrodef{ASHG}{additively separable hedonic game}
\acrodef{FHG}{fractional hedonic game}
\acrodef{MFHG}{modified fractional hedonic game}
\acrodef{FEG}{friends and enemies game}
\acrodef{NS}{Nash stable}
\acrodef{IS}{individually stable}
\acrodef{CNS}{contractually Nash stable}
\acrodef{CIS}{contractually individually stable}
\acrodefindefinite{ASHG}{an}{an}

\newcommand{\vf}{v}
\newcommand{\uf}{u}

\title{Stability Under Valuation Updates in Coalition Formation}

\author{
Fabian Frank
\and
Matija Novakovi\'c\and
Ren\'e Romen
\affiliations
Technical University of Munich\\
\emails
\{fabian\_w.frank, matija.novakovic, rene.romen\}@tum.de 
}

\begin{document}

\maketitle

\begin{abstract}
Coalition formation studies how to partition a set of agents into disjoint coalitions under consideration of their preferences. We study the classical objective of stability in a variant of additively separable hedonic games where agents can change their valuations. Our objective is to find a stable partition after each change. To minimize the reconfiguration cost, we search for nearby stable coalition structures. Our focus is on stability concepts based on single-agent deviations. We present a detailed picture of the complexity of finding nearby stable coalition structures in additively separable hedonic games, for both symmetric and non-symmetric valuations. Our results show that the problem is NP-complete for Nash stability, individual stability, contractual Nash stability, and contractual individual stability. We complement these results by presenting polynomial-time algorithms for contractual Nash stability and contractual individual stability under restricted symmetric valuations. Finally, we show that these algorithms guarantee a bounded average distance over long sequences of updates.
\end{abstract}
\section{Introduction}
Many real-life applications, such as the formation of study groups, the interaction of robots, or international organizations, are multi-agent systems in which agents have different preferences for the agents they interact with. Coalition formation research is concerned with modeling cooperation in such systems.
The objective is to improve the outcome of the agents through coordinated actions. In this paper, we assume selfish agents that aim to maximize their own outcomes, in contrast to cooperative agents that jointly aim to maximize a global objective. 
Hedonic games \citep{DrGr80a} provide a general framework for modeling this setting.
In these games, agents' preferences are given by complete rankings over their potential coalitions, i.e., all subsets of agents that contain them.
The output is a coalition structure where each agent is assigned to a unique coalition.
Since each agent can be in an exponential number of different coalitions, this requires ranking an exponential number of coalitions for each agent. A well-studied subclass of hedonic games are \acp{ASHG} \citep{BoJa02a}, in which agents' preferences are efficiently encoded in cardinal valuations for other agents and the valuation of a coalition is given by a sum-based aggregation.
They have, for example, been used to model the formation of research teams \citep{AlRe04a}.
Most research in this area focuses on a single game for which a solution is searched.

In many real-life scenarios, however, the game changes in various ways over time, and the solution has to be adapted accordingly.
For example, consider a malfunction of an agent that causes other agents to update their valuations towards her once they notice it.
Or consider an agent that significantly over-performs, for example, because of updated software, which again leads to other agents updating their valuations towards her. 
As a third example, consider a group of agents collaborating on a project. Each interaction with other teammates will affect their valuations for her. 

The starting point of our setting is a stable partition that serves as the status quo in a given game. We then investigate how to respond to valuation updates, with the goal of finding a new stable partition.
One approach is to compute a new solution from scratch each time an update occurs.
This is reasonable if there are no costs or other negative effects arising from potentially completely rearranging the coalition structure.
Nevertheless, in many settings, changing the coalition structure has a negative impact. Consider, for example, a group of agents that works together on a task. Removing too many agents from that project will affect the group's current workflow, and it might take a long time to reach the same level of routine as before. In such settings, finding a close stable partition would be beneficial.
We assume a cost proportional to a natural distance measure between the previous and new solution, such that minimizing the distance is equivalent to minimizing the cost.
Specifically, we study the distance metric defined as the minimum number of agents that need to change their coalition to arrive from one coalition structure to another \citep{day1981complexity}.

We focus on stability, one of the most studied solution concepts for ASHGs \citep{BoJa02a,Olse09a,ABS11c,Woeg13a, GaSa19a,BBW21b,BBT23a}, and explore
four well-known stability notions based on deviations by single agents, namely Nash stability (NS), individual stability (IS), contractual Nash stability (CNS), and contractual individual stability (CIS).
In particular, given an initial instance, a stable coalition structure for that instance, and small valuation changes, we study the complexity of finding a nearby stable coalition structure according to our distance function for all four stability notions. 

\section{Our Contribution}
We introduce a model that allows us to study ASHGs under valuation updates.
Our focus is on the algorithmic question of how to find a nearby stable partition, given an ASHG, a stable starting position, and valuation changes.
For this, we study a general setting in which any agent in group $A$ can update their valuations towards anyone in group $B$. As a second case, we consider the symmetric setting in which the valuations between agents are symmetric, and where any valuation update of an agent $a$ towards an agent $b$ also applies to $b$'s valuation for $a$. As important special cases, we consider valuation updates in which all agents change their valuation towards one agent, updates in which one agent changes her valuations towards many agents, and the case that only one agent updates one valuation, or in the symmetric case that only two agents update their valuations for each other.

All of our hardness results hold both for symmetric and non-symmetric valuations.
Our positive results, on the other hand, only hold for the symmetric setting.

We prove that checking whether a close stable partition exists is NP-complete for NS, IS, CNS, and CIS, even if we heavily restrict the set of allowed valuation values, such as $\{-1, 0, 1\}$.
In particular, in \Cref{symmetric_Single_Hardness} we show that finding a stable partition with distance at most $k$ after a single agent updates her preferences towards one single other agent is NP-hard. The same holds in the symmetric setting when only one pair of agents changes their valuation towards each other.
Our reductions are from variants of the set cover problem and also show that finding an approximation is computationally hard even for the case when only the valuations $\{1, 0, -1\}$ are allowed.

Next, we study the four considered stability notions for strict \acp{ASHG}.
We show that both the non-symmetric and symmetric problems remain NP-complete for both NS and IS, even under severely restricted valuations.
For CIS, we show that there always exists a stable partition with distance at most $3$ to the starting partition. For CNS, we give an algorithm that always computes a partition with distance at most $4$ under the assumption that valuations can take on only one negative value. In both cases, we give a polynomial-time algorithm that computes such a partition.

The valuation restrictions that we study subsume restrictions proposed by \citet{DBHS06a}, which are frequently considered in the literature.
\Cref{tab:overview} shows an overview of our computational complexity results for the symmetric setting.

Finally, we also consider a sequence of valuation updates and show that in symmetric \acp{FEG}, one can compute a sequence of stable partitions such that the average number of changes is bounded by a constant for all stability notions we consider. Moreover, we extend this result to strict symmetric \acp{ASHG} for CNS and CIS. Similar results for NS and IS do not exist, which we prove by providing an instance where the average number of changes is linear in the number of agents. In order to evaluate the different settings, we define the \emph{average distance} $d_X(n)$ as the limit of average changes over the worst possible sequence of updates. Notably, $d_X$ only depends on the considered stability notion and subclass of games.

\begin{table}
\centering
\resizebox{1\columnwidth}{!}{
	\renewcommand{\arraystretch}{1.2}
\begin{tabular}{ m{.55cm} | m{1.25cm} m{1.15cm} m{1.25cm} m{1.25cm} m{1.25cm}}
& strict & FENGs & FEGs & AFGs & AEGs \\
X & $\mathbb{Q} \setminus \{0\}$ & $\{-1,0,1\}$ & $\{-1,1\}$ & $\{-1,n\}$ & $\{-n,1\}$ \\ [0.5ex] 
\hline
NS   &NP-c (Th.~\ref{NS_IS_hardness_AEGs}) & NP-c (Th.~\ref{symmetric_Single_Hardness}) & NP-c (Th.~\ref{thm:fegs_hardness}) & NP-c (Th.~\ref{thm:fegs_hardness}) & NP-c (Th.~\ref{NS_IS_hardness_AEGs}) \\
IS   & NP-c (Th.~\ref{IS_completeness_symmetric_FEGs}) & NP-c (Th.~\ref{symmetric_Single_Hardness}) & NP-c (Th.~\ref{IS_completeness_symmetric_FEGs}) & NP-c (Th.~\ref{IS_completeness_symmetric_FEGs}) & NP-c (Th.~\ref{NS_IS_hardness_AEGs}) \\
CNS  & NP-c (Th.~\ref{CNS_hardness_without_0_edges}) & NP-c (Th.~\ref{symmetric_Single_Hardness}) & P (Co.~\ref{cor:CNS_polytime_friends}) & P (Co.~\ref{cor:CNS_polytime_friends}) & P (Co.~\ref{cor:CNS_polytime_friends})\\
CIS  & P (Co.~\ref{Cor:Cis positive})& NP-c (Th.~\ref{symmetric_Single_Hardness}) & P (Co.~\ref{Cor:Cis positive}) & P (Co.~\ref{Cor:Cis positive}) & P (Co.~\ref{Cor:Cis positive})\\
\end{tabular}
}
	\caption{Computational complexity of \textsc{X-1-1-Sym-Altered}. The rows indicate the stability notions, and the columns correspond to the allowed valuation values. For \textsc{X-1-1-Altered}, we show throughout the paper that every entry is NP-complete.}
	\label{tab:overview}
\end{table}

\section{Related Work}
Hedonic games have been introduced by \citet{DrGr80a} and have since then been extensively studied, especially additively separable hedonic games \citep{BoJa02a}. Often, the valuations are also restricted to a limited amount of values. Common considered restrictions only allow valuations $\{-1,n\}$, $\{-n,1\}$ \citep{DBHS06a}, and $\{-1,1\}$, which distinguish only whether an agent is desired or undesired.

A key challenge in \acp{ASHG} is to find stable partitions, meaning that no agent has an incentive to deviate.
In the literature, many different notions of stability have been studied, most commonly different versions of core stability \citep{Kara11a, Woeg13a, Pete17b, PeEl15a, DBHS06a} and Nash (NS), individual (IS), contractual Nash (CNS), and contractual individual stability (CIS) \citep{GaSa19a, PeEl15a, BBT23a}.
In this paper, we focus on the latter ones. 
For most stability notions, a stable partition may not exist.
The existence decision problem is NP-complete for Nash stability \citep{Olse09a}, individual stability \citep{SuDi10a}, and contractual Nash stability \citep{BBT23a}.
\citet{Ball04a} shows that a CIS partition always exists using a potential function argument.
\citet{ABS11c} provide a polynomial-time algorithm to find CIS partitions.
\citet{PeEl15a} give an overview and provide a framework for hardness of different stability notions and classes of games. For symmetric ASHGs, there always exists a stable partition for all four stability notions \citep{BoJa02a}.

Next, we discuss literature on dynamic coalition formation models.
First, there are models that study the convergence of deviation sequences to stable states \citep{
BMM22a,BBK25a, BBT23a}.
\citet{BBK25a} show convergence under the assumption that agents increase their valuations for agents that join and decrease their valuations for agents that leave their coalitions. They also consider the question whether a stable partition can be reached from a given partition after $k$ deviations. They show NP-hardness for NS, IS, and CNS for non-symmetric non-strict ASHGs. This differs from our setting in the fact that we consider starting partitions that were stable in the original game, and we do not insist on the stable partition to be reachable via deviations.

\citet{BBT23a} prove that the individual stability dynamics converge with at most one non-negative valuation, and that contractual Nash stability dynamics converge with at most one non-positive valuation. The main difference of our model to those is that the underlying ASHG changes.
Second, there are online coalition formation models that allow agents to arrive online and require an algorithm to immediately and irrevocably assign them to a coalition upon arrival \citep{FMM+21a,BuRo23a,BuRo25a}.
\citet{BuRo25a} show that finding a stable partition online is not possible in most cases, even if they are guaranteed to exist.
Finally, \citet{CoAg24a} study a model in which agents have incomplete information, i.e., agents are unsure of their preferences and must learn them.
In comparison to those papers, our model allows for valuation changes across all agents and for any agent to be reassigned to any coalition.

Another branch of research related to this is based on robustness.
Given a stability notion $X$, \citet{DBLP:conf/ijcai/IgarashiOSY19} study the complexity of deciding whether a game admits an $X$-stable coalition such that it remains stable in any subgame obtained by removing $k$ agents. They study this problem in symmetric games with valuations in $\{-1, n\}$, which we also consider. Interestingly, many of their positive results hold for stability notions in which the problems we consider are NP-complete, and their hardness results hold when we can give polynomial-time algorithms for the problems we consider. 
The key difference between their work and ours is that they aim to find stable partitions that remain stable after $k$ agents disappear, whereas we seek to find new stable partitions that are sufficiently close to the original partition after agents change their preferences.

\citet{DBLP:conf/aaai/BarrotOSY19} study how unknown agents might affect different stability concepts, and \citet{DBLP:conf/prima/OkimotoSDIM18} study robustness with respect to the social welfare of each agent.

\section{Model}
\label{sec:Model}

\subsection{Hedonic Games}
Let $N$ be a finite set of \emph{agents}.
A nonempty subset $C\subseteq N$ is called a \emph{coalition}.
The set of coalitions containing agent~$i\in N$ is denoted by
$\mathcal N_i:=\{C\subseteq N\mid i\in C\}$.
A set $\pi$ of disjoint coalitions containing all members of $N$ is a \emph{partition} of $N$.
For agent $i\in N$ and partition $\pi$, let $\pi(i)$ denote the unique coalition in~$\pi$ that~$i$ belongs to.

A (cardinal) \emph{hedonic game} is a pair $G = (N,\uf)$ where $N$ is the set of agents and $\uf = (\uf_i)_{i\in N}$
is a tuple of \emph{utility functions} $u_i\colon \mathcal N_i \to \mathbb Q$.
Agents seek to maximize utility and prefer partitions in which their coalition achieves a higher utility.
Hence, we define the utility of a partition~$\pi$ for agent~$i$ as $\uf_i(\pi) := \uf_i(\pi(i))$. The \emph{social welfare} of $G = (N,v)$ given a partition $\pi$ is defined as $\text{SW}(G, \pi) = \sum_{i \in N} u_i(\pi)$.
We denote by $n(G):= |N|$ the number of agents and write $n$ if $G$ is clear from the context.
Following \citet{BoJa02a}, an \emph{\ac{ASHG}} is a hedonic game $(N,\uf)$, where for each agent $i\in N$ there exists a
\emph{valuation function} $\vf_i\colon N\setminus\{i\}\to \mathbb Q$ such that for all $C\in \mathcal N_i$ it holds that
$\uf_i(C) = \sum_{j\in C\setminus \{i\}}\vf_i(j)$.
Note that this implies that the utility for a singleton coalition is~$0$.
Since the valuation functions contain all information for computing utilities, we can also represent \iac{ASHG} as the pair
$(N,\vf)$, where $\vf = (\vf_i)_{i\in N}$ is the tuple of valuation functions.
Additionally, \iac{ASHG} can be succinctly represented as a complete, directed, weighted graph, where the weights of the directed edges induce the valuation functions.

\Iac{ASHG} $(N,\vf)$ is said to be \emph{symmetric} if for every pair of distinct agents $i,j\in N$, it holds that $\vf_i(j) = \vf_j(i)$.
We write $\vf(i,j)$ for the symmetric valuation between~$i$ and~$j$.
A complete undirected weighted graph can represent a symmetric \ac{ASHG}\@.
For simplicity, we also denote this graph by $(N,\vf)$.
\Iac{ASHG} $(N,\vf)$ is said to be \emph{strict} if for every pair of distinct agents $i,j\in N$, it holds that $\vf_i(j) \neq 0$.
Furthermore, a subclass of \acp{ASHG} is (valuation) \emph{restricted} if it contains only valuations from a set of allowed values.
We consider the following classes of restricted \acp{ASHG}, friend-and-enemy games (FEGs), friend-enemy-and-neutral games (FENGs), appreciation-of-friends games (AFGs), and aversion-to-enemies games (AEGs).
They restrict the allowed valuations to $\{-1,1\}$,$\{-1,0,1\}$,$\{-1,n\}$, and $\{-n,1\}$ respectively.

\subsection{Stability}
Next, we define different types of single-agent deviations and their corresponding stability notions.
Assume, we are given a hedonic game $(N,\uf)$, an agent $i\in N$, and a partition $\pi$ of $N$.
A \emph{deviation} by agent $i$ is a beneficial move of agent $i$ from one coalition to another.
Formally, from partition $\pi$ a different partition $\pi'$ can be reached via a deviation of agent $i$ if for all $j \in N \setminus \{i\}$ it holds that $\pi(j) \setminus \{i\} = \pi'(j) \setminus \{i\}$.
A deviation by $i$ is called
\begin{itemize}
    \item a \emph{Nash deviation} (NS) if $\uf_i(\pi')>\uf_i(\pi)$,
    \item an \emph{individual deviation} (IS) if it is a Nash deviation, and for all $j\in\pi'(i)$ it holds that $\uf_j(\pi')\ge\uf_j(\pi)$,
    \item a \emph{contractual Nash deviation} (CNS) if it is a Nash deviation, and for all $j\in\pi(i)$ it holds that $\uf_j(\pi')\ge\uf_j(\pi)$,
    \item a \emph{contractual individual deviation} (CIS) if it is an individual deviation and a contractual Nash deviation.
\end{itemize}
For each type of deviation, we define the respective stability concept as the absence of corresponding deviations. For example, a partition $\pi$ is said to be Nash stable (NS) if no agent can perform a Nash deviation.

Note that Nash stability implies the other three notions, while both contractual Nash stability and individual stability imply contractual individual stability \citep[see][for a more extensive overview]{AzSa15a}.
For every \ac{ASHG}, there exists a CIS partition \citep{ABS11d}.
This is not the case for the other stability notions, meaning that for each of them, there exists \iac{ASHG} without any such stable partition \citep{BoJa02a,SuDi07b}. However, using a potential function argument, \citet{BoJa02a} show that every symmetric \ac{ASHG} admits a Nash stable partition.

\subsection{Problem Statement} 
We say that two partitions $\pi, \pi'$ have distance $1$ if we can reach $\pi'$ from $\pi$ via a change of coalition for some agent $i$.
Further, we define $d(\pi,\pi')$ as the length of a minimal sequence $\pi_0, \pi_1, \dots, \pi_s$, such that for all $t \in [s]$ it holds that $d(\pi_{t-1},\pi_t) = 1$ and $\pi=\pi_0$ and $\pi'=\pi_s$.
This distance can be computed in polynomial time \citep[Theorem 9 in][]{day1981complexity}.

In this paper, we are interested in the following problem, where $X \in \{\text{NS, IS, CNS, CIS}\}$. 

\begin{problem}
	\problemtitle{\textsc{X-A-B-Altered}}
	\probleminput{\Iac{ASHG} $G=(N,\vf)$, an X partition $\pi$ of $G$, two sets of agents $D,E\subseteq N$ with $\lvert D\rvert = A$ and $\lvert E \rvert = B$,
	an updated valuation function $\vf'$ such that $\vf'_i(j)\neq \vf_i(j)$ implies that $i\in D$ and $j \in E$, and a distance parameter $k\in\mathbb{N}$.}
	\problemsolution{An X partition $\pi'$ of $G'=(N,\vf')$,
	where $d(\pi, \pi') \leq k$ if it exists.}
\end{problem}

In this paper, we often consider the restricted setting in which $A$ or $B$ equals $1$, meaning that either one agent changes her valuation for all other agents or all agents change their valuations for one agent. Additionally, we consider the case of symmetric preferences and define the problem as follows:

\begin{problem}
	\problemtitle{\textsc{X-A-B-Sym-Altered}}
	\probleminput{A symmetric ASHG $G=(N,\vf)$, an X partition $\pi$ of $G$, two sets of agents $D,E\subseteq N$ with $\lvert D\rvert = A$ and $\lvert E \rvert = B$,
	an updated symmetric valuation function $\vf'$ such that 
    $\vf'(i,j) \neq \vf(i,j)$ implies that  $i\in D$, $j \in E$ or $i\in E$, $j \in D$, and a distance parameter $k\in\mathbb{N}$.
    }
	\problemsolution{An X partition $\pi'$ of $G'=(N,\vf')$,
	where $d(\pi, \pi') \leq k$ if it exists.}
\end{problem}

As a consequence of the definition, \textsc{X-A-B-Sym-Altered} and \textsc{X-B-A-Sym-Altered} are equivalent.
Observe that \textsc{X-A-B-Sym-Altered} is not a special case of \textsc{X-A-B-Altered}. For the simplest case, note that in \textsc{X-1-1-Altered} exactly one agent changes her valuation towards one other agent, whereas in \textsc{X-1-1-Sym-Altered} two agents change their valuations towards each other.
Also, note that if $A \leq A'$ and $B \leq B'$, an instance of \textsc{X-A-B-(Sym)-Altered} is also a valid instance of \textsc{X-A'-B'-(Sym)-Altered}. Thus, showing NP-hardness for the former implies NP-hardness for the latter. 
Finally, we always assume that updates do not affect the subclass of \acp{ASHG}. If, for example, we consider FEGs, the updated valuations must also be in $\{-1, 1\}$.

\section{Results}
\label{sec:Results}
In this section, we present our results.
They are divided into three parts. In the first part, we show hardness results for all four considered stability notions for general ASHGs. In the second part, we restrict our attention to strict ASHGs.
Finally, in the last part, we conduct an average-case analysis on the distance for a sequence of valuation updates.

To show NP-completeness, we give reductions from the set cover problem and restricted variants of it, which are known to be NP-complete \citep{Karp72a, GaJo79a}.
Before we give the reductions, we state two lemmas that guarantee that all problems we consider are in NP and that, if $k$ is fixed, there are only a polynomial number of partitions with distance at most $k$ from the original partition. The proof of \Cref{lemmaEverythingINNP} is based on a result of \citet{day1981complexity}. Both proofs can be found in the appendix, along with all other omitted proofs.

\begin{restatable}{lemma}{everythingInNOP}
\label{lemmaEverythingINNP}
    The distance $d(\pi, \pi')$ between two partitions $\pi$ and $\pi'$ can be computed in polynomial time. Moreover, checking whether $\pi$ is $X$ for $X \in \{\text{NS, IS, CNS, CIS}\}$ can be done in polynomial time.
\end{restatable}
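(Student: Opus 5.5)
The lemma has two parts, and I would handle them separately.

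For the distance, I would reduce to a maximum-weight bipartite matching problem, in line with \citet{day1981complexity}. Given $\pi$ and $\pi'$, build a complete bipartite graph. Its left vertices are the coalitions $C\in\pi$ and its right vertices are the coalitions $C'\in\pi'$. Give the edge $(C,C')$ weight $|C\cap C'|$.

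The claim is that $d(\pi,\pi') = n - w^*$, where $w^*$ is the weight of a maximum-weight matching. The two directions go as follows.

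\emph{Upper bound.} Fix a matching $M$. For every agent $i$, if the pair $(\pi(i),\pi'(i))$ is in $M$, then $i$ stays put. Every other agent is moved once, directly into the coalition that is eventually labelled $\pi'(i)$. Unmatched coalitions of $\pi'$ are started from the first agent moved into them; because a move may create a singleton, this is allowed. The result is $\pi'$ after exactly $n-w(M)$ moves.

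\emph{Lower bound.} Take any sequence of $s$ moves from $\pi$ to $\pi'$ and track coalition ``identities'' along it. A move takes an agent out of its current identity and places it in an existing identity or a new one. Agents that never move remain in a single identity from start to finish. Each final identity corresponds to at most one initial coalition, so the unmoved agents give a matching of weight at least $n-s$.

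The matching can then be computed in polynomial time with the Hungarian algorithm. The step that needs the most care is the identity-tracking in the lower bound, in particular the case where an identity empties and a later one reuses it. I would avoid this by tracking identities persistently, so that no identity is ever reused.

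For stability checking, iterate over every agent $i$ and every target coalition. The targets are the coalitions of the current partition other than $\pi(i)$, together with the empty set, which corresponds to $i$ forming a singleton. That gives $O(n^2)$ candidate deviations. For each one, compute $i$'s new utility in $O(n)$ time as a sum of valuations. Then check the side conditions:
\begin{itemize}
\item for IS, no member of the target coalition has a negative valuation for $i$;
\item for CNS, no member of $\pi(i)$ has a positive valuation for $i$;
\item for CIS, both of these conditions.
\end{itemize}
Each check costs $O(n)$, so the whole procedure runs in $O(n^3)$, which is polynomial.

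Together the two parts put every \textsc{X-A-B-(Sym)-Altered} problem in NP: the certificate is $\pi'$, and we verify it by checking the distance bound and the stability condition.
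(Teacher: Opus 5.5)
Your proposal is correct and follows the paper's proof. The paper cites \citet{day1981complexity} for computing the distance and checks stability by iterating over agents and the other coalitions. Your matching characterization $d(\pi,\pi') = n - w^*$ is a self-contained version of the assignment-problem argument behind that citation. One small point: in your upper bound, for a non-maximum matching a scheduled move can be a no-op when a singleton agent ``starts'' a coalition, so ``exactly $n-w(M)$ moves'' should read ``at most $n-w(M)$ moves''. That weaker bound is all you need.
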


\begin{restatable}   
{lemma}{lemmaDistanceK}
    For any coalition structure and any partition $\pi$, there exist at most $n^{2k}$  partitions with distance $k$ from $\pi$. 
    \label{Lemma:Distancek}
\end{restatable}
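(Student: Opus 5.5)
The plan is to bound the number of partitions reachable by sequences of exactly $k$ single-agent moves, since every partition at distance $k$ from $\pi$ is reached by such a sequence.

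By the definition of $d$, if $d(\pi,\pi')=k$ there is a sequence $\pi=\pi_0,\pi_1,\dots,\pi_k=\pi'$ with $d(\pi_{t-1},\pi_t)=1$ for all $t\in[k]$. So it suffices to count these sequences. Equivalently, it suffices to bound, for an arbitrary partition $\sigma$ of $N$, the number of partitions $\sigma'$ with $d(\sigma,\sigma')=1$ by $n^2$. Iterating this bound $k$ times then gives at most $(n^2)^k=n^{2k}$ endpoints $\pi_k$. The resulting count includes every partition at distance exactly $k$, and in fact every partition at distance at most $k$ if we allow trivial steps.

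To bound the one-step neighbourhood, note that a partition at distance $1$ from $\sigma$ is determined by two choices:
\begin{itemize}
\item the moving agent $i\in N$, which gives $n$ choices;
\item the target of $i$, which is either an existing coalition of $\sigma$ other than $\sigma(i)\setminus\{i\}$ or a new singleton coalition.
\end{itemize}
Since $\sigma$ has at most $n$ coalitions, the target can be specified by naming an agent $j\neq i$ whose coalition $i$ joins, or by choosing ``alone''. This gives at most $(n-1)+1=n$ targets. By the definition of a single move, all other agents keep $\sigma(j)\setminus\{i\}$, so the pair (agent, target) determines $\sigma'$ uniquely. Hence there are at most $n\cdot n=n^2$ neighbours.

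The only subtle point is this counting of targets. Several choices of $j$ may name the same coalition, but that only causes overcounting. The singleton option must be included in the $n$ choices, and it is, because $j$ ranges over the $n-1$ agents other than $i$. With the neighbourhood bound in hand, a straightforward induction on $k$ finishes the proof: the set of partitions reachable in exactly $t$ moves has size at most $n^{2t}$. The base case $t=0$ is the single partition $\pi$.
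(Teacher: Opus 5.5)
Your proposal is correct and follows essentially the same route as the paper: bound the one-step neighbourhood of any partition by $n^2$ ($n$ choices of moving agent times at most $n$ targets, the new singleton included) and induct on $k$ along a sequence of single moves. The only difference is cosmetic. You induct from $t=0$ over all endpoints of $k$-move sequences, whereas the paper starts at $k=1$ and applies the hypothesis to the $k$-th partition on a shortest path. Your version also covers all partitions at distance at most $k$ when trivial moves are allowed.
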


The latter lemma directly implies that if $k$ is a constant, we can enumerate all possible coalition structures with distance at most $k$ in polynomial time and can therefore efficiently find a closest stable partition if it exists. 

Next, we show that for all stability notions we consider, finding a stable partition after updating only one valuation is already NP-hard if we allow 0-valuations and at least one positive and one negative valuation value.

\begin{restatable}
    {theorem}{thmeverythingISNPhard}\label{symmetric_Single_Hardness}
    Let $\alpha, \beta: \mathbb{N} \rightarrow \mathbb{Q}_{>0}$ be two polynomial-time computable functions. Then,
    \begin{itemize}
        \item \emph{X-1-1-Altered} and \emph{X-1-1-Sym-Altered} are NP-complete for $X \in \{\text{IS}, \text{CIS}\}$ in \acp{ASHG} with valuations in $\{-\beta(n), 0, \alpha(n)\}$.
        \item \emph{X-1-1-Altered} and \emph{X-1-1-Sym-Altered} are NP-complete for $X \in \{\text{NS}, \text{CNS}\}$ in \acp{ASHG} with valuations in $\{-\beta(n), 0, \alpha(n)\}$ if it holds that $\alpha(n) \leq \beta(n)$ for all $n \in \mathbb{N}$.
    \end{itemize}
\end{restatable}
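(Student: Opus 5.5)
Membership in NP follows from \Cref{lemmaEverythingINNP}: a candidate partition $\pi'$ serves as a certificate, and we can check $d(\pi,\pi')\le k$ and $X$-stability of $\pi'$ in $G'$ in polynomial time. For hardness we plan a single reduction from \textsc{Set Cover}. We are given a universe $U=\{u_1,\dots,u_m\}$, a family $\mathcal S=\{S_1,\dots,S_\ell\}$ and a bound $k'$. The game has one \emph{element agent} $e_j$ per element and one \emph{set agent} $s_i$ per set, together with a special pair $a,b$ whose mutual valuation is updated. We make the construction symmetric from the start, so that it serves both \textsc{X-1-1-Sym-Altered} and \textsc{X-1-1-Altered}. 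In the non-symmetric variant only $a$'s valuation for $b$ changes, and we argue that $b$'s side is irrelevant for the triggered deviation.

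The initial partition $\pi$ puts all element agents together with $a$ in one coalition $C_0$, and each set agent $s_i$ sits in its own small coalition, possibly padded with neutral helpers. The valuations are as follows:
\begin{itemize}
\item $v(e_j,s_i)=\alpha(n)$ if $u_j\in S_i$, and $0$ otherwise;
\item valuations among element agents, and between element agents and $a$, are $0$;
\item $v(a,b)$ changes from $0$ to $-\beta(n)$.
\end{itemize}
After the update, $a$ must leave $C_0$, or $C_0$ must be dissolved. The distances are tuned so that the cheap repair is to move $a$ together with a few set agents: $a$ carries neutral ties and becomes unstable where it is, and it is satisfied only if it joins set agents, with each chosen set agent ``absorbing'' the elements it covers. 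We then prove the equivalence: a cover of size $k'$ yields a stable partition at distance $k=k'+c$ for a constant $c$, and conversely any stable partition within distance $k$ moves at most $k'$ set agents, whose sets must cover $U$. The covering condition comes from requiring that no element agent $e_j$ has a profitable deviation. Element $e_j$ values a coalition containing an adjacent set agent at $\alpha(n)$, so every element must end up beside, or be blocked from, a covering set agent.

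The main obstacle is making the reduction robust simultaneously for all four notions. For IS and CIS the deviation of an element toward a set agent can be vetoed by members of the target coalition, and for CIS also by members of the origin coalition. We therefore plan to use the $-\beta(n)$ values for vetoing: auxiliary ``guard'' agents sit in the set coalitions and dislike elements, so that the only stable configurations encode covers. This is why the first bullet of the theorem needs no relation between $\alpha$ and $\beta$. For NS and CNS, vetoes by receivers are unavailable, so stability must instead come from an agent's own utility comparison. There we will need that a single $-\beta(n)$ enemy outweighs a single $\alpha(n)$ friend, which is exactly where $\alpha(n)\le\beta(n)$ enters. We expect to present two gadget variants, one per bullet, sharing the set-cover skeleton. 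The delicate part is verifying that no unintended short repair exists, for instance dissolving $C_0$ or isolating $a$. We handle this by padding $C_0$ with many copies, so that every alternative repair costs more than $k$. By \Cref{Lemma:Distancek}, $k$ must grow with the input for the problem to be hard, so we choose $k=\Theta(k')$.
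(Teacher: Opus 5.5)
Your high-level plan matches the paper: a Set Cover reduction with element and set agents, target distance $k'+c$, receiver vetoes for IS/CIS versus a pure utility comparison for NS/CNS (which is exactly where $\alpha(n)\le\beta(n)$ is needed). However, the gadget you actually specify has no mechanism linking the update to a cover, so the reduction does not go through.

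The update changes only $v(a,b)$. Elements and $a$ are mutually indifferent, elements are indifferent to one another, and your guards already keep every element out of every set coalition in $\pi$. So relocating $a$ changes neither any element's utility for any coalition nor any veto affecting an element. The condition that each element be ``beside or blocked from a covering set agent'' already holds in $\pi$. It keeps holding under any repair that only relocates $a$, for instance moving $a$ into a singleton at distance $1$. Element deviations therefore never force set agents to move, and the optimal repair distance does not depend on the size of a cover. Padding $C_0$ with copies only makes dissolving $C_0$ expensive. It does nothing against isolating $a$, which you correctly flag as the dangerous short repair but never rule out.

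The missing idea is that the agent forced out by the update must be a \emph{blocker} whose departure unleashes element deviations. In the paper, elements and sets start as singletons, and $\pi$ contains the coalition $Y\cup\{z_1,z_2\}$:
\begin{itemize}
\item every element likes $z_1$, i.e., $v(z_1,e)=\alpha$;
\item a large set $Y$ of agents that like only $z_1$ anchors this coalition;
\item $z_2$ dislikes all element and set agents and keeps the elements out, by a veto for IS/CIS and because $\alpha-\beta\le 0$ for NS/CNS.
\end{itemize}
Changing $v(z_1,z_2)$ from $\alpha$ to $-\beta$ forces $z_2$ into a singleton. Now every element wants to join $z_1$'s coalition, and the only cheap way to stop this is to move set agents there. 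The set agents themselves act as guards via $v(e,S)=-\beta$ for $e\in S$. Compared with your plan, the roles are reversed: sets repel rather than attract their elements, and the contested coalition is a single anchored coalition rather than the set coalitions. A cover of size $k'$ then yields distance $k'+1$, and conversely the moved element and set agents induce a cover. For \textsc{X-1-1-Altered}, one sets $v_{z_1}(z_2)=0$ and changes only $v_{z_2}(z_1)$ from $\alpha$ to $-\beta$.
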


\begin{proof}[Proof sketch]
    We give a proof sketch for the case $\alpha(n) \leq \beta(n)$ and $X \in \{\text{NS, IS, CNS, CIS}\}$ for \textsc{X-1-1-Sym-Altered}. The other settings are covered by slight modifications of this argument, which are detailed in the appendix. The proof is based on a reduction from the set cover problem.
    
    Let $(E, \mathcal{S}, k)$ be a set cover instance. Without loss of generality, we may assume that $k < \lvert E \rvert, \lvert \mathcal{S} \rvert$ and that $\mathcal{S}$ is a set cover of $E$, as otherwise the problem is trivial.
    Let $G$ be a symmetric \ac{ASHG} with agent set $N = E \cup \mathcal{S} \cup Y\cup \{z_1, z_2\}$, where $\lvert Y \rvert = \lvert E \rvert + \lvert \mathcal{S} \rvert + 3$. The valuations of $G$ are given by 
    \begin{itemize}
        \item $v(e, S) = -\beta$ for all $S \in \mathcal{S}$ and $e \in S$,
        \item $v(z_1, e) = \alpha$ for all $e \in E$,
        \item $v(z_2, e) = v(z_2, S) = -\beta$ for all $S \in \mathcal{S}$ and $e \in E$,
        \item $v(z_1, y) = v(z_1, z_2) = \alpha$ for all $y \in Y$,
        \item $0$ for all other valuations,
    \end{itemize}
    where we set $\alpha := \alpha(\lvert N \rvert)$ and $\beta := \beta(\lvert N \rvert)$.

    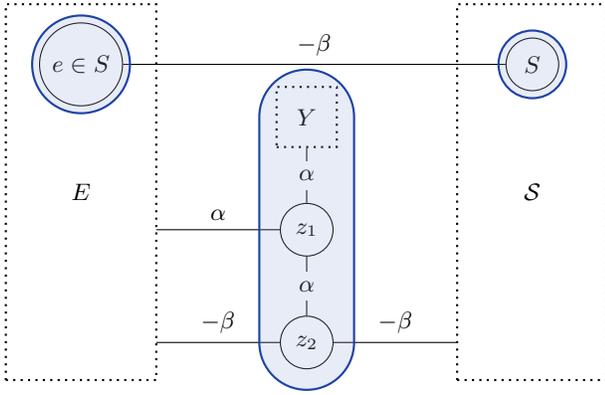
\begin{figure}[ht]
    \centering
\begin{tikzpicture}[>=stealth,
        every node/.style={font=\small},
        main/.style={circle,draw,minimum size=7mm,font=\small},
        rect/.style={draw, thick, dotted, minimum width=2cm, minimum height=5cm, align=center}]
        
    \node[rect] (U)  at (-3.75,0) {$E$};
    \node[rect] (S*)  at ( 2.25,0) {$\mathcal{S}$};
    
    \node[rect, minimum width = 0.8cm, minimum height = 0.8cm, ] (y)  at (-0.75, 1)   {$Y$};
    \node[main] (z)  at (-0.75, -0.5)   {$z_1$};
    \node[main] (zp) at (-0.75,-2)   {$z_2$};

    \node[main] (u) at (-3.75, 1.7) {$e \in S$};

    \node[main] (S) at (2.25, 1.7) {$S$};

    \begin{scope}[every node/.style={font=\small}]
                \path
                    (y) edge[-] node[pos = 0.5, fill = white] {$\alpha$} (z)
                    (z)  edge[-] node[pos = 0.5, fill = white] {$\alpha$} (zp)
                    (u) edge[-] node[above] {$-\beta$} (S);
            \end{scope}     
            
    \draw[-] (z) -- node[above] {$\alpha$} (-2.75, -0.5);
    \draw[-] (zp) -- node[above] {$-\beta$} (1.25, -2);
    \draw[-] (zp) -- node[above]  {$-\beta$} (-2.75, -2);

    \draw[thick,CoalitionColor, fill=CoalitionColor!50, fill opacity=0.2]  \convexpath{y,zp}{.63cm};
    \node (u2) at (-3.75, 1.70001) {};
    \node (S2) at (2.25, 1.70001) {};
    \draw[thick,CoalitionColor, fill=CoalitionColor!50, fill opacity=0.2]  \convexpath{u,u2}{.65cm};
    \draw[thick,CoalitionColor, fill=CoalitionColor!50, fill opacity=0.2]  \convexpath{S,S2}{.45cm};
\end{tikzpicture}
\caption{Illustration of the symmetric ASHG $G$ from the reduction in \Cref{symmetric_Single_Hardness}. Dotted rectangles are independent sets. All edges that are not shown correspond to a valuation of $0$. Coalitions in $\pi$ are highlighted in blue.}
\end{figure}
    It holds that $\pi = \left\{ \{e\}_{e \in E}, \{S\}_{S \in \mathcal{S}}, Y \cup \{z_1, z_2\} \right\}$ is an $X$ partition in $G$ since no agent can make a Nash deviation.

    Alter the valuation between agents $z_1$ and $z_2$ such that $v'(z_1, z_2) = -\beta$ and denote the altered game by $G'$.
    We claim that a set cover $\mathcal{C} \subseteq \mathcal{S}$ of $E$ with $\lvert \mathcal{C} \rvert \leq k$ exists if and only if $G'$ has an $X$ partition $\pi'$ with $d(\pi, \pi') \leq k + 1$.
    
    To this end, suppose that $\mathcal{C} \subseteq \mathcal{S}$ is a set cover of $E$ of size at most $k$. We claim that $\pi' = \left\{ \{e\}_{e \in E}, \{S\}_{S \in \mathcal{S \setminus C}}, \mathcal{C} \cup Y \cup \{z_1\}, \{z_2\} \right\}$ is an $X$ partition in $G'$. To see this, note that $u_a(\pi'(A)) \geq 0$ for all $a \in N$ and the only deviations by an agent to another coalition in which she has a positive valuation for at least one other agent are deviations by $e \in E$ to $\pi'(z_1)$.
    However, no $e \in E$ can Nash deviate to $\pi'(z_1)$ since there is some $S \in \mathcal{C} \subseteq \pi'(z_1)$ with $e \in S$ and $v(e, S) = -\beta$, so $u_e(\pi'(z_1)) \leq \alpha - \beta \leq 0$. Clearly, it holds that $d(\pi, \pi') \leq k+1$.

    To prove the reverse direction, suppose that $\pi'$ is an $X$ partition in $G'$ with $d(\pi, \pi') \leq k + 1$. Note that since $v(y, z_1) = \alpha$ and $v(y, a) = 0$ for all $a \in N \setminus \{z_1\}$, we have that $y \in \pi'(z_1)$ for all $y \in Y$. Moreover, $z_2$ is in a singleton coalition in $\pi'$ since $v'(z_2, a) < 0$ for all $a \in N \setminus (Y \cup \{z_2\})$.
    Let $R_E$ and $R_{\mathcal{S}}$ denote the agents in $E$ and $\mathcal{S}$ that were moved to another coalition from $\pi$. Let $f: E \rightarrow \mathcal{S}$ map each $e \in E$ to some $S \in \mathcal{S}$ with $e \in S$. We claim that $\mathcal{C} := f(R_E) \cup R_{\mathcal{S}}$ is a set cover of $E$ with $\lvert \mathcal{C} \rvert \leq k$. By assumption, it holds that $\lvert R_E \rvert + \lvert R_{\mathcal{S}} \rvert \leq k$ since $z_2$ has moved to another coalition as well. To see that $\mathcal{C}$ is a set cover of $E$, assume for contradiction that there is some $e \in E$ that is not covered by $\mathcal{S}$. Hence, by construction of $\mathcal{C}$, there is no $S \in \pi'(z_1)$ with $v(e, S) < 0$. Therefore, $e$ can $X$ deviate to $\pi'(z_1)$ since there is no $a \in \pi'(e)$ with $v(e, a) > 0$ and no $b \in \pi'(z_1)$ with $v(e, b) < 0$. This is a contradiction to $\pi'$ being $X$ in $G'$, so $\mathcal{C}$ is in fact a set cover of $E$. 
\end{proof}

Due to the one-to-one correspondence of the problem with the set cover problem, we can even show the following statement \citep{hardnessSetCover}.

\begin{restatable}{corollary}{approxIsHard}
Approximating \textsc{X-A-B-Sym-Altered} better than by a logarithmic factor is NP-hard. 
\end{restatable}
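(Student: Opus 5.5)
The plan is to turn the reduction from the proof of \Cref{symmetric_Single_Hardness} into an approximation-preserving reduction from set cover, and then invoke the inapproximability of set cover \citep{hardnessSetCover}. Concretely, no polynomial-time algorithm approximates set cover within a factor $c\ln|E|$ for some constant $c>0$ unless P$=$NP. The approximation measure for \textsc{X-A-B-Sym-Altered} is the minimum distance $d(\pi,\pi')$ over all $X$ partitions $\pi'$ of $G'$. Since $A,B\ge 1$, an instance of \textsc{X-1-1-Sym-Altered} is also a valid instance of \textsc{X-A-B-Sym-Altered}, so it suffices to work with the $1$-$1$ case.

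\textbf{Step 1: exact correspondence of optima.} I would refine both directions of the proof of \Cref{symmetric_Single_Hardness} into statements about optimal values. From any set cover $\mathcal{C}$, that proof constructs an $X$ partition $\pi'$ with $d(\pi,\pi')\le|\mathcal{C}|+1$. Conversely, from any $X$ partition $\pi'$ of $G'$ it extracts a set cover $\mathcal{C}=f(R_E)\cup R_{\mathcal S}$ with $|\mathcal{C}|\le d(\pi,\pi')-1$. This second construction is explicit and computable in polynomial time, because $R_E$ and $R_{\mathcal S}$ can be read off from $\pi$ and $\pi'$. The one point to recheck is that the converse argument never uses the bound $k$. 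It only uses that $z_2$ must move, that $Y$ stays with $z_1$, and that every uncovered $e$ has a deviation; none of these depends on $k$. Hence $\mathrm{OPT}_{\mathrm{dist}}=\mathrm{OPT}_{\mathrm{SC}}+1$.

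\textbf{Step 2: transferring the approximation ratio.} Suppose a $\rho$-approximation for the distance problem exists. It returns a partition at distance at most $\rho(\mathrm{OPT}_{\mathrm{SC}}+1)$. Converting this partition gives a cover of size at most $\rho(\mathrm{OPT}_{\mathrm{SC}}+1)-1\le 2\rho\,\mathrm{OPT}_{\mathrm{SC}}$. The number of agents is $n=O(|E|+|\mathcal S|)$, and the hard set cover instances have $|\mathcal S|$ polynomial in $|E|$. Therefore $\log n=\Theta(\log|E|)$. Choosing $\rho=c'\log n$ with a small enough constant $c'$ then yields a set cover approximation within $c\ln|E|$, which is a contradiction.

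\textbf{Main obstacle.} The hardest step is the additive $+1$ offset, which matters only when $\mathrm{OPT}_{\mathrm{SC}}$ is tiny. Any instance with $\mathrm{OPT}_{\mathrm{SC}}\ge1$ gives the factor $2$ bound used above, and instances with $\mathrm{OPT}_{\mathrm{SC}}=0$ are trivial, so this should be manageable. A related subtlety is that the constants in the logarithm must be tracked through the polynomial blow-up from $|E|$ to $n$. This is why the statement is phrased as ``better than a logarithmic factor'' rather than with an exact constant.
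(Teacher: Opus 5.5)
Your proposal is correct and follows the same route as the paper. You use the reduction from \Cref{symmetric_Single_Hardness} to show that the optimal distance equals the optimal set cover size plus one, absorb the additive $+1$ into a factor of $2$ on nontrivial instances, and invoke the logarithmic inapproximability of set cover. Your version is more careful than the paper's short argument, since you explicitly check that the converse direction does not depend on $k$, that a cover can be extracted from the returned partition in polynomial time (with $R_E$ and $R_{\mathcal S}$ fixed by an optimal coalition matching realizing $d(\pi,\pi')$), and that $\log n = \Theta(\log\lvert E\rvert)$.
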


These results paint a rather negative picture. Whenever agents can be indifferent to other agents, allowing one positive and one, in absolute size, larger negative value is sufficient for hardness. In the reduction, the 0-valuations are crucial.
Allowing agents to be completely indifferent to whether an agent is in her coalition or not circumvents the restrictions that different stability notions impose. Therefore, in the following, we consider strict \acp{ASHG}.

\section{Strict ASHGs} 
\label{sec:StrictASHGs}
The results for IS and NS differ from the results for CNS and CIS. This is because contractual deviations align better with stability and allow for efficient algorithms in some settings. The intuition is that for the contractual notion, for an agent not to have any valid deviation, it is sufficient to have one agent in the same coalition that approves her, whereas for individual stability, we might require an agent in every other coalition that blocks her from deviating.

\subsection{CNS and CIS}
In this section, we consider CNS and CIS for strict \acp{ASHG}. Our negative results hold in both the symmetric and non-symmetric settings, whereas our positive results require symmetry.

We first prove that even in these restricted settings, \textsc{X-1-N-Sym-Altered} and \textsc{X-1-N-Altered} remain NP-hard for both CNS and CIS.

\begin{restatable}
    {theorem}{theoremOnlyTwoWeights}\label{theorem:altered_complete}
	\textsc{X-1-N-Altered} and \textsc{X-1-N-Sym-Altered} are NP-complete for $X \in \{\text{CNS, CIS}\}$ in strict \acp{ASHG} with valuations in $\{-\beta(n), \alpha(n)\}$, where $\alpha, \beta : \mathbb{N} \rightarrow \mathbb{Q}_{> 0}$ are polynomial-time computable functions.
    \label{CISHardness}
    \end{restatable}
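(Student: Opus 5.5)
We will show membership in NP via \Cref{lemmaEverythingINNP}: a certificate is the partition $\pi'$, and both $d(\pi,\pi')\le k$ and the $X$ property are checkable in polynomial time. For hardness we reduce from \textsc{Set Cover} $(E,\mathcal S,k)$, as in \Cref{symmetric_Single_Hardness}. The new difficulty is that the $0$-valuations, which made that reduction work, are no longer allowed. Every pair of agents now either likes ($\alpha$) or dislikes ($-\beta$) each other. We intend to exploit the contractual condition directly. In CNS/CIS an agent $a$ can never leave her coalition if some coalition member values her positively, since that member would be worse off. So ``being liked by someone in your coalition'' plays the role of a lock. A chosen set agent $S$ should act as such a lock, or remove the incentive to deviate, for all elements $e\in S$ at once.

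\textbf{Construction.} We take agents $E\cup\mathcal S\cup Y\cup\{z\}$, where $Y$ is a padding set large enough ($|Y|>|E|+|\mathcal S|$) that moving it is never affordable. Only $z$ updates, namely her valuations towards a set of $\Theta(n)$ agents (in the symmetric version the same edges change in both directions). This fits the \textsc{1-N} format.

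Initially $z$ sits with $Y$ and acts as a lock that keeps the current coalitions of $E$ and $\mathcal S$ stable. The update flips $z$'s valuations so that this lock disappears and $z$ herself is forced out, which costs distance $1$. Each element agent $e$ then gains a contractual deviation into a target coalition $T$: she is no longer liked by anybody in her coalition, and she likes $T$ on balance. The set agents are designed so that $S$ is liked by nobody in her old coalition; she can therefore be moved into $T$ at cost $1$. Inside $T$ she contributes $-\beta$ to each $e\in S$ and $+\alpha$ to each $e\notin S$.

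The weights will be balanced using multiplicities, for example by replacing each liked/disliked relation with the right number of copies. The aim is that $e$'s deviation to $T$ is non-improving as soon as at least one $S\ni e$ is in $T$. Any $S\ni e$ should suffice, and extra disliked agents should not create new deviations. The two directions are then argued as in \Cref{symmetric_Single_Hardness}.

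\textbf{Forward direction.} Given a cover $\mathcal C$, we move $z$ out and move $\mathcal C$ into $T$. We then check that no agent has a contractual deviation. Every agent either is locked by a friend or has no improving move.

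\textbf{Reverse direction.} From a stable $\pi'$ at distance at most $k+1$, we argue three things. First, $Y$ and $z$ are forced into their intended positions. Second, each moved element can be charged to a set containing it. Third, each element that is not covered would have a CIS deviation. The last argument also covers CNS, since CIS deviations are CNS deviations.

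\textbf{Main obstacle.} We expect the main obstacle to be strictness. Every pair of agents that is irrelevant in the intended argument must still receive $\alpha$ or $-\beta$. Those edges must neither create new locks that make a cheap non-cover solution stable nor create new deviations that break the intended solution. This must hold for arbitrary polynomial-time $\alpha(n),\beta(n)$, including $\alpha\gg\beta$ and $\alpha\ll\beta$.

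Our first attempt will make all cross edges between gadget groups $-\beta$. This ensures that no agent wants to enter a coalition where she has no designated friend. We will then use $Y$ (liked only by $z$) and friend copies to tune the sums. The ratio $\alpha/\beta$ will be handled by choosing group sizes as polynomial functions of $\lceil \alpha/\beta\rceil$ and $\lceil \beta/\alpha\rceil$. If that fails, we will instead let the lock be the only mechanism: elements are individually locked by a friend in their own coalition, and chosen sets replace the removed lock.
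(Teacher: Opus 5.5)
\textbf{Gap.} The mechanism you commit to cannot work for all admissible $\alpha,\beta$, and it conflicts with the structure of strict symmetric games. In that mechanism an uncovered element deviates into a target coalition $T$, a chosen $S\ni e$ inside $T$ makes that move non-improving, and the weights are tuned by multiplicities.

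First, by symmetry and strictness, suppose an agent sits in a non-singleton coalition where nobody likes her. Then she likes nobody there either, so her utility is at most $-\beta<0$, and moving to the empty coalition is both a CNS and a CIS deviation. So the case ``not locked, but no improving move'' in your forward direction is available only to singletons. An element that is ``no longer liked by anybody in her coalition'' leaves regardless of $T$. Relatedly, $z$ cannot lock $E$ while sitting with $Y$, because a lock must share the coalition.

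Second, suppose the elements are singletons. For CNS, a set agent in $T$ can then only stop $e$ by magnitude, because CNS puts no condition on the target coalition. Take $\alpha(n)=n$, $\beta(n)=1$, i.e.\ AFGs, which the statement explicitly covers. Any coalition containing one friend of $e$ gives $u_e\ge\alpha-(n-2)\beta>0$, so a singleton $e$ joins it however many sets containing her are present. Multiplicities cannot repair this. The values $\alpha,\beta$ are evaluated at $|N|$, so the required ratio can exceed any group size you can afford, and choosing sizes as functions of $\alpha/\beta$ is circular. Sign-based blocking by a disliking member of $T$ exists for CIS only.

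\textbf{What works.} Your one-line fallback is the right idea and is exactly the paper's proof: every step is sign-based via locks, with no padding set and no target coalition. Take $N=E\cup\mathcal S\cup\{z\}$ with $v(e,S)=\alpha$ iff $e\in S$, $v(e,z)=\alpha$, $v(S,S')=\alpha$, and $-\beta$ otherwise. Start from $\pi=\{E\cup\{z\},\mathcal S\}$ and let $z$ switch all her valuations for $E$ to $-\beta$.

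For the forward direction, a cover $\mathcal C$ with $|\mathcal C|\le k$ yields $\{E\cup\mathcal C,\mathcal S\setminus\mathcal C,\{z\}\}$. There each element is liked by a covering set, each $S\in\mathcal C$ by its elements, and $\mathcal S\setminus\mathcal C$ is mutually locked (this needs $k<|\mathcal S|-1$). Meanwhile $z$ now has only negative valuations, so she has no improving move.

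For the reverse direction, $z$ must be alone, so at most $k$ other agents move. An element not covered by $f(R_E)\cup R_{\mathcal S}$ is unmoved, and since $k<|E|-1$ she still shares her coalition with another element. She has no friend there, so she leaves to $\emptyset$.

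For the non-symmetric \textsc{X-1-N-Altered}, the paper simply invokes \Cref{thm:CISandCNS_single_asymmetric_NP_complete}. If you instead reuse one construction with only $z$'s outgoing valuations changing, keeping $v_e(z)=\alpha$ would leave $z$ locked after the update. You would therefore set $v_e(z)=-\beta$ from the start, and then the same argument goes through.
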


The theorem shows that finding close stable partitions in the symmetric setting is computationally hard when altering all preferences of an agent, even when only a single positive and negative valuation value exists. In particular, this implies hardness for FEGs, AFGs, and AEGs.

In the following, we again consider the setting in which an agent can only change her valuations for one other agent (or, in the symmetric setting, only one pair of valuations changes).
Even in this setting, the problem remains hard for CNS. The reduction is again from set cover.
\begin{restatable}{theorem}{CNSSym}
\label{CNS_hardness_without_0_edges}
    \textsc{CNS-1-1-Altered} and \textsc{CNS-1-1-Sym-Altered} are NP-complete for strict \acp{ASHG}.
\end{restatable}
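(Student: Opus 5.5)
We plan to follow the template of the proof of \Cref{symmetric_Single_Hardness}. Membership in NP follows directly from \Cref{lemmaEverythingINNP}. For hardness we reduce from set cover again. The new difficulty is that the $0$-valuations of the earlier construction must be replaced by nonzero values. We will replace them by tiny negative values $-\varepsilon$, with, say, $\varepsilon = 1/n^{3}$. Because the deviations are contractual, we can then use positive valuations held by a coalition member to \emph{lock} agents into a coalition. This is what lets us dispense with indifference.

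\textbf{Construction.} Given $(E,\mathcal S,k)$, let $N = E\cup\mathcal S\cup Y\cup\{z_1,z_2\}$ with $|Y|$ polynomially large. Set
\begin{itemize}
\item $v(z_1,e)=1$ for all $e\in E$;
\item $v(e,S)=-n^2$ for all $e\in S$;
\item $v(z_1,y)=v(z_1,z_2)=n^2$ for all $y\in Y$;
\item $v(z_2,a)=-n^2$ for all $a\in E\cup\mathcal S$;
\item $v(z_1,S)=\delta$ for all $S\in\mathcal S$, with $0<\delta<\varepsilon|Y|$;
\item $-\varepsilon$ for all remaining pairs.
\end{itemize}
The initial partition is $\pi=\{\{e\}_{e\in E},\{S\}_{S\in\mathcal S},Y\cup\{z_1,z_2\}\}$. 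We check that it is CNS by the following observations.
\begin{itemize}
\item Singletons gain nothing by moving. An element $e$ joining $\pi(z_1)$ gets $1-n^2-\varepsilon|Y|<0$ because of $z_2$. A set agent $S$ joining it gets $\delta-n^2-\varepsilon|Y|<0$.
\item Agents in $Y\cup\{z_2\}$ cannot leave, since $z_1$ values them positively and would be hurt.
\end{itemize}
The update sets $v'(z_1,z_2)=-n^2$. This changes a single symmetric pair, so the instance is one of \textsc{CNS-1-1-Sym-Altered}. For the non-symmetric variant we change only $v_{z_1}(z_2)$ and adjust the other direction analogously.

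\textbf{Forward direction.} Given a cover $\mathcal C$ with $|\mathcal C|\le k$, take
\[
\pi'=\{\{e\}_{e\in E},\ \{S\}_{S\notin\mathcal C},\ \mathcal C\cup Y\cup\{z_1\},\ \{z_2\}\},
\]
which has distance at most $k+1$ from $\pi$. We verify that $\pi'$ is CNS.
\begin{itemize}
\item Each $S\in\mathcal C$ has negative utility ($\delta-\varepsilon(|Y|+\dots)<0$) but cannot leave, since $z_1$ values $S$ positively and would lose $\delta$. The same locking argument keeps $Y$ in place.
\item $z_1$ has only positive valuations inside its coalition and no positive valuation elsewhere except towards singleton elements, where it would lose $n^2|Y|$; so it does not move.
\item An element $e$ joining $\pi'(z_1)$ gets at most $1-n^2<0$, because some $S\ni e$ lies in $\mathcal C$.
\item Singletons $\{S\}$ and $\{z_2\}$ cannot profit: every other coalition contains only negative valuations for them, except $\pi'(z_1)$, which is worth $\delta-\varepsilon|Y|<0$ to $S$ and $-n^2+\dots<0$ to $z_2$.
\end{itemize}

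\textbf{Reverse direction (main obstacle).} Let $\pi'$ be CNS in $G'$ with $d(\pi,\pi')\le k+1$. We argue in three steps.
\begin{enumerate}
\item $z_2$ must be separated from $z_1$. In $\pi'(z_1)$ it gets $-n^2$ plus tiny terms, and nobody there values it positively, so it could leave. Hence either $z_2$ or $z_1$ together with enough of $Y$ moved. Since $|Y|>k+1$, most of $Y$ stays with $z_1$, so at least one move is spent on separating $z_1$ and $z_2$.
\item As in the sketch, let $R_E,R_{\mathcal S}$ be the moved element and set agents and put $\mathcal C=f(R_E)\cup R_{\mathcal S}$, so $|\mathcal C|\le k$.
\item Suppose some $e$ is uncovered. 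Then $e$ is unmoved, hence a singleton, and no $S\ni e$ lies in $\pi'(z_1)$. Joining $\pi'(z_1)$ gives $e$ utility at least $1-\varepsilon n>0$, since $z_2\notin\pi'(z_1)$. Leaving a singleton hurts nobody, so this is a CNS deviation, a contradiction.
\end{enumerate}
The delicate part is the accounting in steps 1 and 3. We must rule out that $\pi'$ separates $z_1$ from $z_2$ by moving $z_1$ into some odd coalition, and we must check that the $\varepsilon$-terms never flip a sign. We will handle this by choosing $|Y|\ge 2(k+2)$ and $\varepsilon n^2<\min(\delta,1)/2$, then tracking sums explicitly. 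If locking via $\delta$ proves fragile, the fallback is to give each $S$ a private dedicated partner agent with a positive valuation to play the locking role instead of $z_1$.
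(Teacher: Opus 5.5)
Your reduction works once the repairs below are made, and it takes a different route from the paper's.

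\textbf{Comparison with the paper.} The paper avoids small weights and uses only $\{-2M,-1,1\}$. Its ingredients are these:
\begin{itemize}
\item $\mathcal S$ is a clique of mutual friends, so covering sets lock each other; this is why it assumes $2<k<|\mathcal S|-1$.
\item $Y$ is a clique that likes every element.
\item $z_1$ hates all of $E$ and thereby shields $Y$ from $E$.
\end{itemize}
After the update both $z_1$ and $z_2$ are friendless and must become singletons, so the threshold is $k+2$. The reverse direction then has to cope with $Y$ possibly splitting. That is what the coalition $\pi'(y^*)$, the map $g$ on moved $Y$-agents, and the count of $M-k$ friends against $k$ enemies are for.

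You keep the skeleton of \Cref{symmetric_Single_Hardness} and replace indifference by a contractual lock. The tiny valuation $\delta$ that $z_1$ has for each set holds the covering sets in $\pi'(z_1)$, even though they are unhappy there. Every agent of $Y\cup E$ has $z_1$ as its only positive tie, in both directions. So CNS forces $Y\subseteq\pi'(z_1)$ and frees every uncovered element. As a result your reverse direction is shorter and your threshold is $k+1$. The price is five distinct values, some $n$-dependent and fractional (though scalable to integers), instead of three fixed integers with a single positive value. Both constructions use exactly two negative values, which is consistent with \Cref{thm:symmetric_fegs_distance_CNS}.

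\textbf{Repairs.}
\begin{itemize}
\item[(i)] Your final parameter choice is contradictory: $\delta<\varepsilon|Y|\le\varepsilon n$ cannot hold together with $\varepsilon n^2<\delta/2$. Drop the latter. The only constraints the argument uses are:
\begin{itemize}
\item $\varepsilon n<1$, so an uncovered $e$ gets $1-\varepsilon(|\pi'(z_1)|-1)>0$ by joining $\pi'(z_1)$;
\item $\delta<\varepsilon|Y|$, so singleton sets do not join $\pi'(z_1)$ in the forward direction.
\end{itemize}
For example, take $\varepsilon=n^{-2}$ and $\delta=\varepsilon$.
\item[(ii)] In Step 1, replace ``most of $Y$ stays with $z_1$'' by a direct CNS argument. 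A $y\notin\pi'(z_1)$ has $u_y(\pi'(y))\le 0<n^2-\varepsilon n\le u_y(\pi'(z_1))$, and only $z_1$ values $y$ positively, so $y$ could CNS-deviate. Hence $Y\subseteq\pi'(z_1)$. This rules out $z_1$ escaping to some odd coalition. For $|Y|\ge 2$, a shortest transformation keeps $Y\cup\{z_1\}$ in place. Therefore $z_2$ and every agent of $(E\cup\mathcal S)\cap\pi'(z_1)$ count as moved.
\item[(iii)] An unmoved $e$ need not be a singleton, since moved agents may have joined $\{e\}$. The deviation still works, because $z_1$ is the only agent $e$ values positively and the only agent valuing $e$ positively.
\item[(iv)] In the non-symmetric version, the flipped valuation must leave $z_2$ both friendless and unprotected in $G'$. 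For instance, flip $v_{z_1}(z_2)$ from $n^2$ to $-n^2$ and keep $v_{z_2}(z_1)=-n^2$ throughout; then in $G$ the agent $z_2$ is held in place only by $z_1$'s veto. If instead you kept $v_{z_2}(z_1)=n^2$, then $\pi$ would remain CNS in $G'$ and the reduction would fail.
\end{itemize}
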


The hardness of \textsc{CNS-1-1-Sym-Altered} is closely related to the number of singleton coalitions in $\pi$ since we can compute a CNS partition $\pi'$ whose distance to $\pi$ only depends on the number of singleton coalitions in $\pi$.

\begin{restatable}{theorem}{CNSFourTheorem}
\label{thm:general_CNS_distance_bound}
For the \textsc{CNS-1-1-Sym-Altered} problem, Algorithm \ref{alg_closeCNS} computes a CNS partition $\pi'$ with $d(\pi, \pi') \leq 4 + \phi(\pi) - \phi(\pi')$, where $\phi(\pi)$ denotes the number of singleton coalitions in $\pi$. 
\end{restatable}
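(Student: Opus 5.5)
The plan is to run a restricted form of CNS improvement dynamics starting from $\pi$ in $G'$ (this is what I take Algorithm~\ref{alg_closeCNS} to do), and to bound the number of moved agents by counting singletons. Write $a,b$ for the pair with $v'(a,b)\neq v(a,b)$.

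\textbf{Step 1: a structural observation.} In a strict symmetric ASHG, let $\pi$ be CNS and let $i$ lie in a non-singleton coalition. Then $i$ has at least one friend, i.e.\ some $j\in\pi(i)$ with $v(i,j)>0$. Otherwise $u_i(\pi)<0$ by strictness, and leaving to a singleton is a Nash deviation. Every mate $j$ has $v(j,i)<0$ by symmetry, so nobody in $\pi(i)$ loses utility, and the move is a CNS deviation. Conversely, any agent with a friend in her coalition cannot CNS deviate, since that friend would lose. Call such agents \emph{locked}. In a CNS partition of a strict game, therefore, every agent in a non-singleton coalition is locked, and only singletons can possibly deviate.

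\textbf{Step 2: the dynamics and its invariants.} Start from $\pi$ in $G'$. The update only changes $u_a$ and $u_b$, so every agent other than $a,b$ keeps her friends and her utilities for all coalitions, except those involving $a$ or $b$. Hence, initially, the only agents that may be unlocked in non-singleton coalitions are $a$ and $b$ (if $v'(a,b)<0$ removed their only friend). Some singletons may now also have improving moves towards $\pi(a)$ or $\pi(b)$.

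We repeatedly perform CNS deviations. Termination follows from the standard potential $\sum_{C}\sum_{\{i,j\}\subseteq C}v'(i,j)$: by symmetry, every Nash deviation increases it by exactly the deviator's gain, and the number of partitions is finite.

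The key invariant is that the set of locked agents never loses anyone other than through the update itself. A CNS deviator has no friend in her old coalition, so by symmetry nobody loses a friend when she leaves. Joining a coalition only adds agents, so it cannot remove friends either. A deviator that joins a nonempty coalition gains positive utility there, so she has a friend and becomes locked. Consequently:
\begin{itemize}
\item every agent other than $a,b$ moves at most once;
\item each such move is a singleton joining a nonempty coalition, and so lowers the singleton count by one;
\item $a$ and $b$ move at most once each while unlocked. Each such move creates at most one new singleton: either the deviator herself, or a former partner left alone, e.g.\ when $\pi(a)=\{a,b\}$.
\end{itemize}

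\textbf{Step 3: counting.} Let $m$ be the number of singleton-joining moves by agents other than $a,b$. The singleton count changes by $-m$ plus at most $+2$ from the moves of $a$ and $b$. This gives $m\le \phi(\pi)-\phi(\pi')+2$. The distance is at most the number of distinct moved agents, so
\[
d(\pi,\pi')\le m+2\le 4+\phi(\pi)-\phi(\pi').
\]

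The main obstacle is Step 2's case analysis around $a$ and $b$. I would check the cases $v'(a,b)>v(a,b)$ and $v'(a,b)<v(a,b)$, and same versus different coalitions, separately. The goal is to verify that $a$ or $b$ cannot be unlocked again later: after moving they join a friend or sit alone, and once alone and later joining they are locked. I also need that no cascade of unlockings occurs, which is exactly the symmetry argument above. If $a$ or $b$ can move twice (once to a singleton, once joining), I would count that as a singleton-join and absorb it into the $\phi$ term rather than the constant $4$.
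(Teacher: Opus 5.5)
Your locking invariant in Step~2 is correct. Friendship is symmetric, and a CNS deviator has no friend in the coalition she leaves, so after the update nobody ever loses her last friend. Hence only $a$, $b$ and the singletons of $\pi$ can ever move. The gap is in the bulleted consequences you draw for \emph{unrestricted} CNS dynamics. The claim that a deviator joining a nonempty coalition gets positive utility there holds only if her old utility was nonnegative. If $\pi(a)=\pi(b)$ and $v'(a,b)<0$ removes $a$'s only friend, then $u_a(\pi(a))<0$ in $G'$. A CNS deviation may then take $a$ to a coalition where her utility is still negative, e.g.\ a singleton $\{s\}$ with $u_a(\pi(a))<v(a,s)<0$. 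Now $s\neq a,b$ sits friendless in a non-singleton coalition with negative utility. She may hop through several coalitions with negative but increasing utility and finally to the empty coalition, and none of these moves is a singleton joining.

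Concretely, let $\pi=\{\{a,b,c\},\{s\},D_1,D_2\}$ with $D_1,D_2$ pairs of mutual friends and $v(a,b)=v(b,c)=1$. Let $v(s,a)=-10$, let all valuations of $s$ for $D_1,D_2$ be negative with $u_s(D_1)=-5$ and $u_s(D_2)=-3$, let all remaining valuations be $-1$, and set $v'(a,b)=-30$. The CNS dynamics $a\to\{s\}$, $s\to D_1$, $s\to D_2$, $s\to\emptyset$ terminates with $m=3$, while $\phi(\pi)-\phi(\pi')+2=1$. So the inequality $m\le\phi(\pi)-\phi(\pi')+2$ in Step~3 is false. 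The check you postpone to your last paragraph (``after moving they join a friend or sit alone'') is exactly what fails for arbitrary dynamics.

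There are two ways to repair this. First, Algorithm~\ref{alg_closeCNS} is not plain CNS dynamics. When $\pi(a)=\pi(b)$, it first merges $\pi(x)$ with a singleton friend $s_x$ of $x$ if there is one. This merge is not even a CNS deviation of $s_x$, since $u_{s_x}(\pi(x))\le 0$ because $\pi$ is CNS in $G$. Otherwise it moves $x$ to the largest coalition in which she has positive utility, or alone. The paper's \Cref{lemma:only_singletons_deviate} uses precisely these prescriptions to show that $a$ and $b$ move at most once and end up locked or alone. Apart from those at most two moves, every move (merges included) is then a singleton joining a coalition and becoming locked, and your count goes through essentially as intended.

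Second, you can keep arbitrary dynamics (and the merges) but count agents instead of moves. Let $S_0$ be the set of singletons of $\pi$. Your invariant shows that only agents of $S_0\cup\{a,b\}$ ever move, and that no agent outside $S_0\cup\{a,b\}$ is alone in $\pi'$. Hence $d(\pi,\pi')\le 2+\lvert\{s\in S_0:\pi'(s)\neq\{s\}\}\rvert$ and $\phi(\pi')\le 2+\lvert\{s\in S_0:\pi'(s)=\{s\}\}\rvert$. Together with $\lvert S_0\rvert=\phi(\pi)$, these give $d(\pi,\pi')\le 4+\phi(\pi)-\phi(\pi')$.
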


\begin{proof}[Proof sketch]
    The proof uses the following two observations about CNS partitions in strict symmetric \acp{ASHG}.
    \begin{itemize}
        \item For any two agents $s$ and $s'$ in singleton coalitions, it holds that $v(s, s') < 0$ and $u_s(C) \leq 0$ for any $C \in \pi$ with $\lvert C \rvert > 1$.
        \item For any agent $x$ with $\lvert \pi(x) \rvert > 1$, there must be some $y \in \pi(x)$ with $v(x, y) > 0$.
    \end{itemize}
    Now let the altered valuation be between two agents $a$ and $b$.
    We perform a case analysis depending on whether $a$ and $b$ are in the same coalition or not. 
    
    \textbf{Case 1:  $\pi(a) = \pi(b).$} 
    Without loss of generality, we can assume that $v'(a, b) < 0$ as otherwise $\pi$ is again CNS in $G'$. If there is an agent $s_a$ (or $s_b$) in a singleton coalition in $\pi$  with $v(a, s_a) > 0$ (or $v(b, s_b) > 0$), then $a$ (or $b$) cannot make a CNS deviation after merging their respective coalitions.
    If either $a$ or $b$ leaves their coalition, then only agents in singleton coalitions can join this coalition by CNS deviations. Due to the first observation, no agent in a singleton coalition can make a CNS deviation to a coalition other than $\pi(a)$ since $\pi$ is CNS in $G$.
    Since there can be at most $2$ singleton coalitions in $\pi'$ that are not in $\pi$ and at most two agents in non singleton coalitions can make a CNS deviation in the algorithm, it holds that $d(\pi, \pi') \leq 4 + \phi(\pi) - \phi(\pi').$

    \textbf{Case 2: $\pi(a) \neq \pi(b).$} 
    If $\pi$ is not CNS in $G'$, then either $a$ or $b$ is in a singleton coalition. In this case, merging $\pi(a)$ and $\pi(b)$ yields a CNS partition $\pi'$ with $d(\pi, \pi') = 1$. 
\end{proof}
\begin{algorithm}
\caption{CloseCNS}
\label{alg_closeCNS}
\SetKwInOut{Input}{input}\SetKwInOut{Output}{output}
\Input{A symmetric strict \ac{ASHG} $(N, v)$, a CNS partition $\pi$, and a pair of agents $(a, b)$ together with an altered valuation $v'(a, b)$.}
\Output{A CNS stable partition $\pi'$ for the altered \ac{ASHG}.}
\If{$\pi(a) = \pi(b)$}{
\For{$x \in \{a, b\}$}{
\If{$\exists$ $s_x$ with $\pi(s_x) = \{s_x\}$ and $v(x, s_x) > 0$}{
$C \gets \pi(x) \cup \pi(s_x)$\\
$\pi \gets \left( \pi \setminus \{\pi(x), \pi(s_x)\} \right) \cup \{C\}$  
}\ElseIf{$x$ can make a CNS deviation}{Let $x$ make a CNS deviation to the largest possible coalition}
}
}
Let agents make CNS deviations until no more are possible and denote the resulting partition by $\pi'$\\
\Return $\pi'$
\end{algorithm}

Algorithm \ref{alg_closeCNS} also works for CIS if, instead of CNS deviations, we consider CIS deviations. However, we can prove a stronger bound on $d(\pi, \pi')$ that is independent of $\phi(\pi)$ and is based on any sequence of CIS deviations.

\begin{restatable}{theorem}{CISdistance}\label{thm:CIS_FEGS_polynomial_time}
    Let $G$ and $G'$ be strict symmetric \acp{ASHG} such that $G'$ arises from $G$ by altering the valuation between two agents, and let $\pi$ be CIS in $G$.
    Then, any sequence of CIS deviations in $G'$ starting from $\pi$ converges to a CIS partition $\pi'$ with $d(\pi, \pi') \leq 3$.
\end{restatable}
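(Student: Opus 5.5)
We would prove the two parts of Theorem~\ref{thm:CIS_FEGS_polynomial_time}, convergence and the distance bound, separately. Throughout we use the following characterisation of CIS deviations in a strict symmetric \ac{ASHG}. Agent $i$ can CIS deviate from $C=\pi(i)$ to $D$ only if two conditions hold. First, every $j\in D$ must satisfy $v(i,j)>0$, because $j$ must not lose and $v(j,i)=v(i,j)\neq 0$. Second, every $j\in C\setminus\{i\}$ must satisfy $v(i,j)<0$. Consequently $i$ is \emph{isolated}: all of its coalition partners are enemies. Moreover, $D$ must be a nonempty set of friends of $i$, or $D=\emptyset$ with $C\neq\{i\}$. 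Conversely, any isolated agent facing such a $D$ has a CIS deviation.

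\emph{Convergence.} Every CIS deviation is a Nash deviation. In a symmetric game, a Nash deviation of $i$ changes the potential $\Phi(\pi)=\sum_{C\in\pi}\sum_{\{x,y\}\subseteq C} v(x,y)$ by exactly $u_i(\pi')-u_i(\pi)>0$. Since $\Phi$ takes finitely many values, every sequence of deviations is finite. Its endpoint has no CIS deviation, so it is CIS in $G'$.

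\emph{Distance bound.} Since $\pi$ is CIS in $G$, in $\pi$ no isolated agent has an all-friend target coalition. Let $\{a,b\}$ be the altered pair. We track, deviation by deviation, which agents can be newly enabled, and show that only a bounded set of agents ever moves. A deviation by $i$ from $C$ to $D$ has only these effects:
\begin{enumerate}
\item it adds a friend to each member of $D$, so none of them becomes isolated;
\item it removes the enemy $i$ from $C$, so no member of $C$ becomes isolated, while $C\setminus\{i\}$ may become a new all-friend target;
\item $i$ itself is no longer isolated if $D\neq\emptyset$.
\end{enumerate}
Hence, after the update, the only new deviation opportunities are:
\begin{itemize}
\item $a$ or $b$ becoming isolated, which happens when $\pi(a)=\pi(b)$ and $v'(a,b)<0$;
\item a new all-friend target, namely $\pi(b)$ for $a$ (or vice versa) when $v'(a,b)>0$;
\item after a deviation, the set $C\setminus\{i\}$ that the mover left becoming all-friend for some agent that was already isolated.
\end{itemize}

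We would do the case analysis as follows. In the case $\pi(a)\neq\pi(b)$ with $v'(a,b)>0$, only an isolated $a$ (or $b$) can use the new target. After that move, the coalition it left becomes a candidate target for some previously isolated agent $j$. The key claim is that $j$ cannot exist, or that after $j$ moves nothing further happens. For the claim we would use that $j$ was blocked in $\pi$ only by the enemy $a$ in $\pi(a)$. The symmetric case $\pi(a)=\pi(b)$ with $v'(a,b)<0$ is analogous: at most $a$ (or $b$) leaves, then at most one formerly blocked agent joins the remainder, and possibly the other endpoint moves. This accounts for at most three moving agents. We then bound $d(\pi,\pi')$ by the number of distinct agents that moved, since an agent that moves several times still contributes at most one.

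The main obstacle is ruling out long cascades. We must show that once a formerly blocked agent $j$ joins $C\setminus\{a\}$, the coalition $j$ left does not in turn become a fresh all-friend target for yet another isolated agent. Our first attempt would be a counting argument: any agent that could later join $\pi(j)\setminus\{j\}$ is isolated and friends with all of $\pi(j)\setminus\{j\}$. In $\pi$ it was therefore blocked only by $j$, and so it would also have been able to join $C\setminus\{a\}$, in competition with $j$. This should be combined with the facts that isolated agents have no friends in their own coalition, and that the potential only increases. If this pairing argument fails, we would fall back to showing that any cascading agent must re-enter a coalition it already visited, so that the set of distinct movers stays within $\{a,b,j\}$.
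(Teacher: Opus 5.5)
Your convergence argument via the potential $\Phi$ is correct. The distance bound, however, has a genuine gap. The step you call the main obstacle, ruling out cascades, is only sketched, and the idea that settles it is missing. In a CIS partition of a strict symmetric game, three facts hold:
\begin{itemize}
\item Every agent in a non-singleton coalition has a friend in her coalition. Otherwise her utility is negative and she can CIS deviate to $\emptyset$. So the isolated agents of $\pi$ are exactly the singletons.
\item Every singleton has an enemy in every other coalition of $\pi$. In particular, the singletons of $\pi$ are pairwise enemies.
\item Two friends in a common coalition can never be separated, since contractually neither may leave.
\end{itemize}
Hence in $G'$ only $a$, $b$ and singletons of $\pi$ ever move. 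Each original singleton moves at most once, because afterwards she sits with a friend. A coalition that has received an original singleton is closed to all other original singletons. In particular, for $\pi(a)\neq\pi(b)$ the isolated endpoint is a singleton, so nothing is left behind and your agent $j$ does not exist. For $\pi(a)=\pi(b)$, the agent $j$ that joins the remainder is a singleton of $\pi$ and leaves no coalition behind, so no cascade can start. These are the two auxiliary lemmas the paper's proof relies on. Your pairing and revisiting fallbacks are not needed, and as written they are not arguments.

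Your final accounting is also wrong. You claim at most three agents move and bound $d(\pi,\pi')$ by the number of distinct movers. Your list of effects omits that a deviation of $a$ (or $b$) to $\emptyset$ creates a new all-friend target $\{a\}$ for singletons that are friends of $a$. For instance, take agents $a,b,r_1,r_2,s_1,s_2,s_3$ with valuation $1$ exactly on the pairs $\{a,b\},\{r_1,r_2\},\{s_1,r_1\},\{s_1,r_2\},\{s_2,a\},\{s_3,b\}$ and $-1$ otherwise. Then $\pi=\{\{a,b,r_1,r_2\},\{s_1\},\{s_2\},\{s_3\}\}$ is CIS. After setting $v'(a,b)=-1$, the CIS deviations $a\to\emptyset$, $s_2\to\{a\}$, $b\to\emptyset$, $s_3\to\{b\}$, $s_1\to\{r_1,r_2\}$ end in the CIS partition $\{\{r_1,r_2,s_1\},\{a,s_2\},\{b,s_3\}\}$. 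Five distinct agents moved, yet the distance is only $3$.

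So you must bound the transfer distance directly. Observe that $a\to\emptyset$ followed by $s\to\{a\}$ yields the same partition as the single transfer $a\to\{s\}$. Then $\pi'$ is reached from $\pi$ by moving $a$ and $b$ straight to their final coalitions, plus the at most one original singleton that joined the remainder of $\pi(a)$. This is what the paper's reduction is doing when it assumes $a$ goes to $\emptyset$ only if she cannot deviate elsewhere.
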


Due to \Cref{thm:CIS_FEGS_polynomial_time} we can guarantee the existence of a
CIS partition with distance at most $3$. If we are interested, if there exists an even closer CIS partition, we can compute all possible partitions with distance $1$ and $2$ in polynomial time due to \Cref{Lemma:Distancek} and check whether any is CIS.

\begin{corollary}\label{Cor:Cis positive}
        \textsc{CIS-1-1-Sym-Altered} can be decided in polynomial time in strict \acp{ASHG}.
\end{corollary}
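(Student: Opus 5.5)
The proof combines the existence guarantee of \Cref{thm:CIS_FEGS_polynomial_time} with the enumeration bound of \Cref{Lemma:Distancek}. Given an instance of \textsc{CIS-1-1-Sym-Altered} with strict symmetric $G$, CIS partition $\pi$, altered pair $(a,b)$ and distance parameter $k$, we proceed as follows.

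First, suppose $k \geq 3$. By \Cref{thm:CIS_FEGS_polynomial_time}, any sequence of CIS deviations in $G'$ starting from $\pi$ converges to a CIS partition $\pi'$ with $d(\pi,\pi') \leq 3 \leq k$, so the answer is always yes. To output such a partition in polynomial time, we let agents perform CIS deviations until none remain. We must argue that this terminates after polynomially many steps. Since $G'$ is symmetric, the potential $\sum_{i} u_i(\pi)$ strictly increases with each Nash deviation, but with rational weights this alone does not give a polynomial bound. The cleaner route uses the statement of \Cref{thm:CIS_FEGS_polynomial_time} itself: the endpoint lies within distance $3$ of $\pi$. Hence, rather than simulating the dynamics, we can enumerate all partitions at distance at most $3$ from $\pi$. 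By \Cref{Lemma:Distancek} there are at most $\sum_{j\le 3} n^{2j} = O(n^6)$ of them. For each candidate we test CIS in polynomial time via \Cref{lemmaEverythingINNP}, and the theorem guarantees that at least one passes.

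Second, suppose $k \leq 2$. We enumerate all partitions $\pi''$ with $d(\pi,\pi'') \leq k$. Again by \Cref{Lemma:Distancek} there are at most $O(n^4)$ of them, and they can be generated by applying at most $k$ single-agent moves in all possible ways, namely choosing an agent and a target coalition (an existing one or a new singleton). For each candidate, \Cref{lemmaEverythingINNP} lets us check in polynomial time whether it is CIS in $G'$. We output one if found and otherwise answer no. Since the enumeration is exhaustive over the distance-$\le k$ ball, this is correct.

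The only delicate point is making the enumeration explicit: \Cref{Lemma:Distancek} bounds the count, and we need to observe that every partition at distance $j$ arises from $j$ successive single-agent relocations, each with at most $n \cdot (n+1)$ choices. This gives a generation procedure of size $O(n^{2j})$ per level with $j \le 3$, so the whole algorithm is polynomial. This step, rather than termination of the CIS dynamics, is the step I would write carefully, since it avoids any need to bound the length of deviation sequences.
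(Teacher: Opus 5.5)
Your proposal is correct and follows the paper's argument: \Cref{thm:CIS_FEGS_polynomial_time} gives a yes-answer whenever $k \geq 3$, and for $k \leq 2$ you exhaustively test the polynomially many partitions within distance $k$ given by \Cref{Lemma:Distancek}. Your extra step goes slightly beyond the paper's brief justification: for $k \geq 3$ you find the witness by enumerating the distance-$\le 3$ ball rather than simulating the CIS dynamics, which avoids having to bound the length of deviation sequences with rational weights.
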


To prove a constant distance bound for CNS, we need to further restrict the valuations, such that only a constant number of agents in singleton coalitions in $\pi$ can deviate.

\begin{restatable}{theorem}{CNSDistanceFourTheoremLemma}
    \label{thm:symmetric_fegs_distance_CNS}

    For \textsc{CNS-1-1-Sym-Altered} in strict \acp{ASHG} with at most one negative valuation value, \Cref{alg_closeCNS} computes a CNS partition $\pi'$ with $d(\pi, \pi')\leq 4.$
\end{restatable}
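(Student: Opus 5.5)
Building on \Cref{thm:general_CNS_distance_bound}, Algorithm~\ref{alg_closeCNS} already outputs a CNS partition $\pi'$ with $d(\pi,\pi') \le 4 + \phi(\pi) - \phi(\pi')$, and its proof sketch shows that the only agents of non-singleton coalitions that move are $a$ and $b$ in the first phase. The extra cost $\phi(\pi)-\phi(\pi')$ comes from singletons of $\pi$ that leave their singleton coalition during the final deviation phase. So I would bound directly the number of agents that move, counting separately (i) at most two moves by $a$ and $b$ (including a possible merge with $s_a$ or $s_b$, where only one agent per merge is counted as moved), and (ii) moves by original singletons. The goal is to show that under the restriction that the negative valuations all equal a single value $-\beta$, at most two singletons of $\pi$ can ever deviate. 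Case~2 ($\pi(a)\neq\pi(b)$) gives distance $1$ already, so only Case~1 with $v'(a,b)=-\beta<0$ needs work.

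The key structural step uses the first observation: in the CNS partition $\pi$ of the symmetric strict game, any two singleton agents $s,s'$ satisfy $v(s,s')=-\beta$, and every singleton has $u_s(C)\le 0$ for every non-singleton $C\in\pi$. After $a$ (or $b$) leaves $\pi(a)$, the only coalitions whose composition changed are $\pi(a)\setminus\{a\}$, the new coalition of $a$, and possibly merged coalitions with $s_a,s_b$. A singleton $s$ can only gain by joining a changed coalition $C$. I would argue that for $s$ to deviate, $C$ must have been $\pi(a)$ minus $a$ (or minus $b$), so that $u_s$ increased by removing an agent $x\in\{a,b\}$ with $v(s,x)=-\beta$. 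Since there is only one negative value, removing $x$ raises $u_s$ by exactly $\beta$, while $u_s(\pi(a))\le 0$. Once one singleton $s$ joins, any further singleton $s'$ loses $\beta$ because $v(s',s)=-\beta$, which exactly cancels the gain from $x$'s departure. Hence $s'$ cannot strictly improve. So each of the (at most two) vacated positions admits at most one singleton joining. Moreover, when $s$ joins, the agents already in the coalition lose at most $\beta$ from $s$, and the contractual condition is checked on $s$'s old coalition, which is a singleton, so it is trivially satisfied.

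Next I would show that these moves do not trigger cascades. An agent $s$ who joined a coalition has $v(s,y)>0$ for some member $y$. By the symmetric potential argument, $s$ cannot leave without that friend objecting, so only further singletons could move. Non-singleton agents other than $a,b$ in unchanged coalitions keep their friend $y$ from the second observation, so they are blocked by the contractual condition. Agents in the changed coalitions retain a friend unless their only friend was $a$ or $b$. This is the main obstacle: an agent $c$ whose sole positive neighbour was $a$ might now be free to deviate. I would handle it by noting that the algorithm makes $a$ deviate to the \emph{largest} possible coalition, or merges $a$ with $s_a$. I would then use the single-negative-value structure to show that such $c$ cannot strictly prefer any coalition over staying, because staying is at least as good as being a singleton. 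Combining the counts then gives at most two moves by $a,b$ plus at most two singleton moves, so $d(\pi,\pi')\le 4$. Alternatively, the same count follows by inserting $\phi(\pi)-\phi(\pi')\le 0$ after accounting for the singletons newly created by $a$ and $b$ leaving.
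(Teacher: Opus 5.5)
Your core count is the paper's argument. Only $a$, $b$ and singletons of $\pi$ move, and a singleton $s$ can only enter what remains of $\pi(a)=\pi(b)$. Since $u_s(\pi(a))\le 0$, the departures of $a$ and $b$ raise $u_s$ by at most $2\beta$, and every singleton already inside costs $\beta$. So at most two singletons get in, and $d(\pi,\pi')\le 4$.

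The flaw is your ``main obstacle'' paragraph. The case you describe cannot arise. In a symmetric game, a CNS deviation by $a$ requires $v(c,a)=v(a,c)<0$ for every $c$ in $a$'s current coalition. So if some $c\in\pi(a)$ had $a$ as a friend, $a$ could not have left, and the merge step only adds agents. Hence no agent that stays ever loses its friend. This is the same contractual-veto argument you already use for a joined singleton $s$ and its friend $y$ (it is not a potential argument). It is exactly what gives the paper's lemma that only $a$, $b$ and original singletons deviate. The fix you propose would in fact be false if the case could occur. In a strict game, an agent with no friend in a coalition of size at least two has strictly negative utility and strictly prefers being alone. The ``largest coalition'' rule for $a$ says nothing about her. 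Replace that paragraph with the observation above.

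Two smaller points. First, your claim that $s$ cannot join the coalition $C_a$ that $a$ moves to needs the paper's reason. $a$ makes a CNS deviation in Algorithm~\ref{alg_closeCNS} only if it was not merged, so $v(a,s)=-\beta$ for every singleton $s$ of $\pi$. Together with $u_s(C_a)\le 0$, this gives $u_s(C_a\cup\{a\})<0$, and similarly for $b$.

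Second, drop the closing ``alternative'', because $\phi(\pi)-\phi(\pi')\le 0$ fails in general. For example, two singletons may enter $\pi(a)\setminus\{a,b\}$ while $a$ and $b$ join other non-singleton coalitions. Then $\phi(\pi)-\phi(\pi')=2$, so only your direct count yields the bound $4$.
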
 

\begin{proof}[Proof sketch]
    Let $-\beta$ denote the negative valuation value.
    By the proof of \Cref{thm:general_CNS_distance_bound}, we only need to bound the number of agents in singleton coalitions that make a CNS deviation to $\pi(a)$ in the case $\pi(a) = \pi(b)$.
    Again, we may assume that $v'(a, b) < 0$. At most two agents, namely $a$ and $b$, can leave $\pi(a) = \pi(b)$ by a CNS deviation.
    Moreover, for any two agents $s, s'$ in singleton coalitions in $\pi$, it holds that $v(s, s') = -\beta < 0$ and $u_s(\pi(a) \setminus \{a, b\}) \leq u_s(\pi(a)) + 2 \beta \leq 2 \beta$, where $u_s(\pi(a)) \leq 0$ since $\pi$ is CNS in $G$. Hence, at most two agents in singleton coalitions can join this coalition by a CNS deviation.
\end{proof}
As for CIS, we can infer from \Cref{thm:symmetric_fegs_distance_CNS} that we can find a closest CNS partition efficiently under such valuation restrictions.

\begin{corollary}\label{cor:CNS_polytime_friends}
        \textsc{CNS-1-1-Sym-Altered} can be decided in polynomial time in strict \acp{ASHG} with at most one negative valuation value (which includes \acp{FEG}, AEGs, and AFGs).
\end{corollary}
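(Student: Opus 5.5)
The plan is a polynomial-time procedure that combines \Cref{thm:symmetric_fegs_distance_CNS} with \Cref{Lemma:Distancek} and \Cref{lemmaEverythingINNP}, mirroring the argument for \Cref{Cor:Cis positive}.

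Let $(G,\pi,\{a\},\{b\},v',k)$ be an instance of \textsc{CNS-1-1-Sym-Altered} in which the strict valuations use at most one negative value. First we check that \Cref{alg_closeCNS} runs in polynomial time. The initial loop performs at most two merges or deviations. The final phase is a sequence of CNS deviations in a symmetric game, and we claim it stops quickly. Each CNS deviation in a symmetric \ac{ASHG} strictly increases the potential $\sum_{C\in\pi}\sum_{\{i,j\}\subseteq C} v'(i,j)$. This alone only yields pseudo-polynomial termination. However, the proof of \Cref{thm:symmetric_fegs_distance_CNS} shows that only a bounded number of agents (those affected by the change, plus at most two agents from singleton coalitions) ever deviate, so the number of steps is constant. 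If that argument does not directly bound the number of steps, we do not need the algorithm to terminate quickly at all: the existence statement of \Cref{thm:symmetric_fegs_distance_CNS} suffices for the next step.

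Second, \Cref{thm:symmetric_fegs_distance_CNS} guarantees that some CNS partition $\pi^*$ of $G'$ satisfies $d(\pi,\pi^*)\le 4$. Hence, if $k\ge 4$, the answer is yes, and we output $\pi^*$, or any partition at distance at most $4$ that passes the CNS check. If $k\le 3$, \Cref{Lemma:Distancek} shows that there are at most $\sum_{j\le 3} n^{2j}=O(n^6)$ partitions within distance $k$ of $\pi$, and they can be enumerated by repeatedly moving single agents. For each candidate we test CNS in $G'$ in polynomial time via \Cref{lemmaEverythingINNP}, and we return a stable candidate if one exists. The same enumeration up to distance $4$ removes any dependence on the algorithm's running time, since it finds $\pi^*$ or an equally close stable partition.

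Finally, FEGs ($\{-1,1\}$), AFGs ($\{-1,n\}$) and AEGs ($\{-n,1\}$) are strict and have exactly one negative value, and updates stay within the class. The main obstacle is this reliance on existence rather than on algorithmic termination, and using the bounded-distance enumeration resolves it cleanly.
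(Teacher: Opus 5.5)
Your proposal is correct and follows essentially the paper's argument. \Cref{thm:symmetric_fegs_distance_CNS} guarantees a CNS partition within distance $4$, so for $k \ge 4$ the answer is yes, and otherwise you enumerate the polynomially many partitions within distance $k \le 3$ via \Cref{Lemma:Distancek} and check each with \Cref{lemmaEverythingINNP}, exactly as for \Cref{Cor:Cis positive}; your discussion of the running time of \Cref{alg_closeCNS} is harmless but unnecessary, because the enumeration up to distance $4$ already produces a solution.
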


While one can show a similar bound as in \Cref{thm:symmetric_fegs_distance_CNS} for multiple negative valuation values if they are sufficiently close to one another, this is not possible in general, as \Cref{CNS_hardness_without_0_edges} shows. 
In the following, we give an example, which shows that the bounds from \Cref{thm:general_CNS_distance_bound} and 
\Cref{thm:symmetric_fegs_distance_CNS} are tight.

\begin{example}
    Consider a symmetric \ac{FEG} $G$ with $8$ agents and positive valuations defined by $v(6, 7) = 1$ and $v(i, j) = 1$ for all $i \in \{0, 4, 5\}, j \in \{1,2, 3\}$.
    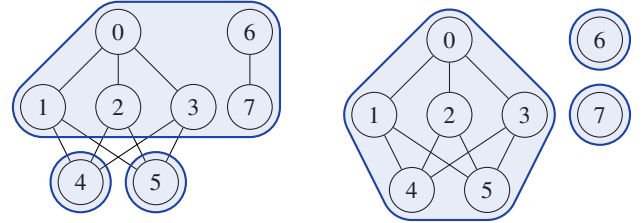
\begin{figure}[ht]
\begin{tikzpicture}[
    main/.style={circle,draw,minimum size=4mm,font=\footnotesize},
    dottedbox/.style={draw, blue, dotted,thick, rounded corners=10pt, inner sep=6pt}
]

  \node[main] (n0) at ( 0, 0) {0};

  \node[main] (n1) at (-1,-1) {1};
  \node[main] (n2) at ( 0,-1) {2};
  \node[main] (n3) at ( 1,-1) {3};

  \node[main] (n4) at (-0.5,-2) {4};
  \node[main] (n5) at ( 0.5,-2) {5};

  \node[main] (n6) at ( 1.75, 0) {6};
  \node[main] (n7) at ( 1.75,-1) {7};

  \begin{scope}[every node/.style={font=\small}]
                \path
                    (n0) edge[-] node[] {} (n1)
                    (n0) edge[-] node[] {} (n2)
                    (n0) edge[-] node[] {} (n3)
                    (n1) edge[-] node[] {} (n4)
                    (n1) edge[-] node[] {} (n5)
                    (n2) edge[-] node[] {} (n4)
                    (n2) edge[-] node[] {} (n5)
                    (n3) edge[-] node[] {} (n4)
                    (n3) edge[-] node[] {} (n5)
                    (n6) edge[-] node[] {} (n7);
    \end{scope}     
    \node (4) at (-0.9, -2.4) {};
    \node (5) at (0.1, -2.4) {};
    \draw[thick,CoalitionColor, fill=CoalitionColor!50, fill opacity=0.2]  \convexpath{n6, n7, n1, n0}{.4cm};
    \draw[thick,CoalitionColor, fill=CoalitionColor!50, fill opacity=0.2] 
    \convexpath{4}{.4cm};
    \draw[thick,CoalitionColor, fill=CoalitionColor!50, fill opacity=0.2]
    \convexpath{5}{.4cm};

\end{tikzpicture}
\qquad
\begin{tikzpicture}[
    main/.style={circle,draw,minimum size=4mm,font=\footnotesize},
    dottedbox/.style={draw, blue, dotted,thick, rounded corners=5pt, inner sep=6pt}
]

  \node[main] (n0) at ( 0, 0) {0};

  \node[main] (n1) at (-1,-1) {1};
  \node[main] (n2) at ( 0,-1) {2};
  \node[main] (n3) at ( 1,-1) {3};

  \node[main] (n4) at (-0.5,-2) {4};
  \node[main] (n5) at ( 0.5,-2) {5};

  \node[main] (n6) at ( 2, 0) {6};
  \node[main] (n7) at ( 2,-1) {7};

\begin{scope}[every node/.style={font=\small}]
                \path
                    (n0) edge[-] node[] {} (n1)
                    (n0) edge[-] node[] {} (n2)
                    (n0) edge[-] node[] {} (n3)
                    (n1) edge[-] node[] {} (n4)
                    (n1) edge[-] node[] {} (n5)
                    (n2) edge[-] node[] {} (n4)
                    (n2) edge[-] node[] {} (n5)
                    (n3) edge[-] node[] {} (n4)
                    (n3) edge[-] node[] {} (n5);
    \end{scope}  
    \node (6) at (1.6, -0.4) {};
    \node (7) at (1.6, -1.4) {};
    \draw[thick,CoalitionColor, fill=CoalitionColor!50, fill opacity=0.2]  \convexpath{n3, n5, n4, n1, n0}{.4cm};
    \draw[thick,CoalitionColor, fill=CoalitionColor!50, fill opacity=0.2] 
    \convexpath{6}{.4cm};
    \draw[thick,CoalitionColor, fill=CoalitionColor!50, fill opacity=0.2] 
    \convexpath{7}{.4cm};
\end{tikzpicture}

\caption{\label{fig:tight}A symmetric \ac{FEG} where the closest CNS coalition has distance 4. There is an edge between agents $i$ and $j$ if and only if $v(i, j) = 1$. Otherwise, it holds that $v(i,j) = -1$.} The original game is on the left and the altered game on the right. Coalitions in $\pi$ and $\pi'$ are highlighted in blue.
\end{figure}
    
    Then, $\pi = \{\{0, 1, 2, 3,6, 7\}, \{4\}, \{5\} \}$ is CNS in $G$ since every agent either has a (symmetric) positive valuation for another agent in her coalition or cannot increase her utility by deviating. Alter the valuation between agents $6$ and $7$, such that $w'(6, 7) = -1$ and denote the altered game by $G'$. Clearly, $\pi' = \{\{0, 1, 2, 3, 4, 5\}, \{6\}, \{7\}\}$ is CNS in $G'$ and it holds that $d(\pi, \pi') = 4$. 
    We claim that for any CNS partition $\tilde{\pi}$ in $G'$, it holds that $d(\pi, \tilde{\pi})\geq 4$. To this end, note that 
    agents $6$ and $7$ must be in a singleton coalition in $\pi$, respectively, since they have negative valuations for all agents in $G'$. Moreover, it holds that $\lvert \pi'(4) \rvert, \lvert\pi'(5)\rvert > 1$ and $\pi'(4)\neq \{4, 5\}$. Hence, we have that $d(\pi, \tilde{\pi}) \geq 4$.
\end{example}

On the other hand, if we do not assume that the game is symmetric finding a close partition is not possible in general.
Below, we give an example of an (non-symmetric) FEG in which the closest CIS partition has distance $\Theta(n)$ after altering one valuation of an agent.

\begin{example}
Define an FEG $G = (N, v)$ with $N = [n]$ and $v_i(j) = 1$ if and only if $j = (i \mod n) + 1$ for all $i, j \in [n]$  (see \Cref{figure:CISNonSym}).
Clearly, the grand coalition $\pi = \{N\}$ is CNS/CIS in $G$ since every agent is a friend of at least one other agent, so no agent can make a CNS/CIS deviation. Alter the valuation of the last agent such that $v'_n(1) = -1$ and denote the altered game by $G'$. We claim that for any CNS/CIS partition $\pi'$ in $G'$, it holds that $\lvert C \rvert \leq 3$ for all coalitions $C \in \pi'$.
To see this, note that for any $C \in \pi'$, there exists some $i \in C$ with $v_j(i) = -1$ for all $j \in C \setminus \{i\}$. Since $i$ can have at most one friend in $C$, she can make a CNS/CIS deviation to the empty coalition if $\lvert C \rvert \geq 4$. Therefore, it holds that $d(\pi, \pi') \geq n - 3$.

    \begin{figure}[ht]
\begin{tikzpicture}[
    main/.style={circle,draw,minimum size=4mm,font=\footnotesize},
    dottedbox/.style={draw, blue, dotted,thick, rounded corners=10pt, inner sep=6pt}
]

  \node[main] (n0) at ( -1, 0) {1};

  \node[main] (n1) at (0.25, 0) {2};
  \node[main] (n2) at ( 1,-1) {3};
  \node[main] (n3) at ( 0.25,-2) {4};

  \node[main] (n4) at (-1,-2) {$5$};

  \node[main] (nn) at ( -1.75,-1) {$6$};

  \begin{scope}[every node/.style={font=\small}]
                \path
                    (n0) edge[->] node[] {} (n1)
                    (n1) edge[->] node[] {} (n2)
                    (n2) edge[->] node[] {} (n3)
                    (n3) edge[->] node[] {} (n4)
                    (n4) edge[->] node[] {} (nn)
                    (nn) edge[->] node[] {} (n0);
    \end{scope}
    \draw[thick,CoalitionColor, fill=CoalitionColor!50, fill opacity=0.2]
    \convexpath{nn, n0, n1, n2, n3,  n4}{.4cm};
\end{tikzpicture}
\qquad
\begin{tikzpicture}[
    main/.style={circle,draw,minimum size=4mm,font=\footnotesize},
    dottedbox/.style={draw, blue, dotted,thick, rounded corners=10pt, inner sep=6pt}
]

  \node[main] (n0) at ( -1, 0) {1};

  \node[main] (n1) at (0.25, 0) {2};
  \node[main] (n2) at ( 1,-1) {3};
  \node[main] (n3) at ( 0.25,-2) {4};

  \node[main] (n4) at (-1,-2) {$5$};

  \node[main] (nn) at ( -1.75,-1) {$6$};

  \begin{scope}[every node/.style={font=\small}]
                \path
                    (n0) edge[->] node[] {} (n1)
                    (n1) edge[->] node[] {} (n2)
                    (n2) edge[->] node[] {} (n3)
                    (n3) edge[->] node[] {} (n4)
                    (n4) edge[->] node[] {} (nn);
    \end{scope}     
    \draw[thick,CoalitionColor, fill=CoalitionColor!50, fill opacity=0.2]
    \convexpath{n0, n1, n2}{.4cm};
    \draw[thick,CoalitionColor, fill=CoalitionColor!50, fill opacity=0.2]
    \convexpath{nn, n3, n4}{.4cm};
\end{tikzpicture}
\caption{An FEG where the closest CNS/CIS partition has distance $n-3$ after altering one valuation of agent $n$ depicted for $n = 6$. There is an arc $(i, j)$ between agents $i$ and $j$ if and only if $v_i(j) = 1$. Otherwise, it holds that $v_i(j) = -1$. The original game is on the left and the altered game on the right. Coalitions in $\pi$ and $\pi'$ are highlighted in blue.}
\label{figure:CISNonSym}
\end{figure}
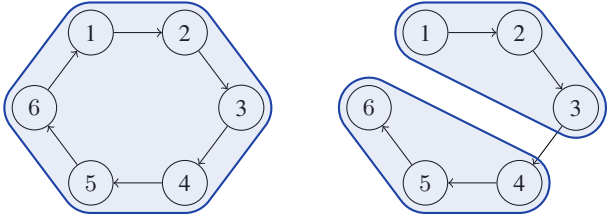
\end{example}

Since such examples exist, it is not surprising that in the non-symmetric setting even for FEGs, AFGs, and AEGs determining the existence of a partition with distance at most $k$ is NP-hard.

\begin{restatable}{theorem}{CINSasym}\label{thm:CISandCNS_single_asymmetric_NP_complete}
    
    \textsc{X-1-1-Altered} is NP-complete for $X \in \{\text{CNS, CIS}\}$ in strict \acp{ASHG} with valuations in $\{-\beta(n), \alpha(n)\}$ for any polynomial-time computable functions $\alpha, \beta : \mathbb{N} \rightarrow \mathbb{Q}_{> 0}$.
\end{restatable}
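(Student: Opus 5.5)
Membership in NP follows from \Cref{lemmaEverythingINNP}: we guess $\pi'$ and verify in polynomial time both the distance and CNS/CIS stability. For hardness we reduce from set cover $(E,\mathcal{S},k)$, with $\mathcal{S}$ a cover of $E$ and $k<\lvert E\rvert,\lvert\mathcal{S}\rvert$, as in \Cref{symmetric_Single_Hardness}. Since $0$-valuations are no longer available, we use the asymmetric cycle mechanism from \Cref{figure:CISNonSym}. That example shows that a single altered arc in a directed ``friendship'' structure can force coalitions to break apart, because an agent whose only friend became an enemy can leave, and under CNS/CIS nobody can object when an agent with no friend holding her leaves. The goal is to make the forced repair cheap exactly when a small cover exists.

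\textbf{Construction.} Take a large anchor coalition $K=Y\cup\{z_1,z_2\}$. Each $y\in Y$ likes only $z_1$ (weight $\alpha$) and dislikes everyone else ($-\beta$), so $y$ is held in place and has an incentive to stay with $z_1$. Agent $z_2$ initially likes $z_1$. The altered arc is $v'_{z_2}(z_1)=-\beta$; after the update $z_2$ has no friend in $K$ and must leave, for CIS as well as CNS, because nobody in $K$ values $z_2$ positively. $z_2$ plays the role of a ``key'' whose departure from $K$ enables element agents to enter. Concretely, each element agent $e$ starts as a singleton. It likes $z_1$ and every $y$ (weight $\alpha$), and dislikes only $z_2$ and the sets $S\ni e$. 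We choose $\lvert Y\rvert$ relative to $\beta/\alpha$ (possible since both are polynomial-time computable, with $\lvert Y\rvert$ polynomially bounded) so that one enemy $z_2$ in $K$ makes joining unprofitable. In addition, the agents of $K$ dislike $e$, so that $e$'s current coalition is never worse for the IS condition. For CIS we instead rely on the CNS part: $e$ is a singleton, so there is no contractual obstruction, and the IS obstruction must be removable. We therefore let the $Y$-agents be neutral-ish toward $e$, and since strictness forbids $0$, we let $y$ like $e$ ($\alpha$). Every agent still has at most one disliked member, so $y$ remains held by $z_1$ and $\pi$ stays stable. Each set agent $S$ starts as a singleton, dislikes all elements $e\in S$ ($-\beta$), and has utility toward $K$ balanced so that it does not deviate in $\pi$. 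Moving $S$ into $K$ is cheap and blocks every $e\in S$ from profitably joining.

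\textbf{Correctness.} We show that a cover $\mathcal{C}$ with $\lvert\mathcal{C}\rvert\le k$ exists iff some CNS (resp.\ CIS) partition has distance at most $k+1$. Forward: move $z_2$ out and move $\mathcal{C}$ into $K$, then check all deviations agent class by agent class. Backward: as in the proof sketch of \Cref{symmetric_Single_Hardness}, argue that $Y$ stays with $z_1$ (within budget $k+1<\lvert Y\rvert$), that $z_2$ must leave (costing one move), and that every element $e$ not blocked by a set in $K$ and not itself moved yields a valid deviation. Mapping moved elements to covering sets then yields a cover of size at most $k$.

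\textbf{Main obstacle.} The hardest step is calibrating the weights so that every agent in $\pi$ is stable in $G$ under the strict two-value restriction, for arbitrary ratios $\alpha/\beta$. In particular, set agents and elements must neither want to join $K$ nor be blocked in unintended ways. I would handle the ratio by replicating agents in polynomially many copies: counts of friends versus enemies are chosen as functions of $\lceil\beta/\alpha\rceil$ and $\lceil\alpha/\beta\rceil$, with case distinction on which is larger. The CNS and CIS versions are verified separately, since the IS consent condition differs between them.
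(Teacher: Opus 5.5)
Your reduction has a genuine gap, and it sits exactly in what you call the main obstacle. Your trigger and your CNS blocking are thresholds on weighted sums. As specified, these thresholds are contradictory, not merely hard to calibrate.

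Nobody in $K$ values $z_2$ positively, so in $G$ agent $z_2$ is not held. Hence $\pi$ can only be CNS or CIS if $u_{z_2}(K)=\alpha-\beta\lvert Y\rvert\ge 0$. This already fails whenever $\alpha(n)<\lvert Y\rvert\beta(n)$, e.g.\ for FEGs and AEGs, and it fails in your CIS variant too. For CNS you also need ``one enemy $z_2$ makes joining unprofitable'' for a singleton $e$ that likes $z_1$ and all of $Y$, i.e.\ $\alpha(\lvert Y\rvert+1)\le\beta$. Together with the first inequality this gives $\alpha\ge\beta\lvert Y\rvert\ge\alpha\lvert Y\rvert(\lvert Y\rvert+1)$, which is impossible for all $\alpha,\beta>0$.

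The problem is structural. In the non-symmetric $1$-$1$ setting, altering $v_{z_2}(z_1)$ changes only $z_2$'s own utility, so your trigger must be a sign flip of a sum. In CNS, the joining side of a deviation is constrained only by the deviator's utility. So once $\alpha(n)>(n-2)\beta(n)$ (e.g.\ AFGs), no set agent can keep a singleton element out of a coalition containing one of its friends. Replication does not help: $\alpha(n)/\beta(n)$ is evaluated at the final number of agents, which grows with the copies. With $\alpha(n)=n\beta(n)$, a single friend outweighs all other agents combined, and the ratio may even be superpolynomial.

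There are smaller slips too. First, $y$ liking $z_1$ does not hold $y$ in place, because the contractual condition concerns agents who value $y$. Second, an element that ``dislikes only $z_2$ and the sets $S\ni e$'' likes the other singleton elements and can simply merge with one, so $\pi$ is not even stable in $G$.

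The missing idea is to enforce coverage by sign conditions. For CNS, the paper places the elements inside the nonsingleton coalition $E\cup\{z_1,z_2,z_3\}$. There, $e$ is valued positively only by the guardian $z_1$ and by the set agents containing it ($v_S(e)=v_e(S)=\alpha$ for $e\in S$), and all other relations of $e$ are hostile. The $z_i$ hold each other cyclically. The single altered arc $v_{z_2}(z_3)=-\beta$ removes the only positive valuation towards $z_3$, who already prefers the clique $Y$. The paper then argues that the $z_i$ end up with $Y$. After that, an element whose coalition contains neither $z_1$ nor a set containing it has only enemies there. Its utility is then negative whatever $\alpha/\beta$ is, and nobody objects to its leaving, so it leaves. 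Hence a CNS partition within distance $k+3$ exists iff a cover of size at most $k$ does. For CIS, the paper gives a separate reduction from X3C. There, joins are blocked by the (again sign-based) IS consent of set agents and of the auxiliary agents $b_{i,j}$, and a counting argument shows that the budget $m/3+3$ forces an exact cover.

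(Be aware that the paper's release cascades are not ratio-free either. For CNS, if $\beta\ge\alpha(\lvert Y\rvert-\lvert E\rvert-1)$, as in AEGs, $z_1$ does not follow $z_3$, and moving $z_3$ alone already yields a CNS partition at distance $1$. For CIS, $z_3$ only wants to leave if $\alpha<\beta(\lvert W\rvert+1)$, which fails in AFGs. The coverage step, however, is a pure sign condition, and that is the ingredient your construction lacks.)
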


\subsection{NS and IS}
As mentioned, the picture is quite different for NS and IS.
We first show NP-hardness for IS if there exists at least one positive and one negative valuation value such that the positive valuation value is at least as large as the negative one.

\begin{restatable}{theorem}{ISsymmetric}\label{IS_completeness_symmetric_FEGs}
    \textsc{IS-1-1-Altered} and \textsc{IS-1-1-Sym-Altered} are NP-complete in strict \acp{ASHG} with valuations in $\{-\beta(n), \alpha(n)\}$, where $\alpha, \beta: \mathbb{N} \rightarrow \mathbb{Q}_{> 0}$ are polynomial-time computable functions with $\alpha(n) \geq \beta(n)$ for all $n \in \mathbb{N}$.
\end{restatable}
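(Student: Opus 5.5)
Membership in NP follows from \Cref{lemmaEverythingINNP}: a candidate $\pi'$ is a certificate, its distance to $\pi$ and its individual stability in $G'$ are both checkable in polynomial time. For hardness I would reduce from set cover, following the template of \Cref{symmetric_Single_Hardness}. Since strict games have no $0$-valuations, the indifferent edges of that construction become $-\beta$ edges, which by themselves keep agents apart. The sign assumption $\alpha\ge\beta$ then makes one friend in a coalition enough to attract an agent unless she has at least two enemies there. I treat \textsc{IS-1-1-Sym-Altered} first; for \textsc{IS-1-1-Altered} I change only one direction of the altered edge, which should leave the argument essentially unchanged.

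\textbf{Construction.} Let $(E,\mathcal S,k)$ be a set cover instance, normalised as before. I would use the blocking feature that distinguishes IS from NS: an agent $e$ cannot IS-deviate into a coalition $C$ if some member of $C$ would lose utility. Take element agents $E$, set agents $\mathcal S$, a hub $z_1$ attached to a large padding clique $Y$ of mutual friends (size polynomial, say $|E|+|\mathcal S|+3$, so that $Y\cup\{z_1\}$ is rigid), and a special agent $z_2$. Set $v(z_1,e)=\alpha$ for every $e\in E$, so each element is attracted to $z_1$'s coalition. Set $v(e,S)=-\beta$ whenever $e\in S$, and $v(e,S)=\alpha$ otherwise. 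A set agent $S$ in $\pi(z_1)$ then vetoes the entry of exactly the elements it contains, which is the IS analogue of the set cover constraint. All remaining edges are $-\beta$, except those needed to make $\pi$ stable.

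In the initial partition $\pi$, $z_2$ sits with $Y\cup\{z_1\}$ and is what vetoes elements from joining: give $v(z_2,e)=-\beta$ and let $v(z_1,z_2)=\alpha$, so every element is blocked by $z_2$. Elements and sets sit in coalitions where they are IS-stable, for instance singletons or small gadget coalitions. The update changes $v(z_1,z_2)$ to $-\beta$. Now $z_2$ is expected to leave, since it likes only $z_1$ and has enemies elsewhere; once it leaves, every element wants to enter $\pi'(z_1)$ unless vetoed. The cheapest repair moves $z_2$ out and moves $k$ set agents into $\pi'(z_1)$ to veto the elements they cover. I would target distance $k+1$.

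\textbf{Correctness.} The forward direction is a routine check, using that each element has at most one enemy-free friendly coalition, so it is indifferent or blocked elsewhere. The reverse direction goes as follows. The clique $Y$ forces $Y\cup\{z_1\}$ to stay together cheaply, and $z_2$ must move, which costs $1$. Any element $e$ left outside $\pi'(z_1)$ without a vetoing set agent there could IS-deviate, because $u_e$ of its own coalition is at most $0$ while joining gives at least $\alpha>0$. Charging each moved element to one set containing it, as in \Cref{symmetric_Single_Hardness}, yields a cover of size at most $k$.

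\textbf{Main obstacle.} The hard part is making $\pi$ IS in $G$ and controlling the utilities of elements once every non-edge is $-\beta$. Two issues need care. First, a set agent inside $\pi'(z_1)$ dislikes $Y$, so it may want to leave; it needs friends there, or it is kept because leaving is blocked. Making $v(S,y)=\alpha$ could instead attract set agents in $\pi$, which $z_2$ must block. Second, elements might have other profitable IS deviations, for example to set agents' coalitions. This is where $\alpha\ge\beta$ is used: I would tune adjacencies so that one friend outweighs one enemy, and verify with a careful case analysis that exactly the intended deviations exist. If this balancing fails, my fallback is to give each set agent and each element a private partner with a mutual $\alpha$ edge, which pins them in place.
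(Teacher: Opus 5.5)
There is a genuine gap. You never fix a construction, and the choices you do commit to fail exactly at the points you defer to the ``main obstacle''. Turning the $0$-edges of \Cref{symmetric_Single_Hardness} into $-\beta$-edges breaks that reduction's mechanism in several places.

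\emph{First}, your $z_2$ ``likes only $z_1$'', so $u_{z_2}(\pi) = \alpha - \beta|Y| < 0$ whenever $\alpha=\beta$, which the statement allows (e.g.\ FEGs). Under IS a move to the empty coalition can never be vetoed, so $\pi$ is not even IS in $G$. Keeping $z_2$ content before and discontent after a drop of $\alpha+\beta$ would require calibrating its number of friends in $Y$ to the ratio $\alpha/\beta$, and you do not address this.

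\emph{Second}, with the default $v(e,y)=-\beta$ you get $u_e(Y\cup\{z_1\}) = \alpha-\beta|Y|<0$. So your claim that joining $\pi'(z_1)$ ``gives at least $\alpha$'' is false. Once $z_2$ leaves, the formerly guarded coalition attracts nobody and the set-cover structure is never triggered. Even an attraction of exactly $\alpha$ would be too weak: an uncovered element $e$ could then be satisfied by a moved set agent $S\not\ni e$ joining it, which breaks your charging argument.

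\emph{Third}, set agents moved into $\pi'(z_1)$ have only enemies there, so they will go alone. ``Kept because leaving is blocked'' is unavailable under IS, where only members of the \emph{target} coalition have a veto.

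\emph{Fourth}, elements and set agents cannot all sit in singletons. The edge $v(e,S)=\alpha$ for $e\notin S$ gives a mutually beneficial IS deviation of $e$ into $\{S\}$.

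Your private-partner fallback repairs neither the second nor the third point.

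The idea you are missing, as the paper does it, is to let the update \emph{lift a veto} rather than flip the sign of someone's utility, and to make the guarded coalition overwhelmingly attractive.
\begin{itemize}
\item The paper uses cliques $X=X_1\cup X_2$ of size $M^2$ with $|X_1|=3|X_2|$, and a much larger clique $Y$ of size $M^3$.
\item Every element likes all of $X$, and the elements sit together in the clique $E$.
\item Each set agent is friends only with $X_1$ and with the elements it does not contain.
\item The initial partition is $\pi=\{E,\{S\}_{S\in\mathcal S},\{z_1\}\cup X,\{z_2\}\cup Y\}$. Here $z_1$, who dislikes all elements, guards $X$. Also, $z_1$ would like to join $\{z_2\}\cup Y$ but is vetoed because $v(z_1,z_2)=-\beta$.
\end{itemize}
Raising $v(z_1,z_2)$ to $\alpha$ lets $z_1$ leave, so $X$ becomes unguarded. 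Every element outside $X$'s coalition then gains roughly $\alpha|X|$ by entering, which exceeds any utility it can obtain elsewhere within distance $k+1$. Only a set agent containing that element can veto the entry.

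Set agents moved into $\mathcal C\cup X$ have utility $\alpha|X_1|-\beta|X_2|-\beta(k-1)\ge 0$; this is where $\alpha\ge\beta$ is used. Unmoved set agents are vetoed by $X_2$. The reverse direction then forces $\{z_1,z_2\}\cup Y\in\pi'$ and extracts the cover exactly as you intended.
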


Note that \Cref{IS_completeness_symmetric_FEGs} covers the case of FEGs and AFGs.
Using a reduction from Exact Cover by 3-Sets (X3C), we can prove NP-completeness of \textsc{IS-1-1-(Sym)-Altered} for the remaining case of AEGs. 
Since NS and IS are equivalent in symmetric AEGs, this immediately implies the hardness of \textsc{NS-1-1-Sym-Altered}. However, we can also utilize this fact to prove hardness for the non-symmetric case.

\begin{restatable}{theorem}{AEGsNP}\label{NS_IS_hardness_AEGs}
    \textsc{X-1-1-Altered} and \textsc{X-1-1-Sym-Altered} are NP-complete for $X \in \{\text{NS, IS}\}$ in AEGs.
\end{restatable}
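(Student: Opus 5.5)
We first note that membership in NP follows from \Cref{lemmaEverythingINNP}: given a candidate $\pi'$, we compute $d(\pi,\pi')$ and check stability in polynomial time. For hardness we reduce from Exact Cover by 3-Sets, with an instance $(U,\mathcal{T})$, $|U| = 3q$.

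The first step is a structural observation that lets us treat NS and IS together in the symmetric case. In a symmetric AEG with valuations in $\{-n,1\}$, a single enemy outweighs all friends. So in any stable partition every agent has no enemy in her coalition and utility equal to her number of coalition partners. An agent $i$ can then only Nash deviate to a coalition $C$ that contains no enemy of $i$. By symmetry every member of $C$ values $i$ at $1$ and does not object, so every Nash deviation is also an individual deviation. Hence NS and IS coincide, and the reduction only has to be carried out once for \textsc{IS-1-1-Sym-Altered}.

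Next comes the gadget. For each $t \in \mathcal{T}$ we create a triple agent $t$ and for each $e \in U$ an element agent $e$. We set $v(e,t)=1$ iff $e \in t$. Element agents of the same triple are friends, and all other pairs are enemies, apart from a small hub. The hub consists of agents $z_1,z_2$ together with a padding set $Y$ of friends of $z_1$, in the spirit of the proof of \Cref{symmetric_Single_Hardness}. Initially every element agent sits in a coalition with the hub, which is stable because the hub coalition is large and enemy-free. We then alter one pair $v'(z_1,z_2)$ from $1$ to $-n$. This destroys the hub coalition and forces the element agents into new enemy-free coalitions.

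For the correctness argument, we want a partition in which each element agent has at least as many friends as in any alternative enemy-free coalition. The coalitions $\{t\}\cup t$ for the triples of an exact cover achieve this with exactly the right utility. We choose $k$ as (number of forced hub moves)$\,+\,4q$, counting $3q$ element moves and $q$ triple moves. The forward direction is then a direct check. For the reverse direction we argue as follows. Every element must end up in an enemy-free coalition in which it has no deviation to a larger friendly group. The budget forces at most $q$ triple agents to move and at least $3q$ element agents to be placed in groups of size $4$. Because of the enemy structure, each such group must be $\{t\}\cup t$, and these groups must be disjoint, which yields an exact cover. We expect calibrating the friend counts to be the main obstacle. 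We must ensure that an element cannot prefer some partially filled triple coalition, and that leftover unchosen triples remain stable as singletons. We would first try adding dummy friends to each triple agent so that the utilities line up exactly.

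Finally, for the non-symmetric problem \textsc{X-1-1-Altered} we use the same instance but alter only $v_{z_1}(z_2)$, leaving $v_{z_2}(z_1)$ unchanged. We then re-verify the NS/IS argument directly, since the equivalence above relied on symmetry. Only the hub agents have asymmetric valuations, and the single hostile direction already prevents $z_1$ and $z_2$ from sharing a stable coalition. Apart from the hub, the game remains symmetric, so the same cover correspondence applies.
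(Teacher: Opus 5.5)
Your NP membership argument and your NS/IS equivalence for symmetric AEGs are correct and match the paper, but the reduction itself has a genuine gap. As written, the starting partition is not even stable. All element agents sit in the hub coalition, yet elements sharing no triple are enemies, and (as you note yourself) stable coalitions in AEGs must be enemy-free. More fundamentally, lowering $v(z_1,z_2)$ to $-n$ only forces $z_1$ and $z_2$ apart. Moving one of them to a singleton leaves the rest of the hub an enemy-free clique that has lost a single member, and nothing in your construction makes any element want to leave it. Absent further gadgets, there is therefore a stable partition at distance~$1$ regardless of the X3C instance. To make the ``destruction'' propagate you would need agents whose stability hinges on exact friend-count ties, and you do not provide any. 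Even if elements were forced out, the budget would not encode exactness. Every element pays one move wherever it lands. Triple agents that start as singletons need not move, so your ``$q$ triple moves'' never arise. Elements can also settle into partial groups $\{t\}\cup S$ with $S\subsetneq t$. The calibration you defer is therefore the entire content of the reduction.

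The paper runs the update in the opposite direction: it raises $v(z_1,z_2)$ from $-n$ to $1$. Initially $z_1$ is the only enemy of the element agents inside the huge clique $\{z_1\}\cup Y\cup Y_{\mathcal{F}}$. After the update $z_1$ joins $\{z_2\}\cup W$, and the $Y$-coalition becomes far better for every element than its own coalition $E\cup X_1$. Elements are supposed to stay put. The only cheap way to block an element $e$ is to insert a set agent $F\ni e$, since set agents are friends of the elements \emph{outside} their set and enemies of those inside it. Admitting $F$ also requires its private enemy $y_F$ to leave, so each set costs two moves. With budget $2|E|/3+1$ this gives $|R_E|+2|R_{\mathcal{F}}|\le 2|E|/3$. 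Hence at least $3|R_E|/2$ elements are uncovered, of which at most $|R_E|$ have moved, which forces $R_E=\emptyset$ and then an exact cover.

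Your non-symmetric NS case has a separate problem. Altering only $v_{z_1}(z_2)$ leaves $v'_{z_2}(z_1)=1$, so under Nash deviations $z_2$ can chase $z_1$ into any coalition that is otherwise friendly to $z_2$, as your hub is. You give no argument that an NS partition even exists in $G'$. The paper avoids this by making the pair asymmetric in $G$ and symmetric in $G'$, so the NS/IS equivalence applies to the altered game.
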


Those results show that \textsc{X-1-1-(Sym)-Altered} remains hard even for very restricted settings. However, for Nash stability, we relied on the fact that IS and NS are equivalent in symmetric AEGs to prove hardness. For FEGs and AFGs, this is not possible.
Here, one needs to provide a reduction by arguing only about the valuations of each agent, without relying on any single agent blocking others from making a deviation.
To this end, we give a reduction from a restricted version of X3C, which allows us to better control how the utilities of agents change after altering the valuation between two of them.

\begin{restatable}{theorem}{NSsymFEGHard}\label{thm:fegs_hardness}
    \textsc{NS-1-1-Altered} and \textsc{NS-1-1-Sym-Altered} are NP-complete in FEGs and AFGs. 
\end{restatable}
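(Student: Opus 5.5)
Membership in NP follows from \Cref{lemmaEverythingINNP}, so the work is NP-hardness of \textsc{NS-1-1-Sym-Altered} in FEGs (valuations $\{-1,1\}$) and in AFGs (valuations $\{-1,n\}$), plus the non-symmetric variant. As the text announces, we reduce from a restricted version of X3C. We take the variant in which every element occurs in exactly three triples, so $|\mathcal{S}| = |E| = 3q$. This regularity fixes the degrees of all gadget agents and lets us compute every agent's utility exactly, rather than relying on a blocking agent as in the IS reductions.

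\textbf{Construction.} The instance has one agent per element $e\in E$ and one agent per triple $S\in\mathcal{S}$, with friendship ($+1$, respectively $+n$) between $e$ and $S$ whenever $e\in S$. We add a large anchor clique $H$ of friends, together with two designated agents $a,b\in H$ whose mutual valuation is altered from friend to enemy. The intended coalition structure is as follows.
\begin{itemize}
\item Before the update, $\pi$ consists of $H$ together with padding agents, while each triple agent sits with its three elements (or the elements sit in coalitions of a fixed, controlled shape).
\item After the update, $a$ loses one friend and gains one enemy, a utility drop of $2$ (respectively $n+1$). This drop is tuned so that $a$ now strictly prefers to join a coalition formed by the triple agents, with utility exactly at the threshold.
\end{itemize}
Stability of that coalition forces every element agent to sit with exactly one chosen triple. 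We set $k$ so that moving $a$ and reorganising $q$ triples with their elements fits into $d(\pi,\pi')\le k$ exactly when an exact cover exists. In an NS partition the element agents then have no profitable Nash deviation, because each has all three of its friends-or-none in a balanced way.

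\textbf{Correctness.} We verify that $\pi$ is NS in $G$ by checking every agent's best response explicitly; in FEGs this amounts to counting friends minus enemies in each coalition. We then prove the two directions:
\begin{itemize}
\item An exact cover yields an NS $\pi'$ within distance $k$.
\item Any NS $\pi'$ within distance $k$ yields an exact cover. Here we argue that the distance budget forbids touching $H$ beyond $a$ (and $b$). The NS condition for each element agent then forces it to be with one of its triples. Finally, the counting $|E|=3q$ together with the budget forces the chosen triples to be disjoint and exactly $q$ in number.
\end{itemize}

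\textbf{Extensions.} For AFGs we reuse the same graph with $+n$ for friendships. Because the friend value dominates every sum of enemies, the argument simplifies to a requirement of having at least one friend in the coalition. However, it changes how padding must be built, so the clique sizes are re-tuned. For the non-symmetric \textsc{NS-1-1-Altered}, we alter only $v_a(b)$. Since NS depends only on the deviating agent's own valuations, the same analysis applies with just $a$ needing to move.

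The main obstacle is the second direction of correctness, specifically ruling out ``cheating'' NS partitions. These are partitions in which element agents cluster with each other, or with padding agents, instead of with triples, and still avoid Nash deviations, which they can do because NS only requires comparing coalitions. To exclude them, we make all element–element and element–padding pairs enemies. We then exploit that in an NS partition an agent with utility below $0$ moves to a singleton, while an agent with a friend elsewhere moves there. The core of the proof is the careful accounting of these utilities under the 3-regular degree structure.
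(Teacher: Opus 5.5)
There is a genuine gap. Your proposal is a template rather than a reduction: every quantitative step (``tuned'', ``re-tuned'', ``careful accounting'') is deferred, and the parts you do fix are inconsistent.

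\emph{The initial partition.} A partition in which each triple agent sits with its three elements cannot exist in RX3C, because every element lies in three triples.

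\emph{The forward direction fails in FEGs under your own choices.} Suppose $e$ and $S$ are friends iff $e\in S$, and element--element pairs are enemies (your fix against cheating partitions). In a coalition made of a chosen triple and its three elements, each element has utility $1-2=-1<0$ and deviates to a singleton. Putting all chosen triples and elements together only adds enemies. With this encoding, an exact cover gives each element exactly one friend among the triples, which cannot pay for the other agents it must share a coalition with.

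\emph{The propagation is never explained.} The update changes only the valuations of $a$ and $b$. A single move by $a$ changes other agents' utilities only for the two coalitions involved, and only by $v(\cdot,a)$. Nothing in your sketch forces the element agents to regroup, or ties the budget to exactly $|E|/3$ triples.

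\emph{Two smaller errors.} An agent with a friend in another coalition need not move there, since a Nash deviation requires a strict improvement. In AFGs, stability does not reduce to having one friend: an agent compares the \emph{numbers} of friends across all coalitions.

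The missing idea is a cascade that strips a whole block of friends from the elements at once, combined with a complement encoding. The paper makes $E$ a clique and lets $e$ and $F$ be friends iff $e\notin F$. It starts from $\pi=\{E\cup W,\ \mathcal{F}\cup T,\ X\cup Y\cup Z\}$ with $|E|=|\mathcal{F}|=3k$ and $|W|=k$. The update turns the enemy pair $\tilde w\in W$, $\tilde x_2\in X_2$ into friends. Then $\tilde w$, and subsequently all of $W$, must join the large clique $X\cup Y\cup Z$. Each element thereby loses $k$ friends and faces the outside option $u_e(W\cup X\cup Y\cup Z)=|W|+|Z|=4k-3$.

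Absorbing the $k$ sets of an exact cover $\mathcal{C}$ gives each element $k-1$ friends and one enemy among them. This brings its utility to exactly $4k-3$, so no strict improvement exists, and $\{E\cup\mathcal{C},\ (\mathcal{F}\setminus\mathcal{C})\cup T,\ W\cup X\cup Y\cup Z\}$ is NS at distance $2k$. Conversely, the budget $2k$ leaves room for at most $k$ absorbed agents besides $W$. The utility requirement then forces each element to lie in at most one absorbed triple, and a short count shows there are exactly $k$ of them, pairwise disjoint, i.e.\ an exact cover.

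The AFG reduction uses the same skeleton with friend value $n$ and re-tuned sizes. The non-symmetric case keeps $v_{\tilde x_2}(\tilde w)=-1$ and alters only $v_{\tilde w}(\tilde x_2)$.
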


\section{Average Distance of Multiple Updates}
\label{sec:AverageAnalysis}

In the previous sections, we showed that, in most of the considered settings, deciding whether a close partition exists is NP-hard. 
In the following, we move away from considering only single updates (in the symmetric setting, this means two agents update their valuation towards each other) and study the case in which a sequence of updates occurs. In this setting, we can prove more positive results.
Let $\mathcal{G}$ be a subclass of \acp{ASHG} in which an $X$ partition always exists, and define $\mathcal{G}(n) \subseteq \mathcal{G}$ to be all games in the subclass with $n$ agents.
Denote by $G^m = (G_0,\dots, G_m)$ a sequence of $m+1$ ASHGs such that $G_{i+1}$ arises from $G_i$ by altering at most one valuation between two agents. Further, let $\pi_0$ be an $X$ partition in $G_0$.
We define the \emph{average distance} of $X$ as \[d_{X}^{\mathcal{G}}(n) = \lim_{m\to \infty} \max_{\substack{G^m \in \mathcal{G}^{m+1}(n) \\ \pi_0}}\min_{\pi^m}\frac{\sum_{i=1}^m d(\pi_{i-1}, \pi_{i})}{m},\] where
$\pi^m = (\pi_1, \dots, \pi_m)$ is a vector of $m$ partitions such that $\pi_i$ is $X$ 
in $G_i$. If $\mathcal{G}$ is clear from the context, we omit the superscript. 
First, we note that  $d_{X}^{\mathcal{G}}(n)$ is in fact well-defined.

\begin{restatable}{lemma}{existencelimes}
    Let $X$ be some stability notion, $n \in \mathbb{N}$, and let $\mathcal{G}$ be a subclass of ASHGs for which an $X$ partition always exists. Then, $d_{X}^{\mathcal{G}}(n)$ exists and is well-defined.
\end{restatable}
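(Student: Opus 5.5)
We need to show that the limit defining $d_X^{\mathcal{G}}(n)$ exists. Write
\[a_m := \max_{G^m,\pi_0}\min_{\pi^m}\sum_{i=1}^m d(\pi_{i-1},\pi_i),\]
so that $d_X^{\mathcal{G}}(n)=\lim_{m\to\infty} a_m/m$.

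The first step is to check that $a_m$ is a well-defined finite number. Since $n$ is fixed, there are only finitely many partitions of $N$ (the Bell number $B_n$). Hence for any fixed sequence the inner minimum is over a finite nonempty set. It is nonempty because $\mathcal{G}$ guarantees an $X$ partition in every $G_i$.

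The maximum over $G^m$ needs more care, because valuations are rational and $\mathcal{G}(n)$ may be infinite. However, the quantity $\sum_i d(\pi_{i-1},\pi_i)$ takes values in $\{0,1,\dots,mn\}$, since every distance is at most $n$. A supremum of a set of integers bounded by $mn$ is attained, so the maximum exists and $0\le a_m\le mn$. Thus $a_m/m\in[0,n]$ is bounded.

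The main step is to show that $(a_m)$ is subadditive, $a_{m+m'}\le a_m+a_{m'}$, and then apply Fekete's lemma. This gives $\lim a_m/m=\inf_m a_m/m$, which is a finite nonnegative number.

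To prove subadditivity, fix a worst-case sequence $G^{m+m'}$ with start $\pi_0$. Split it into the prefix $(G_0,\dots,G_m)$ and the suffix $(G_m,\dots,G_{m+m'})$. For the prefix, choose an optimal response $\pi_1,\dots,\pi_m$; its cost is at most $a_m$. Then $\pi_m$ is $X$ in $G_m$, so $(G_m,\dots,G_{m+m'})$ together with start $\pi_m$ is an admissible instance of length $m'$. It therefore has a response of cost at most $a_{m'}$. Concatenating the two responses gives a valid response for the whole sequence with cost at most $a_m+a_{m'}$, and taking the maximum over sequences yields the claim.

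The concatenation requires the suffix to be a legal input, i.e., a sequence in $\mathcal{G}(n)$ in which consecutive games differ in at most one valuation, together with an $X$ starting partition. This holds by construction, and the max over $\pi_0$ in the definition is exactly what makes the argument work.

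The only subtle point I expect is the attainment of the max, because the class may be infinite. This is handled by the integrality of distances. Alternatively, one can define $a_m$ with $\sup$ and note that it coincides with the maximum.
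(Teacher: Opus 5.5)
Your proof is correct and takes essentially the paper's route. Both arguments rest on concatenating optimal responses to consecutive blocks of updates, which is legitimate because the definition maximizes over all $X$ starting partitions $\pi_0$; the paper then reproves the conclusion of Fekete's lemma by hand (a Bolzano--Weierstrass subsequence plus an $\epsilon$-argument), whereas you obtain subadditivity explicitly and cite Fekete directly. Your observation that the maximum over a possibly infinite class is attained because the costs are bounded integers is a point the paper leaves implicit.
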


In the previous section, we saw changing a single valuation in a strict symmetric game might require the moving of many agents to uphold CNS stability if there are at least two negative valuation values. However, when considering a sequence of updates, \Cref{alg_closeCNS} allows us to find CNS-stable partitions for which we can bound the average distance.

 \begin{restatable}{theorem}{sequenceBoundCNS}
     
 \label{thm:CNS_average_distance}
    Let $G_0, \dots, G_m$ be a sequence of strict symmetric \acp{ASHG}, where $G_{i+1}$ arises from $G_i$ by altering a valuation between two agents, and let $\pi_0$ be a CNS partition in $G_0$. Then, \Cref{alg_closeCNS} computes partitions $\pi_1, \dots, \pi_m$, where $\pi_i$ is CNS in $G_i$, and
    \begin{align*}
        \sum_{i = 1}^{m} d(\pi_{i - 1}, \pi_{i}) \leq 4m + \phi(\pi_0) - \phi(\pi_m).
    \end{align*}
\end{restatable}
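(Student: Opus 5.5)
The plan is to apply \Cref{thm:general_CNS_distance_bound} once per update and sum the resulting bounds, so that the singleton terms telescope. Concretely, I set $\pi_i$ to be the output of \Cref{alg_closeCNS} on input $G_{i-1}$, the CNS partition $\pi_{i-1}$, and the pair of agents whose valuation differs between $G_{i-1}$ and $G_i$. If $G_i = G_{i-1}$, I simply take $\pi_i = \pi_{i-1}$, which has distance $0$ and leaves $\phi$ unchanged. I will argue by induction on $i$ that $\pi_i$ is CNS in $G_i$. The base case is the hypothesis on $\pi_0$. For the step, $G_{i-1}$ and $G_i$ are strict symmetric \acp{ASHG} differing in a single symmetric valuation, and $\pi_{i-1}$ is CNS in $G_{i-1}$; this is exactly an instance of \textsc{CNS-1-1-Sym-Altered}. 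So \Cref{thm:general_CNS_distance_bound} applies and yields that $\pi_i$ is CNS in $G_i$ together with
\[
d(\pi_{i-1}, \pi_i) \leq 4 + \phi(\pi_{i-1}) - \phi(\pi_i).
\]

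Summing this bound over $i = 1, \dots, m$ gives
\[
\sum_{i=1}^m d(\pi_{i-1},\pi_i) \leq 4m + \sum_{i=1}^m \bigl(\phi(\pi_{i-1}) - \phi(\pi_i)\bigr) = 4m + \phi(\pi_0) - \phi(\pi_m),
\]
which is the claimed inequality.

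The main point to check is that \Cref{alg_closeCNS} is well-defined and terminates on each instance, so that $\pi_i$ actually exists. Its final phase performs arbitrary CNS deviations until none remain. This terminates in strict symmetric \acp{ASHG} by the standard potential argument of \citet{BoJa02a}: every Nash deviation strictly increases the potential $\sum_{C}\sum_{\{x,y\}\subseteq C} v(x,y)$, and only finitely many partitions exist. I also need to confirm that each $G_i$ is strict whenever the modified valuation stays nonzero; this is part of the hypothesis that the whole sequence consists of strict games. Beyond these checks, I expect no real obstacle: the theorem is a telescoping consequence of the per-step bound in \Cref{thm:general_CNS_distance_bound}.
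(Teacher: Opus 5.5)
Your proposal is correct and matches the paper's proof: apply \Cref{thm:general_CNS_distance_bound} to each consecutive pair $(G_{i-1}, G_i)$ with $\pi_{i-1}$ as input, then telescope the singleton terms. Your separate termination check via the potential of \citet{BoJa02a} is valid but not needed, since the paper's proof of \Cref{thm:general_CNS_distance_bound} already bounds the number of CNS deviations in \Cref{alg_closeCNS} by $\phi(\pi)+2$.
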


\begin{proof}
    By \Cref{thm:general_CNS_distance_bound}, each $\pi_i$ is CNS in $G_i$ and we have that 
    \begin{align*}
        \sum_{i = 1}^{m } d(\pi_{i-1}, \pi_{i}) &\leq 4m + \sum_{i = 1}^{m } \phi(\pi_{i-1}) - \phi(\pi_{i}) \\
        &= 4m + \phi(\pi_0) - \phi(\pi_m).
    \end{align*}
\end{proof}

Together with \Cref{thm:CIS_FEGS_polynomial_time}, we can show a constant bound on $d_X(n)$ for $X \in \{\text{CNS, CIS}\}$ in strict symmetric games.

\begin{theorem}
Let $\mathcal{G}$ be the family of strict symmetric \acp{ASHG}. Then, it holds that $d_{\text{CNS}}^{\mathcal{G}}(n) \leq 4$ and 
$d_{\text{CIS}}^{\mathcal{G}}(n) \leq 3$.
\end{theorem}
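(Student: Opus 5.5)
We bound the max--min expression in the definition of $d_X^{\mathcal{G}}(n)$ from above. For every $m$, every sequence $G^m \in \mathcal{G}^{m+1}(n)$ and every starting partition $\pi_0$, we exhibit one particular sequence $\pi^m$ of stable partitions. The minimum over $\pi^m$ is at most the average distance of this witness sequence. So it suffices to show that this average is at most $4 + c/m$ for CNS and $3$ for CIS, with $c$ independent of the sequence. Taking the maximum and then the limit $m\to\infty$ then gives the claim. The limit exists by the preceding lemma, and strict symmetric \acp{ASHG} always admit NS (hence CNS and CIS) partitions by \citet{BoJa02a}, so the lemma applies.

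For CNS, we run \Cref{alg_closeCNS} iteratively. We start with $\pi_0$, and in step $i$ we feed the CNS partition $\pi_{i-1}$ of $G_{i-1}$ and the altered pair. This produces $\pi_i$, which is CNS in $G_i$. One caveat: a step in $G^m$ may alter no valuation at all. In that case we set $\pi_i = \pi_{i-1}$, which has distance $0$ and does not change $\phi$. By \Cref{thm:CNS_average_distance}, applied with the trivial steps contributing $0$, we get
\[
\sum_{i=1}^m d(\pi_{i-1},\pi_i) \le 4m + \phi(\pi_0) - \phi(\pi_m) \le 4m + n,
\]
since $0 \le \phi(\cdot) \le n$. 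Dividing by $m$ gives the bound $4 + n/m$, uniformly over all sequences and all $\pi_0$. Since $n$ is fixed while $m \to \infty$, the limit is at most $4$. The telescoping of $\phi$ is exactly what lets us absorb the non-constant term of \Cref{thm:general_CNS_distance_bound}. This is the key point, and it is already handled by the earlier theorem.

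For CIS, we use \Cref{thm:CIS_FEGS_polynomial_time} at each step. Given $\pi_{i-1}$ CIS in $G_{i-1}$, we run any sequence of CIS deviations in $G_i$. This converges to a CIS partition $\pi_i$ with $d(\pi_{i-1},\pi_i)\le 3$; unchanged steps again cost $0$. Hence the sum is at most $3m$, so every average is at most $3$, and so is the limit.

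The only point requiring care, rather than a real obstacle, is the order of quantifiers. The witness sequence is allowed to depend on the whole $G^m$ and on $\pi_0$, because we only need an upper bound on the minimum. The additive error $n/m$ is uniform over the maximum, so it vanishes in the limit.
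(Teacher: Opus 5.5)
Your proposal is correct and is essentially the paper's argument. For CNS you telescope $\phi$ via \Cref{thm:CNS_average_distance} and bound $\phi(\pi_0)-\phi(\pi_m)\le n$, so the excess $n/m$ vanishes in the limit, and for CIS you apply the per-update bound of $3$ from \Cref{thm:CIS_FEGS_polynomial_time} at every step; your explicit handling of updates that alter no valuation and of the quantifier order is extra care the paper leaves implicit.
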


Since any CIS deviation is a Pareto improvement, we can prove that $d_{\text{CIS}}(n)$ is constant in FENGs in the non-symmetric case.

\begin{restatable}{theorem}{CISasymmetricPotential}
    
\label{theorem:CIS_FEG_amortized}
    In non-symmetric FENGs, it holds that $d_{\text{CIS}}(n) \leq 2$. 
\end{restatable}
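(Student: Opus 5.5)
Throughout, the sequence of updates is $G_0,\dots,G_m$ in non-symmetric FENGs, where each step alters at most one valuation $v_i(j)$. The plan is to run CIS dynamics after each update, starting from the current partition. Each $\pi_i$ is then CIS in $G_i$ by definition, and $d(\pi_{i-1},\pi_i)$ is at most the number of deviations performed. The theorem therefore reduces to bounding the \emph{total} number of CIS deviations over all $m$ steps by $2m + c(n)$, where $c(n)$ depends only on $n$. Dividing by $m$ and letting $m\to\infty$ then gives $d_{\text{CIS}}(n)\le 2$.

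\textbf{Potential.} I will use the social welfare potential $\Phi = \text{SW}(G_i,\pi)$. In a FENG, $\Phi$ is an integer lying in $[-n^2, n^2]$.

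\textbf{Deviations increase $\Phi$.} A CIS deviation of agent $x$ from $C$ to $D$ is a Pareto improvement. The deviator gains at least $1$, since utilities are integers, and every member of $C$ and of $D$ weakly gains. Since this is an ASHG, no other agent is affected. Hence each deviation raises $\Phi$ by at least $1$; this also shows that the dynamics terminate.

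\textbf{Updates decrease $\Phi$ by a bounded amount.} Changing one valuation $v_i(j)$ within $\{-1,0,1\}$ changes only $u_i$, and by at most $2$. So the partition is unchanged at an update, and $\Phi$ drops by at most $2$.

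\textbf{Telescoping.} Let $D_i$ be the number of deviations performed after update $i$. Each deviation raises $\Phi$ by at least $1$, each update lowers it by at most $2$, and $\Phi$ is confined to $[-n^2,n^2]$, which gives
\[\sum_{i=1}^m D_i \le 2m + \Phi_{\text{end}} - \Phi_{\text{start}} \le 2m + 2n^2.\]
Combined with $d(\pi_{i-1},\pi_i)\le D_i$, the average over $m$ steps is at most $2 + 2n^2/m$, which tends to $2$. The definition takes a min over partition sequences, so exhibiting this algorithmic sequence suffices, even though the max ranges over all $G^m$ and all $\pi_0$.

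\textbf{Expected obstacle.} The one delicate point is the claim that the welfare drop at an update is at most $2$. It holds only because a single directed valuation changes, so only the one agent $i$ is affected, and only when $j\in\pi(i)$. If the intended update model instead allows symmetric pairs, the drop could be $4$, and I would have to recheck that a better accounting recovers the constant $2$. For the non-symmetric one-valuation update the argument above is tight enough.
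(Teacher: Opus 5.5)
Your proposal is correct and follows essentially the same argument as the paper. Both use social welfare as a potential: each CIS deviation is a Pareto improvement that raises it by at least $1$, each single directed update $v_i(j)$ lowers it by at most $2$, and welfare is bounded in $[-n(n-1), n(n-1)]$, so there are at most $2m + O(n^2)$ deviations in total; your reading of an update as changing one directed valuation is also the one the paper uses.
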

\begin{proof}[Proof sketch]
A deviating agent $a$ increases her utility at least by $1$. Furthermore, since it is a CIS deviation, it does not decrease the utility of any other agent. On the other hand, a valuation update can decrease the social welfare by at most $2$. Since the social welfare of any partition always lies in $\{-n(n-1), \dots, n(n-1)\}$, the claim follows.
\end{proof}

In the symmetric case, this holds for all stability notions we consider since any Nash deviation in a symmetric FENG increases the social welfare \citep{BoJa02a}.

\begin{restatable}
{theorem}{thmsymfegs} \label{thm:potential}
    In symmetric FENGs, it holds that $d_X(n) \leq 2$ for $X \in \{\text{NS, IS, CNS, CIS}\}$.
\end{restatable}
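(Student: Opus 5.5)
We use social welfare as a potential function, as in the proof of \Cref{theorem:CIS_FEG_amortized}. The strategy for the sequence $G_0,\dots,G_m$ is simple. Set $\pi_0$ as given. For each $i\ge 1$, start from $\pi_{i-1}$ and let agents perform $X$ deviations in $G_i$, in any order, until none remains; call the result $\pi_i$. Since $G_i$ is symmetric, every $X$ deviation is in particular a Nash deviation. By \citet{BoJa02a}, a Nash deviation of agent $j$ from $C$ to $C'$ changes $\text{SW}$ by exactly $2\bigl(u_j(C'\cup\{j\})-u_j(C)\bigr)$. In a FENG utilities are integers, so each deviation raises $\text{SW}(G_i,\cdot)$ by at least $2$. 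The welfare is bounded, so the process terminates, and $\pi_i$ is $X$ in $G_i$. Each deviation moves one agent, so $d(\pi_{i-1},\pi_i)$ is at most the number of deviations $t_i$ performed in round $i$.

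Next we bound how much a single update can cost in welfare. An update changes $v(a,b)$ to $v'(a,b)$ with both values in $\{-1,0,1\}$. It affects $\text{SW}(\cdot,\pi)$ only if $a$ and $b$ share a coalition, and then by $2(v'(a,b)-v(a,b))\ge -4$. Hence $\text{SW}(G_i,\pi_{i-1})\ge \text{SW}(G_{i-1},\pi_{i-1})-4$, and after the deviations $\text{SW}(G_i,\pi_i)\ge \text{SW}(G_i,\pi_{i-1})+2t_i$. Telescoping gives
\[
2\sum_{i=1}^m t_i \le 4m + \text{SW}(G_m,\pi_m)-\text{SW}(G_0,\pi_0) \le 4m + 2n(n-1).
\]
Therefore $\sum_i d(\pi_{i-1},\pi_i)\le 2m+n(n-1)$. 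Dividing by $m$ and letting $m\to\infty$ yields $d_X(n)\le 2$. The limit is well-defined by the preceding lemma, and the bound holds for every sequence and every $\pi_0$, so it also bounds the maximum.

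The main obstacle is getting the constant $2$, which requires the two factors of $2$ to match. A Nash deviation gains at least $1$ for the deviator, and symmetry doubles this to a welfare gain of at least $2$. A symmetric update loses at most $2$ per endpoint, so at most $4$ in total. I will check both of these carefully, in particular that the symmetric welfare change is exactly twice the deviator's gain even when $j$ leaves or enters a singleton coalition, which follows by splitting $\text{SW}$ into the edge sums inside $C$ and $C'$. I will also confirm that every FENG is within the class in which $X$ partitions always exist. This holds because symmetric games admit NS partitions, and NS implies the other three notions.
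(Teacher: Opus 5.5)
Your proposal is correct and follows essentially the same argument as the paper. Social welfare serves as a potential: each $X$ deviation, being a Nash deviation in a symmetric FENG, raises it by at least $2$, each symmetric update lowers it by at most $4$, and telescoping over the range $[-n(n-1), n(n-1)]$ bounds the total number of moves by $2m + n(n-1)$. Your exact identity $2\bigl(u_j(C'\cup\{j\}) - u_j(C)\bigr)$ for the welfare change justifies the factor $2$ more cleanly than the paper's informal remark that symmetry gives a gain to ``at least one other agent''.
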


Finally, we show that for unbounded valuations, the statement no longer holds.

\begin{restatable}{theorem}{ISUpDown}
    \label{IS-UpAndDown}
    In strict symmetric \acp{ASHG}, it holds that $d_X(n) \geq \lfloor \frac{n}{2}\rfloor$ for $X \in \{\text{NS, IS}\}$. 
\end{restatable}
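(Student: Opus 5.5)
The plan is to build, for every $n$, one strict symmetric game and an infinite periodic update sequence. In this sequence the $X$-stable partitions ($X\in\{\text{NS},\text{IS}\}$) of consecutive games force a large block of agents to switch sides every two updates. Since $d_X$ is defined with a $\max$ over sequences, one bad family suffices. The target is an average of about $n/2$ moves per update.

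\textbf{Construction.} Take two hubs $a,b$, two toggle agents $c,d$, and a block $M$ of the remaining $n-4$ agents. Fix a large $W$, a constant $K$ with $|M|<K\ll W$, and a tiny $\delta>0$. Set the valuations as follows.
\begin{itemize}
\item $v(a,b)=-W$, $v(c,b)=v(d,a)=v(c,d)=-W$.
\item $v(m,a)=v(m,b)=1$ and $v(m,m')=\delta$ for $m,m'\in M$.
\item $v(c,m)=v(d,m)=-1$.
\item $v(c,a)$ and $v(d,b)$ each take the value $+K$ or $-K$; these are the only valuations ever updated.
\end{itemize}
All values are nonzero, so the game is strict and symmetric. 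The update sequence repeats four steps: set $v(c,a)=K$; set $v(c,a)=-K$; set $v(d,b)=K$; set $v(d,b)=-K$. Every step changes exactly one pair.

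\textbf{Step 1: characterize the stable partitions in each phase.} Consider the phase $v(c,a)=K$, $v(d,b)=-K$. I will show that every NS or IS partition has the following form.
\begin{itemize}
\item $a$ and $b$ are separated and $d$ is a singleton, because of the $-W$ and $-K$ values.
\item $c$ is with $a$, because $K>|M|$ so $c$ still gains from joining $a$ even when all of $M$ is there.
\item Every $m\in M$ is with $b$. Staying with $a$ and $c$ gives $m$ at most $1-1+\delta(|M|-1)<1$, whereas joining $b$ gives at least $1$.
\end{itemize}
For IS the last point additionally needs $b$'s coalition to accept $m$. This holds because $b$ and every other $M$-member there value $m$ positively, and $d$ is not present. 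Symmetrically, in the phase $v(d,b)=K$ all of $M$ is with $a$. In the neutral phases (both toggles negative) the partition in which $M$ stays where it was is stable, so these updates cost nothing but also cannot help. I also have to rule out partitions where $M$ splits off on its own or away from both hubs. This is where $\delta$ small and the utility $1$ for sitting with a hub enter: a member of such a split-off group would gain by moving to the hub. This is routine checking.

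\textbf{Step 2: count.} Within each block of four updates, $M$ is forced from $b$'s side to $a$'s side and back again. Any sequence of $X$ partitions therefore moves each $m\in M$ at least twice per four updates. Because every agent in $M$ must change coalition, the distance $d$ between consecutive forced configurations is at least $|M|$ moves in each direction. This gives an average of at least $|M|/2=(n-4)/2$ per update, and letting $m\to\infty$ passes this bound to the limit in $d_X(n)$.

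\textbf{Main obstacle.} The main difficulty is closing the gap between $(n-4)/2$ and the claimed $\lfloor n/2\rfloor$. I would first try to merge roles, for example letting $c$ and $d$ be the hubs themselves, or letting one toggle agent serve both sides. The aim is either that every update is a forcing update or that fewer auxiliary agents are needed. If that fails, I would make the toggles alternate directly, $v(c,a)\colon K\to -K$ simultaneously with $d$ becoming attractive, via a design where the attractive state of one side is exactly the release of the other. The secondary risk is the exhaustive stability check in Step 1 for IS, where blocking by incumbents could allow extra stable partitions. The $-1$ values of $c,d$ toward $M$ must be chosen so that they do not let $c$ veto $M$ in the wrong phase.
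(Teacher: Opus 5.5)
There is a genuine gap, and it goes deeper than the shortfall $(n-4)/2<\lfloor n/2\rfloor$ that you flag yourself (that shortfall alone already means the statement is not established). Your counting in Step~2 treats coalitions as labeled by their hubs. But $d$ is the minimum number of single-agent moves between \emph{unlabeled} partitions, so instead of moving all of $M$ you can move the hubs. From the forced phase-1 partition $\{\{a,c\},\{b\}\cup M,\{d\}\}$ you reach the forced phase-3 partition $\{\{b,d\},\{a\}\cup M,\{c\}\}$ in two moves: $a$ joins $\{b\}\cup M$, then $b$ joins $\{d\}$. The neutral phases cost one move each, since $c$ (resp.\ $d$) now dislikes everyone and must leave its hub; so your remark that they ``cost nothing'' is also slightly off. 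Hence your periodic sequence admits stable responses costing about $6$ moves per $4$ updates. It certifies only a constant average distance, nothing linear in $n$. Your Step~1 characterization is essentially right, though for IS the claim ``$c$ is with $a$'' should be argued via $a$ moving to $\pi(c)$, since members of $M$ veto $c$. It does not help, however, because the partitions it forces are close to one another. The repairs you sketch (merging roles, simultaneous toggles) keep the same shortcut as long as the forced change is only which hub the block is attached to.

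What is missing is an update that forces a change in \emph{which agents are grouped together} on a scale of $n/2$, for example merging and splitting two halves. The paper does exactly this. It takes $N=X\cup Y\cup\{z_1,z_2\}$ with $|X|=|Y|=\ell$ and $n=2\ell+2$, and sets $v(z_1,x)=v(z_2,y)=n$. Every other pair except $\{z_1,z_2\}$ is valued at $1$, and $v(z_1,z_2)$ alternates between $-n^2$ and $n^2$. Every agent is valued positively by everyone except across the pair $z_1,z_2$, so the relevant IS deviations are never vetoed. With $-n^2$, the hubs must be separated. Each $x$ gets at least $n$ with $z_1$ but at most $n-2$ without, so $\{\{z_1\}\cup X,\{z_2\}\cup Y\}$ is the unique IS partition. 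With $n^2$, the grand coalition is the unique IS partition. Since NS implies IS and symmetric games always admit NS partitions, these are also the unique NS partitions. Every single update therefore forces a switch between two coalitions of size $n/2$ and $\{N\}$, which costs $n/2$ moves. For odd $n$, one adds an agent with all-negative valuations, who stays a singleton.
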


\begin{proof}
First, suppose that $n$ is even, i.e., $n = 2\ell + 2$ for some $\ell \in \mathbb{N}$.
Define a strict symmetric \ac{ASHG} $G = (N, v)$ (see \Cref{fig:IS_updown_app}) with $N = X \cup Y \cup \{z_1, z_2\}$, where $\lvert X \rvert = \lvert Y \rvert  = \ell$ and the valuations
\begin{itemize}
    \item $v(x, x') = v(x, y) = v(y, y') = 1$ for all $x, x' \in X$ and $y, y' \in Y$
    \item $v(z_1, x) = v(z_2, y) = n$ for all $x \in X, y \in Y$,
    \item $v(z_1, y) = v(z_2, x) = 1$ for all $x \in X, y \in Y$,
    \item $v(z_1, z_2) = -n^2$.
\end{itemize}

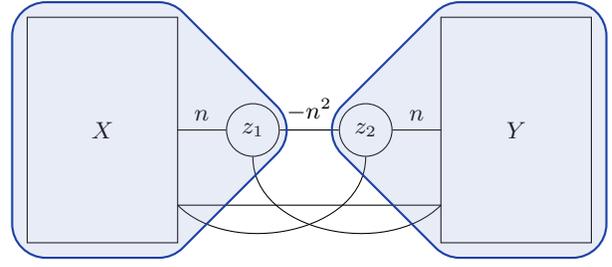
\begin{figure}[ht]
    \centering
\begin{tikzpicture}[>=stealth,
        every node/.style={font=\small},
        main/.style={circle,draw,minimum size=7mm,font=\small},
        rect/.style={draw, minimum width=2cm, minimum height=3cm, align=center}]
        
    \node[rect] (U)  at (-2.75,0) {$X$};
    \node[rect] (S*)  at (2.75, 0) {$Y$};
    
    \node[main] (z)  at (-0.75, 0) {$z_1$};
    \node[main] (z2) at (0.75, 0) {$z_2$};

    \begin{scope}[every node/.style={font=\small}]
                \path
                    (z) edge[-] node[above] {$-n^2$} (z2)
                    (z) edge[-, out = 270, in = 225] node[above] {} (1.75, -1)
                    (z2) edge[-, out = 270, in = -45] node[above] {} (-1.75, -1);
            \end{scope}     
            
    \draw[-] (z) -- node[above] {$n$} (-1.75, 0);
    \draw[-] (z2) -- node[above] {$n$} (1.75, 0);
    \draw[-] (-1.75, -1) -- node[above] {} (1.75, -1);

    \node (Xtl) at (-3.5, 1.25) {};
    \node (Xtr) at (-2., 1.25) {};
    \node (Xbl) at (-3.5, -1.25) {};
    \node (Xbr) at (-2., -1.25) {};
    \draw[thick,CoalitionColor, fill=CoalitionColor!50, fill opacity=0.2] 
    \convexpath{Xbr, Xbl, Xtl, Xtr, z}{.45cm};

    \node (Ytl) at (3.5, 1.25) {};
    \node (Ytr) at (2., 1.25) {};
    \node (Ybl) at (3.5, -1.25) {};
    \node (Ybr) at (2., -1.25) {};
    \draw[thick,CoalitionColor, fill=CoalitionColor!50, fill opacity=0.2] 
    \convexpath{Ybl, Ybr,z2, Ytr, Ytl}{.45cm};

\end{tikzpicture}
\caption{\label{fig:IS_updown_app}Illustration of $G$ in \Cref{IS-UpAndDown}. Rectangles are cliques with valuations of $1$, and edges without an edge weight correspond to valuations of $1$. Coalitions in $\pi$ are highlighted in blue.}
\end{figure}

We claim that $\pi = \{\{z_1\} \cup X, \{z_2\} \cup Y\}$ is the only IS partition in $G$. To see this, note that in any IS partition $\tilde{\pi}$, we have that $\tilde{\pi}(z_1) \neq \tilde{\pi}(z_2)$ since otherwise $u_{z_i}(\tilde{\pi}(z_i)) \leq -n^2 + \frac{n^2 - n - 2}{2} < 0$. Moreover, for any $x \in X$ and $y \in Y$, we have that $u_x(N \setminus \{z_1\}) = n - 2 < n = v(x, z_1)$ and $u_y(N \setminus \{z_2\}) = n - 2 < n = v(y, z_2)$. Since $v(z_1, z_2)$ is the only negative valuation in $G$, it follows that $\tilde{\pi} = \pi$.

Alter the valuation between agents $z_1$ and $z_2$, such that $v(z_1, z_2) = n^2$. Then, by an analogous argument, it follows that the grand coalition $\pi' = \{N\}$ is the only IS partition in $G'$. 

Since any NS partition is IS and every symmetric \ac{ASHG} admits at least one NS partition \citep{BoJa02a}, it follows that $\pi$ and $\pi'$ are the only NS partitions in $G$ and $G'$, respectively. 

By alternating between $G$ and $G'$, it follows that $n \geq d_{X}(n) = \frac{n}{2} \in \Theta(n)$ since $d(\pi, \pi') = \frac{n}{2}$.

If $n$ is odd, consider the same construction as before, but with an additional agent $\tilde{a}$ that has a negative valuation for every other agent. In any NS or IS partition this agent will be a singleton coalition, so the claim follows completely analogously to the case where $n$ is even.
\end{proof}

\section{Conclusion}
In this paper, we proved the NP-completeness of deciding whether a close stable partition exists in a symmetric \acp{ASHG} after an agent has changed her preferences for another agent for the stability notions NS, IS, CNS, CIS. 
This holds even if the valuations are restricted to $\{-1, 0, 1\}$. 
However, for such restricted valuations and sufficiently long sequences of valuation updates, one can compute a sequence of stable partitions such that the average distance is constant, which shows a clear contrast to the previous hardness result.

In strict symmetric \acp{ASHG}, finding a closest stable partition can be computed in polynomial time for CIS, and for CNS under the additional restriction that there is only one negative valuation value. Without this restriction, the problem is NP-complete, but for sequences of valuation updates of length $\Omega(n)$, our algorithm computes a sequence of CNS partitions for which the average distance is constant. 
For NS and IS, the problem remains NP-complete in symmetric \acp{FEG}, AFGs, and AEGs.

We analyzed the average distance $d_X(n)$ over the worst possible sequence of valuation updates, where $d_X(n)$ only depends on the considered stability notion and subclass of games. We show that $d_X(n) \in \Theta(n)$ for $X \in \{\text{NS, IS}\}$ in strict symmetric \acp{ASHG}. It is an interesting direction for future research to study how $d_X(n)$ behaves for other subclasses of \acp{ASHG} and stability notions.


\appendix

\section{Missing Proofs in Section \ref{sec:Results}}

In this appendix, we present the proofs of \Cref{sec:Results} missing from the main body of our paper.
\everythingInNOP*
\begin{proof}
    Computing the distance between two partitions can be done in polynomial time \citep{day1981complexity}. Moreover, one can check in polynomial time whether an agent exists that can make an $X$ deviation by iterating over all other coalitions in the partition.
\end{proof}

\lemmaDistanceK*
\begin{proof}
    We do an induction over $k$.
    Let $k = 1$: Every agent can join another coalition or form a new one. Since there exist at most $n$ coalitions and at most $n-1$ coalitions different from the one the agent is in, there exist at most $n$ deviations for every agent. This gives an upper bound of $n^2$ for the total number of partitions with distance $1$, concluding the induction base.
    Let the statement hold for some $k$. Now consider all structures with distance $k+1$.
    By definition, there exists a path of length $k+1$.
    By induction hypothesis, there are at most $n^{2k}$ partitions with distance $k$, and from each of them at most $n^{2}$ partitions can be reached. 
    Therefore, in total, we can upper bound the number of partitions with distance $k+1$ by $n^{2k+2}$.
\end{proof}

For the NP reductions, it is convenient to argue which agents have moved between two partitions $\pi, \pi'$. Nevertheless, from a given starting partition $\pi$, a partition $\pi'$ can be reached through different paths. Consider the following example with $3$ agents and partitions $\pi = \{\{1,2\},\{3\}\}$, as well as $\pi' = \{\{1,2,3\}\}$. Then, both $1$ and $2$ can move to the coalition containing $3$, or agent $3$ can move to the coalition containing $1$ and $2$. Observe that $d(\pi,\pi') = 1$ and therefore, moving $1$ and $2$ would not be on a shortest path. For our reductions, we are interested in deviations on shortest paths since we are interested in the distance between $\pi$ and $\pi'$.

\thmeverythingISNPhard*

    We first prove the result for \textsc{X-1-1-Sym-Altered}. Afterwards, we explain how to adapt the proof for \textsc{X-1-1-Altered}.
    
    \begin{proof}
    We first prove the statement for \textsc{X-1-1-Sym-Altered} and the case $\beta(n) \geq \alpha(n) \geq 0$ by a reduction from the set cover problem for all $X \in \{\text{IS},\text{CIS},\text{NS}, \text{CNS}\}$. Then, we argue that the same reduction works for general $\alpha(n)$ and $\beta(n)$ for IS and CIS.
    
    Let $(E, \mathcal{S}, k)$ be a set cover instance. Without loss of generality, we can assume that $k < \lvert E \rvert, \lvert \mathcal{S} \rvert$ and that $\mathcal{S}$ is a set cover of $E$, as otherwise the problem is trivial.
    Let $G$ be a symmetric \ac{ASHG} (see \Cref{fig:thmOneGeneral}) with agent set $N = E \cup \mathcal{S} \cup Y \cup \{z_1, z_2\}$, where $\lvert Y \rvert = \lvert E \rvert + \lvert \mathcal{S} \rvert + 3$, and valuations in $\{-\beta, 0, \alpha\}$ with $\beta = \beta(\lvert N \rvert)$ and $\alpha = \alpha(\lvert N \rvert)$ defined as follows:
    \begin{itemize}
        \item $v(e, S) = -\beta$ for all $S \in \mathcal{S}$ and $e \in S$,
        \item $v(z_1, e) = \alpha$ for all $e \in E$,
        \item $v(z_2, e) = v(z_2, S) = -\beta$ for all $S \in \mathcal{S}$ and $e \in E$,
        \item $v(z_1, y) = v(z_1, z_2) = \alpha$ for all $y \in Y$,
        \item $0$ for all other valuations.
    \end{itemize}

    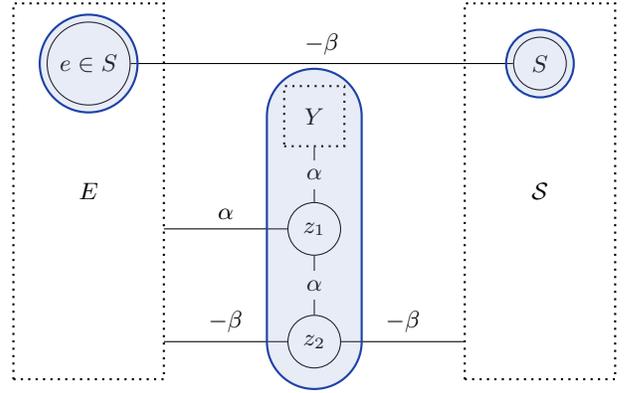
\begin{figure}[ht]
    \centering
\begin{tikzpicture}[>=stealth,
        every node/.style={font=\small},
        main/.style={circle,draw,minimum size=7mm,font=\small},
        rect/.style={draw, thick, dotted, minimum width=2cm, minimum height=5cm, align=center}]
        
    \node[rect] (U)  at (-3.75,0) {$E$};
    \node[rect] (S*)  at ( 2.25,0) {$\mathcal{S}$};
    
    \node[rect, minimum width = 0.8cm, minimum height = 0.8cm, ] (y)  at (-0.75, 1)   {$Y$};
    \node[main] (z)  at (-0.75, -0.5)   {$z_1$};
    \node[main] (zp) at (-0.75,-2)   {$z_2$};

    \node[main] (u) at (-3.75, 1.7) {$e \in S$};

    \node[main] (S) at (2.25, 1.7) {$S$};

    \begin{scope}[every node/.style={font=\small}]
                \path
                    (y) edge[-] node[pos = 0.5, fill = white] {$\alpha$} (z)
                    (z)  edge[-] node[pos = 0.5, fill = white] {$\alpha$} (zp)
                    (u) edge[-] node[above] {$-\beta$} (S);
            \end{scope}     
            
    \draw[-] (z) -- node[above] {$\alpha$} (-2.75, -0.5);
    \draw[-] (zp) -- node[above] {$-\beta$} (1.25, -2);
    \draw[-] (zp) -- node[above]  {$-\beta$} (-2.75, -2);

    \draw[thick,CoalitionColor, fill=CoalitionColor!50, fill opacity=0.2]  \convexpath{y,zp}{.63cm};
    \node (u2) at (-3.75, 1.70001) {};
    \node (S2) at (2.25, 1.70001) {};
    \draw[thick,CoalitionColor, fill=CoalitionColor!50, fill opacity=0.2]  \convexpath{u,u2}{.65cm};
    \draw[thick,CoalitionColor, fill=CoalitionColor!50, fill opacity=0.2]  \convexpath{S,S2}{.45cm};
\end{tikzpicture}
\caption{Illustration of $G$ from the reduction in \Cref{symmetric_Single_Hardness}. Dotted rectangles are independent sets. All edges that are not shown correspond to a valuation of $0$. Coalitions in $\pi$ are highlighted in blue.}
\label{fig:thmOneGeneral}
\end{figure}
    We claim that $\pi = \left\{ \{e\}_{e \in E}, \{S\}_{S \in \mathcal{S}}, Y \cup \{z_1, z_2\} \right\}$ is an $X$ partition in $G$.
    To see that this is the case, we perform a case analysis to show that no type of agent can make an $X$ deviation:
    \begin{itemize}
        \item The only agent for which any agent $e \in E$ has a positive valuation is $z_1$. However, it holds that $u_e(\pi(z_1)) = \alpha - \beta \leq 0$, so $e$ does not want to deviate to another coalition.
        \item Any agent $S \in \mathcal{S}$ does not have any positive valuations. Therefore, she cannot increase her utility by deviating since $u_S(\pi(S)) = 0$.
        \item Agent $z_1$ has a utility of $\alpha (\lvert Y \rvert+ 1) > \alpha \lvert N\setminus Y\rvert$ for $\pi(z_1)$, so she cannot deviate.
        \item No agent in $Y \cup \{z_2\}$ has an incentive to deviate since each has a utility of at least $\alpha$ in their current coalition and for all other coalitions they have a utility of at most $0$.
    \end{itemize}
    Now, alter the valuation between agent $z_1$ and $z_2$ such that $v'(z_1, z_2) = -\beta$ and denote the altered game by $G'$. We claim that a set cover $\mathcal{C} \subseteq \mathcal{S}$ of $E$ with $\lvert \mathcal{C} \rvert \leq k$ exists if and only if $G'$ has an $X$ partition $\pi'$ with $d(\pi, \pi') \leq k + 1$.
    To this end, suppose that $\mathcal{C} \subseteq \mathcal{S}$ is a set cover of $E$ with $\lvert C \rvert \leq k$. We claim that $\pi' = \left\{ \{e\}_{e \in E}, \{S\}_{S \in \mathcal{S \setminus C}}, \mathcal{C} \cup Y \cup \{z_1\}, \{z_2\} \right\}$ is $X$ in $G'$. Clearly, it holds that $d(\pi, \pi') \leq k+1$. To see that $\pi'$ is an $X$ partition in $G'$, we again perform a case analysis to show that no agent can make an $X$ deviation.
    \begin{itemize}
        \item The only agent for which $e \in E$ has a positive valuation is $z_1$. However, since there is some $S \in \pi(z_1)$ with $v(e, S) = -\beta$, it holds that $u_e(\pi'(z_1)) \leq \alpha - \beta \leq 0$, so $e$ cannot deviate.
        \item An agent $S \in \mathcal{S}$ only has non-positive valuations. Therefore, she does not want to deviate since $u_S(\pi'(S)) = 0$.
        \item An agent $y \in Y$ has utility $\alpha$ for $\pi'(y)$, so she cannot deviate since she has a nonpositive valuation for any agent outside her coalition.
        \item Agent $z_1$ has a utility of $\alpha\lvert Y \rvert > \alpha \lvert N \setminus Y \rvert$ for $\pi'(z_1)$, so she cannot deviate.
        \item Agent $z_2$ does not want to deviate because $u_{z_2}(\pi'(z_2)) = 0$ and $v'(z_2, a) \leq 0$ for all $a \in N$.
    \end{itemize}
    To prove the reverse direction, suppose that $\pi'$ is an $X$ partition in $G'$ with $d(\pi, \pi') \leq k + 1$. Note that since $v(y, z_1) = \alpha$ and $v(y, a) = 0$ for all $y \in Y$ and $a \in N \setminus \{z_1\}$, we have that $Y \subseteq \pi'(z_1)$. Moreover, $z_2$ is in a singleton coalition in $\pi'$ since $v'(z_2, a) < 0$ for all $a \in N \setminus Y$.
    Let $R_E$ and $R_{\mathcal{S}}$ denote the agents in $E$ and $\mathcal{S}$ that were moved to another coalition from $\pi$. Note that it suffices to only consider these agents since $\lvert Y \rvert > \lvert N \setminus Y \rvert, Y \subseteq \pi(z_1), \pi'(z_1)$, and no agent in $Y$ has been moved in a shortest sequence of changes. 
    Let $f: E \rightarrow \mathcal{S}$ map each $e \in E$ to some $S \in \mathcal{S}$ with $e \in S$. We claim that $\mathcal{C} := f(R_E) \cup R_{\mathcal{S}}$ is a set cover of $E$ with $\lvert \mathcal{C}\rvert \leq k$. By assumption, it holds that $\lvert R_E \rvert + \lvert R_{\mathcal{S}} \rvert \leq k$ since $z_2$ also moved to another coalition. To see that $\mathcal{C}$ is a set cover of $E$, assume for contradiction that there is some $e \in E$ that is not covered by $\mathcal{C}$. Hence, by construction of $\mathcal{C}$, there is no $S \in \pi'(z_1)$ with $v(e, S) < 0$ and $\pi'(z_1) \neq \pi'(e)$. Therefore, $e$ can $X$ deviate to $\pi'(z_1)$ since there is no $a \in \pi'(e)$ with $v(e, a) > 0$ and no $b \in \pi'(z_1)$ with $v(e, b) < 0$ (remember that $z_2$ is the only agent $e$ has a negative utility for but as argued above $z_2$ has to be in a singleton coalition). This is a contradiction to $\pi'$ being $X$ in $G'$, so $\mathcal{C}$ is in fact a set cover of $E$.

\textbf{General $\alpha(n)$ and $\beta(n)$ for IS and CIS:}
It remains to prove the case that $\alpha (n) > \beta (n)$ for $X \in \{\text{IS},\text{CIS}\}$. We can use the same set cover reduction as above. 

First, we show that $\pi$ is an IS (or CIS) partition in $G$. The arguments for any agent $S \in \mathcal{S}$ and for all agents in $\{y, z_1, z_2\}$ are analogous as above. For $e \in E$,
note that any coalition other than $\pi(e)$ contains at least one agent with a negative valuation for $e$. Hence, $e$ cannot make an IS (or CIS) deviation.

Next, we show that the partition \\$\pi' = \left\{ \{e\}_{e \in E}, \{S\}_{S \in \mathcal{S \setminus C}}, \{y, z_1\} \cup \mathcal{C}, \{z_2\} \right\}$ is IS (or CIS) in $G'$.
Again, the only case we need to argue differently from above is for the agents $e \in E$. The only coalition $e$ can deviate to increase her utility is the coalition $\pi(z_1)$. But, by assumption, $\mathcal{C}$ is a set cover, so there is some $S \in \mathcal{C} \cap \pi(z_1)$ such that $e \in S$. By construction, $v(e, S) = - \beta < 0$, so $e$ cannot make an IS (or CIS) deviation.
The proof of the reverse direction is exactly the same as above.

For the non-symmetric case, the proof works analogously. We use exactly the same construction for $G$ with the only exception that the valuations between $z_1$ and $z_2$ are not symmetric anymore and set to
$v_{z_1}(z_2) = 0$ and $v_{z_2}(z_1) = \alpha$. After the valuation of $z_2$ for $z_1$ is altered such that $v_{z_2}(z_1) = - \beta$, an analogous argument as in the symmetric case applies.
\end{proof}

\approxIsHard*
\begin{proof}
    In the reduction for Theorem \ref{symmetric_Single_Hardness}, we showed that every set cover of size $\ell$ corresponds to a solution of \textsc{X-1-1-Altered} with distance at most $\ell +1$ and vice versa.
    Without loss of generality, we can assume $\ell >1$ since otherwise both problems are trivially in $\text{P}$.
    So if we could approximate \textsc{X-1-1-Altered} by any function $f(\ell)$, we could also approximate set cover with a function $2f(\ell)$. However, since there is no efficient algorithm that approximates set cover better than by a logarithmic factor unless $\text{P} = \text{NP}$
    \citep{hardnessSetCover}, we conclude that the same holds for \textsc{X-1-1-Altered}.

\end{proof}

\section{Missing Proofs in Section \ref{sec:StrictASHGs}}

\subsection{CNS and CIS}

In this appendix, we present the proofs of \Cref{sec:StrictASHGs} missing from the main body of our paper.

\theoremOnlyTwoWeights*

\begin{proof}
    Here, we prove the claim for \textsc{X-1-N-Sym-Altered}. The claim for \textsc{X-1-N-Altered} (and \textsc{X-N-1-Altered}) follows from \Cref{thm:CISandCNS_single_asymmetric_NP_complete}.
    
     Let $(E, \mathcal{S}, k)$ be a set cover instance. Without loss of generality, we can assume that $1 < k < \lvert E \rvert - 1, \lvert \mathcal{S} \rvert - 1$ and that $\mathcal{S}$ is a set cover of $E$, as otherwise the problem is trivial.
    Let $G$ be a symmetric \ac{ASHG} with agent set $N = E \cup \mathcal{S} \cup \{z\}$ and valuations defined by
    \begin{itemize}
        \item $v(e, S) = \alpha$ for all $S \in \mathcal{S}$ and $e \in S$,
        \item $v(e, z) = \alpha$ for all $e \in E$,
        \item $v(S, S') = \alpha$ for all $S, S' \in \mathcal{S}$,
        \item $-\beta$ for all other valuations,
    \end{itemize}
    where we set $\alpha := \alpha(\lvert N \rvert)$ and $\beta := \beta(\lvert N \rvert)$. Since $\alpha$ and $\beta$ are polynomial-time computable functions, $G$ is polynomial in the encoding size of $(E, \mathcal{S})$.
    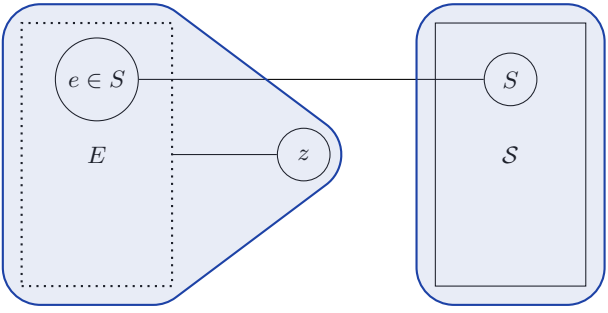
\begin{figure}[ht]
    \centering
\begin{tikzpicture}[>=stealth,
        every node/.style={font=\small},
        main/.style={circle,draw,minimum size=7mm,font=\small},
        rect/.style={draw, minimum width=2cm, minimum height=3.5cm, align=center}]
        
    \node[rect, thick, dotted] (U)  at (-3.5,1) {$E$};
    \node[rect] (S*)  at ( 2, 1) {$\mathcal{S}$};
    
    \node[main] (z)  at (-0.75, 1)   {$z$};

    \node[main] (u) at (-3.5, 2) {$e \in S$};

    \node[main] (S) at (2, 2) {$S$};

    \begin{scope}[every node/.style={font=\small}]
                \path
                    (u) edge[-] node[above] {} (S);
            \end{scope}     
            
    \draw[-] (z) -- node[above] {} (-2.5, 1);

    \node (Utl) at (-4.25, 2.5) {};
    \node (Utr) at (-2.75, 2.5) {};
    \node (Ubl) at (-4.25, -0.5) {};
    \node (Ubr) at (-2.75, -0.5) {};
    \draw[thick,CoalitionColor, fill=CoalitionColor!50, fill opacity=0.2] 
    \convexpath{Ubr, Ubl, Utl, Utr, z}{.5cm};

    \node (Stl) at (1.25, 2.5) {};
    \node (Str) at (2.75, 2.5) {};
    \node (Sbl) at (1.25, -0.5) {};
    \node (Sbr) at (2.75, -0.5) {};
    \draw[thick,CoalitionColor, fill=CoalitionColor!50, fill opacity=0.2] 
    \convexpath{Sbr, Sbl, Stl, Str}{.5cm};

\end{tikzpicture}
\caption{Illustration of $G$ from the reduction in \Cref{theorem:altered_complete}. Undotted rectangles are cliques, and dotted rectangles are independent sets. There is an edge $\{i, j\}$ if and only if $v(i, j) = \alpha$. Coalitions in $\pi$ are highlighted in blue.}
\end{figure}
    
    Clearly, $\pi = \{E \cup\{z\}, \mathcal{S}\}$ is CNS (CIS) since every agent has a (symmetric) positive valuation for another agent in her coalition. 
    Alter the valuations of agent $z$, such that $v(z, e) = -\beta$ for all $e \in E$ and denote the altered game by $G'$. We claim that a set cover $\mathcal{C \subseteq S}$ of $E$ with $\lvert \mathcal{C} \rvert \leq k$ exists if and only if $G'$ has a CNS (CIS) partition $\pi'$ with $d(\pi, \pi') \leq k +1$.
    To this end, suppose that $\mathcal{C \subseteq S}$ is a set cover of $E$ with $\lvert \mathcal{C} \rvert \leq k$. Let $\pi' = \{E \cup \mathcal{C}, \mathcal{S \setminus C}, \{z\}\}$. Clearly, it holds that $d(\pi, \pi') \leq k + 1$. To see that $\pi'$ is CNS (CIS), note that for any agent in $a \in N\setminus \{z\}$, there is some $a' \in \pi(a)$ with $v(a, a') > 0$ since $1 < k < \lvert E \rvert - 1, \lvert \mathcal{S} \rvert - 1$. Hence, $a$ cannot make a CNS (CIS) deviation. Finally, $z$ cannot deviate since it has a negative valuation for any other agent. 

    To prove the reverse direction, suppose that $\pi'$ is CNS (CIS) in $G'$ with $d(\pi, \pi') \leq k + 1$. Let $R_E$ and $R_{\mathcal{S}}$ denote the agents in $E$ and $\mathcal{S}$ that were moved from their coalition in $\pi$ to another. Let $f: E \rightarrow \mathcal{S}$ map each $e \in E$ to some $S \in \mathcal{S}$ with $e \in S$. We claim that $\mathcal{C} := f(R_E) \cup R_{\mathcal{S}}$ is a set cover of $E$ with $\lvert \mathcal{C} \rvert \leq k$.
    It holds that $\pi'(z) = \{z\}$ since $z$ has a negative valuation for any other agent in $G'$. Thus, $\lvert R_E \rvert + \lvert R_{\mathcal{S}}\rvert \leq k$. To see that $\mathcal{C}$ is a set cover of $E$, suppose that there is some $e \in E$ that is not covered by $\mathcal{C}$. By construction of $\mathcal{C}$, there is no $S \in \pi'(e)$ with $v(e, S) > 0$. Hence, $e$ does not have a positive valuation for any $a \in \pi'(e)$, so $e$ can deviate to the empty coalition since $u_e(\pi'(e)) \leq -\beta < 0$. However, this is a contradiction to $\pi'$ being CNS (CIS). Therefore, $\mathcal{C}$ is in fact a set cover of $E$.
\end{proof}

\CNSSym*
\begin{proof} 
     We first prove hardness for \textsc{CNS-1-1-Sym-Altered}. At the end of the proof, we explain how to adapt the reduction for the non-symmetric case.
     
     Let $(E, \mathcal{S}, k)$ be a set cover instance. Without loss of generality, we can assume that $2 < k < \lvert E \rvert -1, \lvert \mathcal{S} \rvert - 1$ and that $\mathcal{S}$ is a set cover of $E$, as otherwise the problem is trivial.
     We create a symmetric \ac{ASHG} $G$ with valuations restricted to $\{-2M, -1, 1\}$, where $M  = \lvert E \rvert + \lvert \mathcal{S} \rvert$ (see \Cref{fig:CNSSYMHARD}). To this end, let $N = E \cup \mathcal{S} \cup \{z_1, z_2\} \cup Y$ with $\lvert Y \rvert = M$ and define the valuations by
    \begin{itemize}
        \item $v(e, S) = -2M$ for all $S \in \mathcal{S}$ and $e \in S$,
        \item $v(S, S') = 1$ for all $S, S' \in \mathcal{S}$,
        \item $v(y, e) = 1$ for all $e \in E$ and $y \in Y$,
        \item $v(y, y') = 1$ for all $y, y' \in Y$,
        \item $v(z_1, e) = -2M$ for all $e \in E$,
        \item $v(z_1, z_2) = 1$,
        \item $-1$ for all other valuations.
    \end{itemize}

    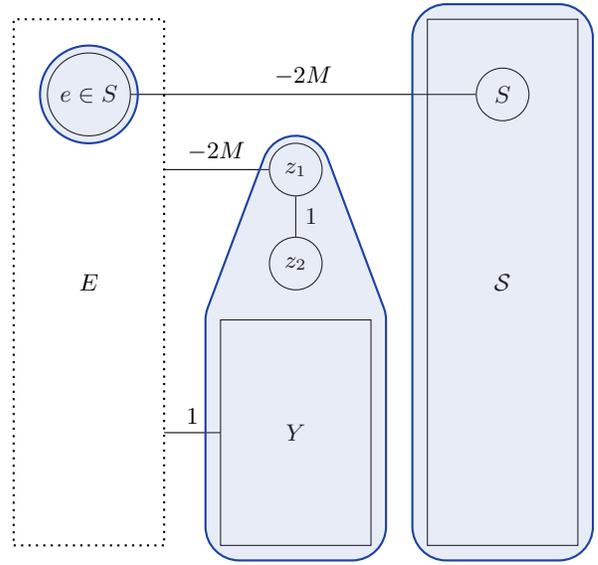
\begin{figure}[ht]
    \centering
\begin{tikzpicture}[>=stealth,
        every node/.style={font=\small},
        main/.style={circle,draw,minimum size=7mm,font=\small},
        rect/.style={draw, minimum width=2cm, minimum height=7cm, align=center}]
        
    \node[rect, thick, dotted] (U)  at (-3.5,0) {$E$};
    \node[rect] (S*)  at ( 2,0) {$\mathcal{S}$};
    
    \node[main] (z)  at (-0.75, 1.5)   {$z_1$};
    \node[main] (zp)  at (-0.75, 0.25)   {$z_2$};

    \node[rect, minimum width = 2cm, minimum height = 3cm] (X) at (-0.75, -2) {$Y$};

    \node[main] (u) at (-3.5, 2.5) {$e \in S$};

    \node[main] (S) at (2, 2.5) {$S$};

    \begin{scope}[every node/.style={font=\small}]
                \path
                    (z)  edge[-] node[right] {$1$} (zp)
                    (u) edge[-] node[above] {$-2M$} (S);
            \end{scope}     
            
    \draw[-] (z) -- node[above] {$-2M$} (-2.5, 1.5);
    \draw[-] (-1.75, -2) -- node[above] {$1$} (-2.5, -2);

    \node (Xtr) at (0, -0.5) {};
    \node (Xtl) at (-1.5, -0.5) {};
    \node (Xbr) at (0, -3.25) {};
    \node (Xbl) at (-1.5, -3.25) {};
    \draw[thick,CoalitionColor, fill=CoalitionColor!50, fill opacity=0.2] 
    \convexpath{z, Xtr, Xbr, Xbl, Xtl}{.45cm};

    \node (u2) at (-3.5, 2.5001) {};
    \draw[thick,CoalitionColor, fill=CoalitionColor!50, fill opacity=0.2] 
    \convexpath{u2, u}{.65cm};

    \node (Str) at (2.75, 3.25) {};
    \node (Stl) at (1.25, 3.25) {};
    \node (Sbr) at (2.75, -3.25) {};
    \node (Sbl) at (1.25, -3.25) {};
    \draw[thick,CoalitionColor, fill=CoalitionColor!50, fill opacity=0.2] 
    \convexpath{Str, Sbr, Sbl, Stl}{.45cm};
\end{tikzpicture}
\caption{Illustration of $G$ from the reduction in \Cref{CNS_hardness_without_0_edges}. Undotted rectangles are cliques with valuations of 1. Dotted rectangles are independent sets. All edges that are not shown correspond to a valuation of $-1$. Coalitions in $\pi$ are highlighted in blue.}
\label{fig:CNSSYMHARD}
\end{figure}
    
    Note that $\pi = \{\{e\}_{e \in E}, \mathcal{S}, \{z_1, z_2\} \cup Y\}$ is CNS since every agent in a non-singleton coalition has at least one (symmetric) positive valuation for some other agent in her coalition. Moreover, any agent in a singleton coalition, namely any $e \in E$, has a valuation of $-2M$ to at least one agent in any nonsingleton coalition, and a valuation of $-1$ to any other agent in a singleton coalition. Therefore, she cannot deviate to increase her utility.
    
    Alter the valuation between agents $z_1$ and $z_2$ such that $v'(z_1, z_2) = -1$ and denote the altered game by $G'$. We claim that a set cover $\mathcal{C \subseteq S}$ of $E$ with $ \lvert \mathcal{C} \rvert \leq k$ exists if and only if $G'$ has a CNS partition $\pi'$ with $d(\pi, \pi') \leq k + 2$. To this end, let $\mathcal{C \subseteq S}$ denote such a set cover. Let $\pi' = \{ \{e\}_{e \in E}, \mathcal{S \setminus C}, \mathcal{C} \cup Y, \{z_1\}, \{z_2\} \}$. Clearly, it holds that $d(\pi, \pi') \leq k + 2$. To see that $\pi'$ is CNS in $G'$, we perform a case analysis to show that no agent can make a CNS deviation. 

    \begin{itemize}
        \item Let $e$ be any agent in $E$. Since $\mathcal{C}$ is a set cover of $E$, there is at least one agent $S\in \mathcal{C} \cup Y$ with $e \in S$ and $v(e, S) = -2M$. Moreover, any deviation can only decrease her utility since $e$ has a negative valuation for any agent in a coalition other than $\mathcal{C} \cup Y$. Thus, she cannot make a CNS deviation. 
        \item For any $S \in \mathcal{S}$, there is some $S' \in  \pi'(S) \cap \mathcal{S}$ with $v(S, S') = 1$ since by assumption $2 < k < \lvert \mathcal{S} \rvert - 1$. Hence, $S$ cannot make a CNS deviation.
        \item An agent $y \in Y$ cannot make a CNS deviation because there is some $y' \in \pi'(y) \cap Y$ with $v(y, y') = 1$. 
        \item Agents $z_1$ and $z_2$ have negative valuations for all other agents in $G'$, so they cannot make a CNS deviation.
    \end{itemize}

    To prove the reverse direction, suppose that $\pi'$ is a CNS partition in $G'$ with $d(\pi, \pi') \leq k + 2$. Note that both $z_1$ and $z_2$ must be in singleton coalitions in $\pi'$ since they only have negative valuations. Let $R_E, R_{\mathcal{S}}$, and $R_Y$ denote the agents in $E, \mathcal{S}$, and $Y$ that have been moved to another coalition from $\pi$.
    
    Moreover, let $f: E \rightarrow \mathcal{S}$ map each $e \in E$ to some $S \in \mathcal{S}$ with $e \in S$, and let $g: R_Y \rightarrow \mathcal{S}$ be a mapping defined by 
    \begin{align*}
        y \rightarrow \begin{cases}
            f(e) & \text{if } \exists e \in \pi'(y) \cap (E \setminus R_E),\\
            \text{some } S \in \mathcal{S} & \text{otherwise.}
        \end{cases}
    \end{align*}
     Since each $e \in E$ is in a singleton coalition in $\pi$, $g$ is well-defined. We claim that $\mathcal{C} := f(R_E) \cup R_{\mathcal{S}} \cup g(R_Y)$ 
     is a set cover of $E$ with $\lvert \mathcal{C} \rvert \leq k$. By assumption, it holds that $\lvert R_E \rvert + \lvert R_{\mathcal{S}} \rvert + \lvert R_Y \rvert \leq k$ since $z_1$ and $z_2$ have moved to singleton coalitions. To see that $\mathcal{C}$ is a set cover of $E$, suppose that there is some $e \in E$ that is not covered by $\mathcal{C}$. To this end, denote the coalition in $\pi'$ that contains the most agents in $Y$ by $\pi'(y^*)$. 
     By construction of $\mathcal{C}$, there is no $S \in \pi'(y^*)$ with $e \in S$, and there is no $y \in \pi'(e)$. Hence, there is no agent $ a \in \pi'(e)$ with $v(a, e) > 0$. Moreover, $e$ has no $-2M$ valuation to any agent in $\pi'(y^*)$, a valuation of $+1$ to at least $\lvert Y \rvert - k = M - k$ many agents in $\pi'(y^*)$, and a valuation of $-1$ to at most $k$ many agents. Therefore, $e$ can make a CNS deviation since $u_e(\pi'(y^*)) > 0$ in $G'$. However, this is a contradiction to $\pi'$ being CNS in $G'$, so $\mathcal{C}$ is in fact a set cover of $E$.

     For the non-symmetric case, instead of $z_1$ and $z_2$, consider the three auxiliary agents $z_1, z_2,$ and $z_3$ with valuations $v_{z_i}(z_{i+1\mod 3}) = 1$ and otherwise $-1$. All other valuations are the same as in the symmetric case. After altering the valuation of agent $z_3$ for $z_1$ such that $v'_{z_3}(z_1) = -1$, the same argument as in the symmetric case applies.
    We refer to the reduction in the proof of \Cref{thm:CISandCNS_single_asymmetric_NP_complete} for the case $X = \text{CIS}$, where a similar construction is used.
\end{proof}

\CNSFourTheorem*

We divide the proof of Theorem \ref{thm:general_CNS_distance_bound} into multiple lemmas. In them, we consider CIS partitions. Since every CNS partition is also a CIS partition, the statements hold for CNS partitions as well.

\begin{lemma}\label{lemma:CNS_blocking_agents_exist}

    Let $G$ be a strict symmetric \ac{ASHG} and let $\pi$ be a CIS partition in $G$. Then, for any $x \in N$ with $\lvert \pi(x) \rvert \geq 2$, there is some $y \in \pi(x)$ with $v(x, y) > 0$. 
\end{lemma}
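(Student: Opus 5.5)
The argument is by contradiction, using the deviation to the empty coalition (becoming a singleton). Suppose $x \in N$ satisfies $\lvert \pi(x) \rvert \geq 2$ but has $v(x, y) \leq 0$ for all $y \in \pi(x) \setminus \{x\}$. Strictness then gives $v(x, y) < 0$ for every such $y$. Since $\pi(x) \setminus \{x\}$ is nonempty, we get
\[
u_x(\pi) = \sum_{y \in \pi(x)\setminus\{x\}} v(x,y) < 0 .
\]

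Now let $\pi'$ be obtained from $\pi$ by moving $x$ into a new singleton coalition. This is a deviation by $x$ in the sense of the model. We check that it is a CIS deviation, one condition at a time.

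\begin{itemize}
\item It is a Nash deviation, because $u_x(\pi') = 0 > u_x(\pi)$.
\item The individual condition concerns the members of the new coalition $\pi'(x)$ other than $x$. Since $\pi'(x) = \{x\}$, there are none, so the condition holds vacuously.
\item The contractual condition concerns the agents $j \in \pi(x) \setminus \{x\}$. Each such $j$ changes utility by $-v(j, x)$. By symmetry, $v(j,x) = v(x,j) < 0$, so $u_j(\pi') = u_j(\pi) - v(j,x) > u_j(\pi) \geq u_j(\pi)$.
\end{itemize}

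So $\pi'$ is reached from $\pi$ by a CIS deviation of $x$, contradicting that $\pi$ is CIS. Therefore some $y \in \pi(x)$ with $v(x,y) > 0$ must exist.

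The only real subtlety is the contractual condition, which is exactly where symmetry and strictness are needed. Symmetry converts $x$'s negative valuations into the others' negative valuations for $x$. Strictness rules out zero valuations, which would otherwise allow $x$ to have utility $0$ and no incentive to leave.
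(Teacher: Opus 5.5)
Your proof is correct and follows the same route as the paper: you assume $x$ has no friend in $\pi(x)$, use strictness to get $u_x(\pi) < 0$, and note that $x$ could then CIS deviate to the empty coalition. The paper's proof is terser and only cites symmetry; you explicitly check the contractual condition, which is where symmetry is actually needed.
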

\begin{proof}

    If there does not exist any agent $y \in \pi(x)$ such that $v(x, y) > 0$, then it holds that $u_x(\pi(x)) < 0$ since the game is strict and symmetric. Hence, agent $x$ could CNS/CIS deviate to the empty coalition, contradicting that $\pi$ is CIS.
\end{proof}

\begin{lemma}\label{lemma:singletons_negative_valuations}
    Let $G$ be a strict symmetric \ac{ASHG} and let $\pi$ be a CIS partition in $G$. If $x$ is in a singleton coalition, i.e., $\pi(x) = \{x\}$, then for any $C \in \pi\setminus \{ \pi(x) \}$, there is some $y \in C$ with $v(x, y) < 0$. Moreover, if $\pi$ is CNS, then it also holds that $u_x(C) \leq 0$.
\end{lemma}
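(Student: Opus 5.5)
We argue by contradiction for both parts, using only the deviation definitions together with strictness and symmetry of $G$.

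For the first claim, fix $x$ with $\pi(x)=\{x\}$ and a coalition $C\in\pi\setminus\{\pi(x)\}$. Suppose there is no $y\in C$ with $v(x,y)<0$. Since $G$ is strict, this means $v(x,y)>0$ for every $y\in C$. We then check that moving $x$ from $\{x\}$ to $C$ is a CIS deviation.
\begin{itemize}
  \item It is a Nash deviation: $u_x(C\cup\{x\})=\sum_{y\in C}v(x,y)>0=u_x(\{x\})$, using that $C$ is nonempty.
  \item The contractual condition on the members of the abandoned coalition holds vacuously, since $\pi(x)\setminus\{x\}=\emptyset$.
  \item The individual condition holds by symmetry: every $j\in C$ gains $v(j,x)=v(x,j)>0$, so no member of the new coalition is worse off.
\end{itemize}
Hence $x$ has a CIS deviation, contradicting that $\pi$ is CIS. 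So some $y\in C$ has $v(x,y)<0$.

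For the second claim, assume in addition that $\pi$ is CNS, and suppose $u_x(C)>0$ for some $C\in\pi\setminus\{\pi(x)\}$. Moving $x$ from $\{x\}$ to $C$ is then a Nash deviation, because $u_x(C\cup\{x\})=u_x(C)>0=u_x(\{x\})$. As before, the contractual condition only concerns the members of the coalition $x$ leaves, and there are none. So the move is a CNS deviation, contradicting that $\pi$ is CNS. Therefore $u_x(C)\le 0$.

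The argument is short. The one step that needs care is the individual condition in the first part: a CIS deviation requires the agents in the receiving coalition $C$ not to lose utility, and this is exactly where symmetry and strictness are used. The contractual condition never causes trouble, because $x$ leaves a singleton coalition.
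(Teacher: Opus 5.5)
Your proposal is correct and matches the paper's proof, which also argues by contradiction that otherwise $x$ would have a CIS (respectively CNS) deviation to $C$. You spell out the checks the paper leaves implicit: strictness turns ``no negative valuation'' into ``all valuations positive'', symmetry secures the consent of the receiving coalition, and leaving a singleton makes the contractual condition trivial.
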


\begin{proof}
    Because $\pi$ is CIS, there must be some $y \in C$ with $v(x, y) < 0$ for all $C \in \pi \setminus \{\pi(x)\}$ as otherwise $x$ could make a CIS deviation to $C$ since the game is strict.
    If $\pi$ is additionally CNS, it must also hold that $u_x(C) \leq 0$ since otherwise $x$ could make a CNS deviation to $C$.
\end{proof}

\begin{lemma}\label{lemma:CNS_different_coalitions}
    Let $G$ be a strict symmetric \ac{ASHG} and let $\pi$ be a CNS (or CIS) partition in $G$. Suppose that the valuation between agents $a$ and $b$ was altered and that $\pi(a) \neq \pi(b)$. Denote the altered game by $G'$.
    If $\pi$ is not CNS (or CIS) in $G'$, then either $\pi(a) = \{a\}$ or $\pi(b) = \{b\}$. Moreover, in this case, it holds that $\pi' = \left( \pi \setminus \{\pi(a), \pi(b)\} \right) \cup \{\pi(a) \cup \pi(b)\}$ is CNS (or CIS) in $G'$.
\end{lemma}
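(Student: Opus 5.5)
We first locate the agents who could possibly have a new deviation in $G'$. The only valuation that changed is $v(a,b)$, and $a$ and $b$ lie in different coalitions. So for every agent $x \notin \{a,b\}$ the utility of every coalition is the same in $G$ and $G'$. The same holds for $a$ and $b$ evaluated at their own coalitions. The only utilities that change are $u_a(\pi(b))$ and $u_b(\pi(a))$.

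Could an agent $x \notin \{a,b\}$ have a new deviation? Its gain condition is unchanged. The contractual and individual side conditions only involve the utilities of members of $\pi(x)$ and of the target coalition, each evaluated at its own coalition with or without $x$. Both $x$-containing and $x$-removed sets avoid the pair $\{a,b\}$ together unless $x$'s move places $a$ and $b$ in one coalition, which is impossible since $x$ moves alone. Hence no agent other than $a$ or $b$ gains a deviation. Moreover, $a$ has a new deviation only towards $\pi(b)$, and $b$ only towards $\pi(a)$.

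Now suppose, without loss of generality, that $a$ deviates to $\pi(b)$ and $|\pi(a)|\ge 2$. By Lemma~\ref{lemma:CNS_blocking_agents_exist} there is $y\in\pi(a)$ with $v(a,y)>0$. By symmetry, $y$ loses utility when $a$ leaves, so the contractual condition fails. For CNS (and CIS) this blocks the deviation, which gives the first claim. Symmetrically for $b$. (For CIS I keep the same argument, since CIS deviations are contractual.)

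For the second claim, say $\pi(a)=\{a\}$, and put $C=\{a\}\cup\pi(b)$. The deviation must satisfy $u'_a(\pi(b))>0$. I check that no agent has a CNS (resp. CIS) deviation in $\pi'$, treating three groups of agents in turn.

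\begin{itemize}
\item Agents outside $C$: their own coalitions are unchanged. Deviations to coalitions other than $C$ are as in $\pi$ in $G$. A deviation by $x$ into $C$ has utility $u_x(\pi(b))+v(x,a)$. Since $\pi(a)$ was a singleton, $x$ could not deviate to $\{a\}$ nor to $\pi(b)$ in $G$. The Nash condition then bounds each term, but their sum might exceed $u_x(\pi(x))$. For CIS this is fine: the IS condition fails, because some member of $\pi(b)$ objected before, or $a$ objects. For CNS the sum is the main obstacle. My first attempt is to use Lemma~\ref{lemma:singletons_negative_valuations} when $x$ is a singleton. For non-singletons, I would use symmetry together with the fact that $a$ did not deviate to $\pi(x)$ in $G$, which gives $u_a(\pi(x))\le 0$.
\item Agent $a$: it has utility $u'_a(\pi(b))>0$. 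If $\pi(b)$ is a singleton, then $v'(a,b)>0$ and contractual blocking holds. If it is not, $a$ is at least as well off as before, where $a$ had no deviation, because a positive friend exists.
\item Agents in $\pi(b)$: each has a friend in its coalition by Lemma~\ref{lemma:CNS_blocking_agents_exist}, or else $\pi(b)=\{b\}$ and then $b$ has $a$ as a friend. So none can leave contractually.
\end{itemize}
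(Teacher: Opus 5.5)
Your handling of the first claim, of agent $a$, and of the agents in $\pi(b)$ is correct and follows the paper: a friend in the coalition, given by \Cref{lemma:CNS_blocking_agents_exist}, blocks a contractual move by symmetry.

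The gap is your first bullet, the agents $x \notin C$. You leave it unfinished, and the route you propose for non-singleton $x$ cannot work.

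\begin{itemize}
\item Your step ``$x$ could not deviate to $\{a\}$ nor to $\pi(b)$ in $G$, so the Nash condition bounds each term'' is false when $|\pi(x)|\ge 2$. Such an $x$ may have been stopped in $G$ only by the contractual condition. Then $v(x,a)$ and $u_x(\pi(b))$ may each exceed $u_x(\pi(x))$.
\item The fact $u_a(\pi(x))\le 0$ only controls the sum $\sum_{y\in\pi(x)}v(a,y)$. It says nothing about $v(x,a)$.
\end{itemize}

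Here is a concrete instance. Take $\pi=\{\{a\},\{b,c\},\{x,y\}\}$ with the following valuations:
\begin{itemize}
\item $v(x,y)=v(b,c)=v(a,c)=1$;
\item $v(x,a)=10$ and $v(y,a)=-20$;
\item $v(a,b)=-1$, altered to $v'(a,b)=1$;
\item all other valuations $-1$.
\end{itemize}
Then $\pi$ is CNS in $G$ and $u_a(\pi(x))=-10$. Yet $u_x(C)=8>1=u_x(\pi(x))$, so $x$ has a genuine Nash gain towards $C$. No bound on the gain can close this case.

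The missing idea is the one you already used for $\pi(b)$. For every non-singleton $x\notin C$, \Cref{lemma:CNS_blocking_agents_exist} gives a friend $y\in\pi(x)$. Since $y\notin\{a,b\}$, we have $v'(x,y)=v(x,y)>0$, so $y$ vetoes any departure of $x$. This holds for CNS and CIS alike, whatever the target coalition.

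That leaves only singleton agents $x$, and there \Cref{lemma:singletons_negative_valuations} does the work:
\begin{itemize}
\item If $\pi$ is CNS, it gives $v(x,a)=u_x(\{a\})\le 0$ and $u_x(\pi(b))\le 0$. Hence $u_x(C)\le 0=u_x(\{x\})$.
\item If $\pi$ is CIS, applying it to the coalition $\{a\}$ gives $v(x,a)<0$. So $a$ vetoes $x$ joining $C$.
\end{itemize}
This is exactly the paper's argument. Your CIS remark (``some member of $\pi(b)$ objected before, or $a$ objects'') is justified only for singletons, and only through this lemma.
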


\begin{proof}
    Suppose that $\pi$ is not CNS (or CIS) in $G'$. Then, either $a$ can CNS (or CIS) deviate to $\pi(b)$ or $b$ can CNS (or CIS) deviate to $\pi(a)$ since only the valuation between $a$ and $b$ was altered. By \Cref{lemma:CNS_blocking_agents_exist}, it must hold that either $\pi(a) = \{a\}$ or $\pi(b) = \{b\}$. Without loss of generality, we may assume that $\pi(a) = \{a\}$.
    We claim that $\pi' = \left( \pi \setminus \{\pi(a), \pi(b)\} \right) \cup \{\pi(a) \cup \pi(b)\}$ is CNS (or CIS) in $G'$. 
    By \Cref{lemma:CNS_blocking_agents_exist}, an agent $y \in N\setminus \{a, b\}$ with $\lvert \pi'(y) \rvert \geq 2$ cannot make a CNS (or CIS) deviation. If $\pi(b) = \{b\}$, it holds that $v'(a, b) > 0$. If $ \lvert \pi(b) \rvert \geq 2$, there is some $z_b \in \pi(b)$ with $v(b, z_b) > 0$. In both cases, agent $b$ cannot make a CNS (or CIS) deviation. Agent $a$ cannot deviate because $u_a(\pi'(a)) \geq 0$ in $G'$, so there must be some $z_a$ with $v(a, z_a) > 0$.
    Finally, to show that no agent in a singleton coalition can make a CNS (or CIS) deviation, we make a case distinction. To this end, let $x \in N$ be an agent with $\pi'(x) = \{x\}$.
    
    \textbf{Case 1:} $\pi$ is CNS in $G$.
    By \Cref{lemma:singletons_negative_valuations}, it holds that $u_x(\pi(a)) \leq 0$ and $u_x(\pi(b)) \leq 0$, so $u_x(\pi'(a)) \leq 0$. Again, by \Cref{lemma:singletons_negative_valuations}, it holds that $u_x(C) \leq 0$ for all $C \in \pi' \setminus \{ \pi'(x)\}$. Hence, $x$ cannot make a CNS deviation.

    \textbf{Case 2:} $\pi$ is CIS in $G$. By \Cref{lemma:singletons_negative_valuations}, there it holds that $v(a, x) < 0$, so for any $C \in \pi' \setminus \{\pi'(x)\}$, there is some $y \in C$ with $v(x, y) < 0.$ Hence, $x$ cannot make a CIS deviation.
\end{proof}

\begin{lemma}\label{lemma:only_singletons_deviate}
     In strict symmetric \acp{ASHG}, the only agents that can make a CNS deviation in \Cref{alg_closeCNS} are $a, b$, and agents that are in singleton coalitions in $\pi$.
     Moreover, any such agent can make at most one CNS deviation. Agents $a$ and $b$ can only make a CNS deviation if their coalition has not been merged with a singleton coalition. 
\end{lemma}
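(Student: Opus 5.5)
We track the partition as \Cref{alg_closeCNS} runs and keep an invariant that pins down who can possibly make a CNS deviation. Let $\pi$ be the input partition, CNS in $G$, and let $a,b$ be the altered pair. By \Cref{lemma:CNS_different_coalitions}, if $\pi(a)\neq\pi(b)$, then either $\pi$ is already CNS in $G'$ or the merge is CNS. In either case no deviations happen in the final loop, so the claim holds trivially. Hence we assume $\pi(a)=\pi(b)$. We may also assume $v'(a,b)<0$: if $v'(a,b)>0$, every agent in a non-singleton coalition still has a positive partner, and singletons see the same or fewer positives, so $\pi$ stays CNS. Write $C_0=\pi(a)$.

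\textbf{Invariant.} At each stage, every agent $x\notin\{a,b\}$ that lies in a non-singleton coalition has some $y$ in her coalition with $v'(x,y)>0$. Such a $y$ blocks any CNS deviation by $x$, since $y$'s utility would drop. Initially this holds by \Cref{lemma:CNS_blocking_agents_exist}, because for $x\notin\{a,b\}$ the valuations of $x$ are unchanged. It is preserved when agents join coalitions: the joiner gains a positive partner, since a Nash deviation into a coalition requires some positive valuation there, or she forms a singleton. It could break only if $x$'s unique positive partner leaves. We handle this by showing that only $a$, $b$, and original singletons ever leave. An original singleton $s$ that joined a coalition $D$ can afterwards leave only if no member of $D$ has positive valuation for $s$. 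But $s$ joined via a Nash deviation, so some member of $D$ has positive valuation for $s$, and by symmetry that member blocks her. So each original singleton moves at most once.

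\textbf{Agents $a$ and $b$.} If $a$'s coalition was merged with some $s_a$ satisfying $v(a,s_a)>0$, then $s_a$ blocks $a$ from then on, by symmetry. Moreover, $s_a$ cannot leave by the previous paragraph. So $a$ deviates only if no merge occurred, and then at most once inside the for-loop. Afterwards $a$ sits in the largest coalition she could reach with positive gain, so she has a positive partner. I expect showing that $a$ (and likewise $b$) never deviates again in the final loop to be the main obstacle. Two routes are available. The first is to show that her partner's coalition never loses members other than singleton-joiners, so the partner stays. The second is to argue that any later CNS deviation by $a$ would require a coalition she strictly prefers, and that such a coalition existed already because coalitions only grow by absorbing singletons that are negative toward her. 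I would try the first route, using the invariant together with the fact that agents other than $a,b$ never leave non-singleton coalitions.

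\textbf{Remaining agents and termination.} Agents that start in non-singleton coalitions of $\pi$ and are neither $a$ nor $b$ never deviate, by the invariant. Combining this with the bounds above, every CNS deviation in the algorithm is made by $a$, $b$, or an original singleton, each at most once. In particular, the final loop terminates, which completes the statement.
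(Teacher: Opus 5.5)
Your ingredients match the paper's: a positive partner blocks by symmetry, and an original singleton that joins a coalition gains a positive partner. However, the step you flag as the main obstacle is left as a plan, and the sentence meant to set it up is false. There is a genuine gap.

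\textbf{The gap.} You claim that after her else-if deviation $a$ ``sits in the largest coalition she could reach with positive gain, so she has a positive partner''. But $a$ can leave $\pi(a)$ only if every member, $b$ included, values her negatively. Hence $u'_a(\pi(a))<0$, and a positive gain does not imply positive utility. In particular, $a$ may move to the empty coalition; the paper reads the step as ``a coalition of positive utility if one exists, otherwise $\emptyset$''.

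When $a$ joins a coalition of positive utility, her partner $c_a$ stays simply because $a$ blocks $c_a$ by symmetry. So your first route is immediate and is not where the difficulty lies.

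The case that needs an argument is $a$ becoming a singleton, where she has no partner at all; neither route addresses it. There you must show that no coalition ever becomes strictly positive for her:
\begin{itemize}
\item At that moment every coalition gives her utility at most $0$.
\item The else-if branch for $a$ is entered only if no singleton $s$ of $\pi$ has $v(a,s)>0$.
\item $v'(a,b)<0$, since $b$ did not block her.
\item Hence every agent later merged into or deviating into a coalition (original singletons and $b$) is negative for $a$, so joins only lower her utility.
\item The only later departure from a non-singleton coalition is $b$ leaving $\pi(a)\setminus\{a\}$, and its remaining members are all negative for $a$.
\end{itemize}
The same reasoning handles $b$. Your second route gestures at this but supplies none of these facts.

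\textbf{Smaller repairs.} Preservation of your invariant does not follow from ``only $a$, $b$, and original singletons ever leave'', because $a$ or $b$ may be the unique positive partner of some $x\in\pi(a)$. What you need is that a CNS deviator is never anyone's positive partner in her origin coalition, since that partner would lose utility. So no partnership is ever broken, and merges only add agents.

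Combine this with the fact that $a$ without a partner in the non-singleton $\pi(a)$ can always deviate to $\emptyset$. Together these also rule out a \emph{first} deviation by $a$ in the final loop, which you do not address.

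Finally, when $\pi(a)\neq\pi(b)$ the algorithm does not merge explicitly. If $\pi$ is not CNS in $G'$, the final loop performs exactly one CNS deviation, by whichever of $a,b$ is a singleton, producing the merged partition. So ``no deviations happen'' is wrong, although the lemma still holds in that case.
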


\begin{proof}
    By \Cref{lemma:CNS_different_coalitions}, we only need to consider the case where $\pi(a) = \pi(b)$.
    Let $x \in N$ be an agent with $\lvert \pi(x)\rvert \geq 2$ and $x \neq a, b$. Then, by \Cref{lemma:CNS_blocking_agents_exist}, this agent cannot make a CNS deviation at any point in \Cref{alg_closeCNS}. Therefore, only $a, b$ or agents in singleton coalitions can make a CNS deviation in \Cref{alg_closeCNS}.

    Note that, by \Cref{lemma:singletons_negative_valuations}, an agent $x$ that deviates from a singleton coalition, deviates to a non-singleton coalition. 
    Moreover, there is at least one agent $y$ in her new coalition with $v(x, y) > 0$ since otherwise $x$ would not CNS deviate to this coalition. Therefore, neither $x$ nor $y$ can make another CNS deviation.

    Now, consider agents $a$ and $b$.
    If $\pi(a)$ is merged with $\pi(s_a)$, then neither $a$ nor $s_a$ can make a CNS deviation since $v(a, s_a) > 0$.
    If $\pi(a)$ is not merged and $a$ can make a CNS deviation, it deviates to the largest coalition $C_a$ for which it has a positive utility.
    If $C_a \neq \emptyset$, there is some $c_a \in C_a$ with $v(a, c_a) > 0$, so $a$ cannot deviate again.
    Otherwise, $a$ has a negative utility for all other coalitions in $\pi$ and it holds that $v(a, b) < 0$. Since the only other agents that can make CNS deviations are $b$ and agents in singleton coalitions in $\pi$, $a$ cannot make another CNS deviation.
    The same analysis holds for agent $b$ as well.
 
    Finally, note that by \Cref{lemma:singletons_negative_valuations}, any agent $x$ in a singleton coalition can only deviate to a coalition $C$ with $\lvert C \rvert \geq 2$. There is some $y \in C$ with $v(x, y) > 0$, so $x$ cannot deviate again. 
\end{proof}

\begin{proof}[Proof of \Cref{thm:general_CNS_distance_bound}]
By \Cref{lemma:only_singletons_deviate}, there are at most $\phi(\pi) + 2$ many CNS deviations in \Cref{alg_closeCNS}. Hence, the algorithm terminates after $\mathcal{O}(n)$ deviations. Moreover, since the algorithm terminates when no more CNS deviations are possible, the resulting partition $\pi'$ is CNS. Since each deviation corresponds to moving exactly one agent from one coalition to another and $a$ (or $b$) can only make a CNS deviation if her coalition has not been merged, it holds that $d(\pi, \pi') \leq 4 + \phi(\pi) - \phi(\pi')$ because there are at most two singletons coalitions in $\pi'$ that are not in $\pi$.
\end{proof}

\CISdistance*

\begin{proof}
    Let $a$ and $b$ denote the agents whose valuation was altered, and let $G'$ denote the altered game. If $\pi(a) \neq \pi(b)$, any sequence of CIS deviations from $\pi$ in $G'$ starts with a deviation by either $a$ or $b$ to $\pi(b)$ or $\pi(a)$, respectively. By \Cref{lemma:CNS_different_coalitions}, it holds that $d(\pi, \pi') \leq 1$.

    Now suppose that $\pi(a) = \pi(b)$. By \Cref{lemma:CNS_blocking_agents_exist}, no agent $x \neq a, b$ with $\lvert \pi(x)\rvert \geq 2$ can make a deviation in any sequence of CIS deviation in $G'$ starting from $\pi$. Moreover, by \Cref{lemma:singletons_negative_valuations}, no agent $s$ in a singleton coalition in $\pi$ can make a CIS deviation to any $C \in \pi \setminus \{ \pi(a)\}$ with $\lvert C \rvert \geq 2$ in any such sequence. 
    
    If $a$ (or $b$) deviates to the empty coalition, then either no agent in a singleton coalition in $\pi$ has a positive valuation for her, or she can make another CIS deviation afterwards.
    Hence, we may assume that $a$ (or $b$) only makes a CIS deviation to the empty coalition if she cannot deviate to a different coalition. By \Cref{lemma:singletons_negative_valuations}, no agent $s$ in a singleton coalition in $\pi$ can make a CIS deviation to the coalition that $a$ (or $b$) deviates to, since either $a$ (or $b$) or another agent in this coalition has a negative valuation for $s$. Moreover, at most one such $s$ can make a CIS deviation to the coalition that $a$ (or $b$) left. Hence, it holds that $d(\pi, \pi') \leq 3$ since $\pi(a) = \pi(b)$.
\end{proof}

\CNSDistanceFourTheoremLemma*

\begin{proof}
    As in the proof of \Cref{thm:CIS_FEGS_polynomial_time}, it is sufficient to consider the case $\pi(a) = \pi(b)$ by \Cref{lemma:CNS_different_coalitions}.
    Again, note that by \Cref{lemma:CNS_blocking_agents_exist} no agent $x \neq a,b$ with $\lvert \pi(x) \rvert \geq 2$ can make a CNS deviation in \Cref{alg_closeCNS}.
    If $a$ (or $b$) makes a CNS deviation in \Cref{alg_closeCNS}, it holds that $v(a, s) = -\beta$ (or $v(b, s) = -\beta$) for all $s$ in singleton coalitions in $\pi$. By \Cref{lemma:singletons_negative_valuations}, no such agent $s$ can make a CNS deviation to the coalition to which $a$ or $b$ deviated. Again, by \Cref{lemma:singletons_negative_valuations}, such an agent $s$ can only make a CNS deviation to $\pi(a)$ since $a$ and $b$ make a CNS deviation to the largest possible coalition.
    At most two agents, namely $a$ and $b$, can leave $\pi(a)$, so the utility for this coalition can increase by at most $2 \beta$ for any $s$ in a singleton coalition in $\pi$. Since the only negative utility value is $- \beta$, at most two agents in singleton coalitions can join this coalition by \Cref{lemma:singletons_negative_valuations}. Therefore, $d(\pi, \pi') \leq 4$.
    
\end{proof}

\CINSasym*
We divide the proof into two parts. First, we show the statement for CNS and then for CIS.

\begin{proof}[Proof of \Cref{thm:CISandCNS_single_asymmetric_NP_complete} for $X = \text{CNS}$]
    Let $(E, \mathcal{S}, k)$ be a set cover instance. Without loss of generality, we can assume that $2 < k < \lvert E \rvert -1, \lvert \mathcal{S}\rvert - 1$ and that $\mathcal{S}$ is a set cover of $E$, as otherwise the problem is trivial. We create a (non-symmetric) \ac{ASHG} $G$ with $N = S \cup E \cup Y \cup \{z_1, z_2, z_3\}$ with $\lvert Y \rvert = 4 \left( \lvert E\rvert +\lvert S \rvert \right) + 7$ and its valuations defined by
    \begin{itemize}
        \item $v_S(S') = \alpha$ for all $S, S' \in \mathcal{S}$,
        \item $v_e(S) = v_S(e) = \alpha$ for all $S \in \mathcal{S}$ with $e \in S$,
        \item $v_{z_1}(e) = \alpha$ for all $e \in E$,
        \item $v_{z_i}(y)  = \alpha$ for all $y \in Y$ and $i \in [3]$,
        \item $v_{z_1}(z_2) = v_{z_2}(z_3) = v_{z_3}(z_2) =  v_{z_3}(z_1) = \alpha$
        \item $v_y(y') = \alpha$ for all $y, y' \in Y$,
        \item $-\beta$ for all other valuations,
    \end{itemize}
     where we set $\alpha := \alpha (\lvert N \rvert)$ and $\beta := \beta(\lvert N \rvert)$. Since $\alpha$ and $\beta$ are polynomial-time computable functions, $G$ is polynomial in the encoding size of $(E, \mathcal{S})$.
    \begin{figure}[ht]
    \centering
\begin{tikzpicture}[>=stealth,
        every node/.style={font=\small},
        main/.style={circle,draw,minimum size=7mm,font=\small},
        rect/.style={draw, minimum width=2cm, minimum height=7cm, align=center}]
        
    \node[rect] (U)  at (-3.5,0) {$Y$};
    \node[rect] (S*)  at ( 2,0) {$\mathcal{S}$};
    
    \node[main] (z3)  at (0, 1.75)   {$z_3$};
    \node[main] (z2)  at (-0.75, 3)   {$z_2$};
    \node[main] (z1) at (-1.5, 1.75) {$z_1$};

    \node[rect, minimum width = 2cm, minimum height = 4cm, thick, dotted] (X) at (-0.75, -1.5) {$E$};

    \node[main] (u) at (-0.5, -2.5) {$e \in S$};

    \node[main] (S) at (2.25, -2.5) {$S$};

    \node (z3x) at (-2.63, 1) {};

    \begin{scope}[every node/.style={font=\small}]
                \path
                    (z1)  edge[->] node[] {} (z2)
                    (z2)  edge[<->] node[] {} (z3)
                    (z3)  edge[->] node[] {} (z1)
                    (z3) edge[->, out = 225, in = 0] node[] {} (z3x)
                    (u) edge[<->] node[] {} (S);
            \end{scope}     
            
    \draw[->] (z3) -- node[] {} (0, 0.5);
    \draw[->] (z1) -- node[] {} (-2.5, 1.75);
    \draw[->] (z2) -- node[above]  {} (-2.5, 3);

    \node (Ebr) at (0, -3.25) {};
    \node (Ebl) at (-1.5, -3.25) {};
    \draw[thick,CoalitionColor, fill=CoalitionColor!50, fill opacity=0.2] 
    \convexpath{z1, z2, z3, Ebr, Ebl}{.45cm};

    \node (Str) at (2.75, 3.25) {};
    \node (Stl) at (1.25, 3.25) {};
    \node (Sbr) at (2.75, -3.25) {};
    \node (Sbl) at (1.25, -3.25) {};
    \draw[thick,CoalitionColor, fill=CoalitionColor!50, fill opacity=0.2] 
    \convexpath{Str, Sbr, Sbl, Stl}{.45cm};

    \node (Xtr) at (-2.75, 3.25) {};
    \node (Xtl) at (-4.25, 3.25) {};
    \node (Xbr) at (-2.75, -3.25) {};
    \node (Xbl) at (-4.25, -3.25) {};
    \draw[thick,CoalitionColor, fill=CoalitionColor!50, fill opacity=0.2] 
    \convexpath{Xtr, Xbr, Xbl, Xtl}{.45cm};
    
\end{tikzpicture}
\caption{Illustration of $G$ from the reduction in \Cref{thm:CISandCNS_single_asymmetric_NP_complete} for $X = \text{CNS}$. Undotted rectangles are cliques, and dotted rectangles are independent sets. There is an arc $(i, j)$ if and only if $v_i(j) = \alpha$. The coalitions in $\pi$ are highlighted in blue.}
\end{figure}
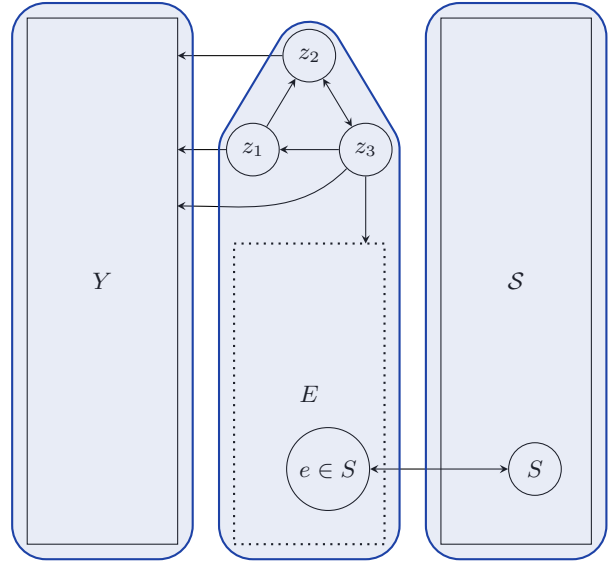

    Then, $\pi = \{E \cup \{z_1, z_2, z_3\}, \mathcal{S}, Y\}$ is a CNS partition in $G$ since every agent is blocked from leaving its coalition by at least one other agent. 
    Hence, no agent can make a CNS deviation.

    Alter the valuation of agent $z_2$ for $z_3$ such that $v_{z_2}(z_3) = -\beta$ and denote the altered game by $G'$. 
     We claim that a set cover $\mathcal{C \subseteq S}$ of $E$ with $ \lvert \mathcal{C} \rvert \leq k$ exists if and only if $G'$ has a CNS partition $\pi'$ with $d(\pi, \pi') \leq k + 3$.
    To this end, let $\mathcal{C \subseteq S}$ denote such a set cover and let $\pi' = \{E \cup \mathcal{C}, \mathcal{S \setminus C}, Y \cup \{z_1, z_2, z_3\}\}$.
    Clearly, it holds that $d(\pi, \pi') \leq k + 3$. To see that this is a CNS partition in $G'$, we perform a case analysis on the types of agents to show that no agent can make a CNS deviation.

    \begin{itemize}
        \item Let $e$ be an agent in $E$. Since $\mathcal{C}$ is a set cover, there is some $S \in \pi'(e)$ with $v_S(e) > 0$. Hence, $e$ cannot make a CNS deviation.
        \item For any agent $S \in \mathcal{S} \setminus \mathcal{C}$, there is another $S \neq S' \in \mathcal{S} \setminus \mathcal{C} $ with $v_{S'}(S) > 0$ since we assumed that $1 < k < \lvert S \rvert- 1$. 
        Hence, $S$ cannot make a CNS deviation.
        For any $S \in \mathcal{S}$ it holds for any $e \in S$ that $v_{e}(S) > 0$. Since $e \in \pi'(S)$, $S$ cannot CNS deviate.
        \item Agent $z_3$ cannot make a CNS deviation since it holds that $u_{z_3}(\pi'(z_3)) = \alpha (\lvert Y \rvert + 2) \geq u_{z_3}(Z)$ for all $Z \in \pi'$.
        \item Any agent $a \in Y \cup \{z_1, z_2\}$ cannot make a CNS deviation since $v_{z_3}(a) > 0$.
    \end{itemize}

    To prove the reverse direction, suppose that $\pi'$ is an CNS partition in $G'$ with $d(\pi, \pi') \leq k + 3$.
    Since $k + 3 < \lvert E \rvert < \frac{1}{2}\lvert Y \rvert$, there exists a unique coalition that contains the most agents of $Y$. We denote this coalition by $\pi'(y^*)$.
    Observe that it must hold that $z_3 \in \pi'(y^*)$ since 
    the utility of $z_3$ is maximal in this coalition and no agent has a positive valuation for $z_3$ in $G'$, so any CNS deviation of $z_3$ is an NS deviation. 
    Therefore, it must also hold that $z_1, z_2 \in \pi'(y^*)$.
    
    Let $R_E$ and $R_{\mathcal{S}}$ denote the set of agents in $E$ and $\mathcal{S}$ that were moved from their coalition in $\pi$ to another. Moreover, let $f: E \rightarrow \mathcal{S}$ be a function mapping every $e \in E$ to some $S \in \mathcal{S}$ with $e \in S$. We claim that $\mathcal{C} := f(R_E) \cup R_{\mathcal{S}}$ is a set cover of $E$ with $\lvert \mathcal{C} \rvert \leq k$. Since we have that $z_i \in \pi'(y^*)$ for all $i \in [3]$, it holds that $\lvert R_E \rvert + \lvert R_{\mathcal{S}} \rvert \leq k$.
    To see that $\mathcal{C}$ is a set cover of $E$, assume for contradiction that there is some $e \in E$ that is not covered by $\mathcal{C}$. Then, by construction of $\mathcal{C}$, there is no $S \in \pi'(e)$ with $e \in S$, and it holds that $e \not \in \pi'(z_3) = \pi'(y^*)$. Hence, it holds that $v_a(e) = v_e(a) < 0$ for all $e \neq a \in \pi'(e)$. Since $k+3 < \lvert E  \rvert - 1 +3 < \lvert \pi(e) \rvert$, $e$ is not in a singleton coalition. Thus, $e$ can make a CNS deviation to the empty coalition. However, this is a contradiction to $\pi'$ being CNS in $G'$, so $\mathcal{C}$ is in fact a set cover of $E$.
\end{proof}

\begin{proof}[Proof of \Cref{thm:CISandCNS_single_asymmetric_NP_complete} for $X = \text{CIS}$]
    The reduction is from exact-cover by 3-sets (X3C), where we are given a tuple $(E, \mathcal{F})$, such that each set $F \in \mathcal{F}$ is a subset of $E$ and contains exactly three elements. The question is whether a subset $\mathcal{C \subseteq F}$ exists, such that each element of $E$ is contained in exactly one of the sets in $\mathcal{C}$. Such a set $\mathcal{C}$ is called an exact cover.
    
    Let $(E, \mathcal{F})$ be an X3C instance. Without loss of generality, we can assume that $\mathcal{F}$ is a set cover of $E$ and that $\lvert E \rvert > 6$ is divisible by 3, as otherwise the problem is trivial. Enumerate the ground set $E = \{e_1, \dots, e_m\}$ and let $M := \lvert E \rvert + \lvert \mathcal{F} \rvert$.
    We create a non-symmetric \ac{ASHG} $G$ (see \Cref{Figure:CISReductionXThreeC}) with $N = E \cup \mathcal{F} \cup Y'_1 \cup \dots \cup Y'_{m} \cup W \cup \{z_1, z_2, z_3\}$ with $\lvert Y_i \rvert = M^2$, $\lvert W \rvert = M^3$, where $Y'_i = Y_i \cup B_i$ with $B_i = \{b_{i, j}: j \in [m], j \neq i\}$.
    Define the valuations of $G$ by 
    \begin{itemize}
        \item $v_e(e')= \alpha$ for all $e, e' \in E$,
        \item $v_{e}(F) = v_F(e) = \alpha$ for all $F \in \mathcal{F}$ with $e \in E \setminus F$,
        \item $v_F(F') = \alpha$ for all $F, F' \in \mathcal{F}$,
        \item $v_e(w) = v_w(e) = \alpha$ for all $e \in E$ and $w \in W$,
        \item $v_{e_i}(y_i) = \alpha$ for all $y_i \in Y_i$ and $i \in [m]$
        \item $v_{e_i}(y_j) = v_{y_j}(e_i)$ for all $y_j \in Y_j$ and $i, j \in [m]$ with $i \neq j$,
        \item $v_{e_{\ell}}(b_{i, j}) = v_{b_{i, j}}(e_{\ell}) = \alpha$ for all $i, j, \ell \in [m]$ with $\ell \neq i, j$,
        \item $v_{z_1}(z_2) = v_{z_2}(z_3) = v_{z_3}(z_1) = \alpha$,
        \item $-\beta$ for all other valuations,
    \end{itemize}
    where we set $\alpha := \alpha (\lvert N \rvert)$ and $\beta := \beta(\lvert N \rvert)$. Since $\alpha$ and $\beta$ are polynomial-time computable functions, $G$ is polynomial in the encoding size of $(E, \mathcal{F})$.
    
    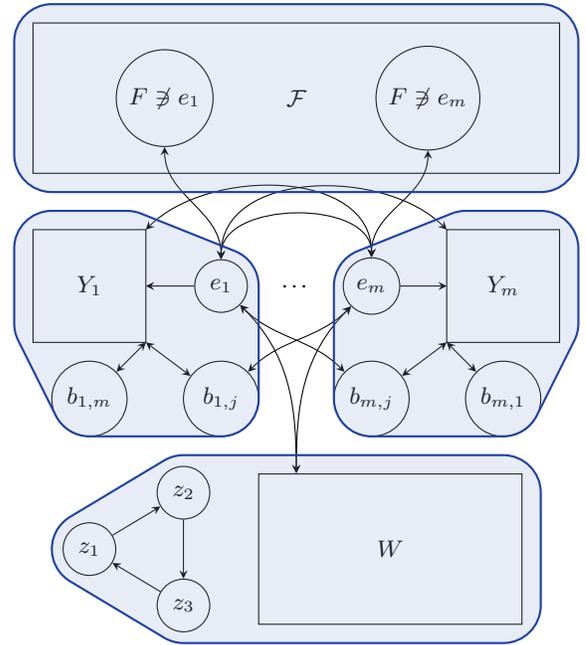
\begin{figure}[ht]
    \centering
\begin{tikzpicture}[>=stealth,
        every node/.style={font=\small},
        main/.style={circle,draw,minimum size=7mm,font=\small},
        rect/.style={draw, align=center}]
        
    \node[rect, minimum width = 7cm, minimum height = 2cm] (S*)  at (0,0) {$\mathcal{F}$};

    \node[main] (e1) at (-1, -2.5) {$e_1$};
    \node[main] (em) at (1, -2.5) {$e_m$};

    \node[main] (Snm) at (1.75, 0) {$F \not\ni e_m$};
    \node[main] (Sn1) at (-1.75, 0) {$F \not \ni e_1$};

    \node[rect, minimum width = 1.5cm, minimum height = 1.5cm] (Y1) at (-2.75, -2.5) {$Y_1$};
    \node[rect, minimum width = 1.5cm, minimum height = 1.5cm] (Ym) at (2.75, -2.5) {$Y_m$};
    \node at ($(e1)!.5!(em)$) {\ldots};

    \node[main, minimum size = 10mm, font = \footnotesize] (b1j) at (-1, -4) {$b_{1, j}$};

    \node[main, minimum size = 10mm, font = \footnotesize] (bmj) at (1, -4) {$b_{m, j}$};

    \node[main, minimum size = 10mm, font = \footnotesize] (b1m) at (-2.75, -4) {$b_{1,m}$};

    \node[main, minimum size = 10mm, font = \footnotesize] (bm1) at (2.75, -4) {$b_{m, 1}$};

    \node[main] (z2) at (-1.5,-5.25) {$z_2$};
    \node[main] (z1) at (-2.75, -6) {$z_1$};
    \node[main] (z3) at (-1.5, -6.75) {$z_3$};

    \node[rect, minimum width = 3.5cm, minimum height = 2cm] (X) at (1.25, -6) {$W$};

    \node (xm) at (0, -5.115) {};

    \node (e1Ym) at (2.125, -1.875) {};
    \node(emY1) at (-2.125, -1.875) {};

    \begin{scope}[every node/.style={font=\small}]
                \path
                    (b1j) edge[<->] node[] {} (-2, -3.25)
                    (b1m) edge[<->] node[] {} (-2, -3.25)
                    (b1j) edge[<->, out = 45, in = 225] node[] {} (em)
                    (bmj) edge[<->] node[] {} (2, -3.25)
                    (bm1) edge[<->] node[] {} (2, -3.25)
                    (bmj) edge[<->, out = 135,in = 315] node[] {} (e1)
                    (z1)  edge[->] node[] {} (z2)
                    (e1) edge[<->, out = 90, in = 90] node[] {} (em)
                    (z2) edge[->] node[] {} (z3)
                    (z3) edge[->] node[] {} (z1)
                    (e1) edge[<->, out = 315, in = 90] node[] {} (xm)
                    (em) edge[<->, out = 225, in = 90] node[] {} (xm)
                    (e1) edge[<->, out = 90, in = 135] node[] {} (e1Ym)
                    (em) edge[<->, out = 90, in = 45] node[] {} (emY1)
                    (e1) edge[<->, out=90, in = 270] node[] {} (Sn1)
                    (em) edge[<->, out=90, in = 270] node[] {} (Snm);                    
            \end{scope}     
            
    \draw[->] (e1) -- node[out = 315] {} (-2, -2.5);
    \draw[->] (em) -- node[] {} (2, -2.5);

    \node (Xtr) at (2.75, -5.25) {};
    \node (Xbr) at (2.75, -6.75) {};
    \draw[thick,CoalitionColor, fill=CoalitionColor!50, fill opacity=0.2] 
    \convexpath{z3, z1,z2, Xtr, Xbr}{.5cm};

    \node (Y1bl) at (-3.25, -3) {};
    \node (Y1br) at (-2.25, -3) {};
    \node (Y1tl) at (-3.25, -2) {};
    \node (Y1tr) at (-2.25, -2) {};
    \draw[thick,CoalitionColor, fill=CoalitionColor!50, fill opacity=0.2] 
    \convexpath{e1, b1j,b1m, Y1bl, Y1tl, Y1tr}{.5cm};

    \node (Ymbr) at (3.25, -3) {};
    \node (Ymbl) at (2.25, -3) {};
    \node (Ymtl) at (3.25, -2) {};
    \node (Ymtr) at (2.25, -2) {};
    \draw[thick,CoalitionColor, fill=CoalitionColor!50, fill opacity=0.2] 
    \convexpath{bm1, bmj, em,Ymtr, Ymtl, Ymbr}{.5cm};

    \node (Stl) at (-3.25, 0.75) {};
    \node (Sbl) at (-3.25, -0.75) {};
    \node (Str) at (3.25, 0.75) {};
    \node (Sbr) at (3.25, -0.75) {};
    \draw[thick,CoalitionColor, fill=CoalitionColor!50, fill opacity=0.2] 
    \convexpath{Sbl, Stl, Str, Sbr}{.5cm};
\end{tikzpicture}
\caption{Illustration of $G$ from the reduction in \Cref{thm:CISandCNS_single_asymmetric_NP_complete} for $X = \text{CIS}$. Rectangles are cliques. There is an arc $(i, j)$ if and only if $v_i(j) = \alpha$. The coalitions in $\pi$ are highlighted in blue.}
\label{Figure:CISReductionXThreeC}
\end{figure}

    We claim that $\pi = \{\{e_1\} \cup Y'_1, \dots, \{e_m\} \cup Y'_m, \mathcal{F}, W \cup \{z_1, z_2, z_3\}\}$ is CIS in $G$. To see this, we perform a case analysis to show that no type of agent can make a CIS deviation.
    \begin{itemize}
        \item An agent $e_j \in E$ can only increase her utility by joining $\pi'(z_1)$ or $\pi'(e_i)$ for some $i \neq j$. However, it holds that $v_{z_1}(e_j), v_{b_{i, j}}(e_j) = - \beta < 0$ and $z_1 \in \pi(z_1), b_{i, j} \in \pi(e_i)$, so agent $e_j$ cannot make a CIS deviation.
        \item An agent $a \in \mathcal{F}\cup Y'_1 \cup \dots\cup Y'_m \cup X \cup \{z_1, z_2, z_3\}$ cannot make a CIS deviation since there is some $a' \in \pi(a)$ with $v_{a'}(a) > 0$ that blocks her from leaving her coalition.
    \end{itemize}
    Alter the valuation of agent $z_2$ for $z_3$ such that $v_{z_2}(z_3)= -\beta$ and denote the altered game by $G'$. We claim that an exact cover $\mathcal{C \subseteq F}$ of $E$ exists if and only if $G'$ has a CIS partition $\pi'$ with $d(\pi, \pi') \leq \frac{m}{3} + 3$.

    To this end, let $\mathcal{C \subseteq F}$ be an exact cover $E$ and let $\pi' = \{\mathcal{F \setminus C}, \{e_1\} \cup Y'_1, \dots, \{e_m\} \cup Y'_m, \mathcal{C} \cup W, \{z_1\}, \{z_2\}, \{z_3\}\}$. Clearly, it holds that $d(\pi, \pi') \leq \frac{m}{3} + 3$. To see that $\pi'$ is CIS in $G'$, we again perform a case analysis on the types of agents in $G'$.
    \begin{itemize}
        \item An agent $e_j \in E$ can only increase her utility by joining $\mathcal{C} \cup W$ or $\pi'(e_i)$ for some $i \neq j$. However, it holds that $v_F(e_j), v_{b_{i, j}}(e_j) = - \beta < 0$ and $F \in \mathcal{C}, b_{i, j} \in \pi'(e_i)$ since $\mathcal{C}$ is a set cover of $E$. Hence, agent $e_j$ cannot make a CIS deviation.
        \item For an agent $F \in \mathcal{\mathcal{C}}$, there is some $F' \in \pi'(F) \cap \mathcal{F}$ with $v_{F'}(F) > 0$ since $\lvert \mathcal{C} \rvert > 1$. Hence, $S$ cannot make a CIS deviation.
        \item Analogously, an agent $F \in \mathcal{F \setminus C}$ cannot make a CIS deviation since there is some $F' \in \pi'(F) = \mathcal{F \setminus C}$ with $v_{F'}(F) > 0$ since $\lvert \mathcal{C} \rvert = \frac{m}{3}$.
        \item An agent $a \in Y'_1 \cup \dots\cup Y'_m \cup W$ cannot CIS deviate to another coalition since there is some $a' \in \pi(a)$ with $v_{a'}(a) > 0$.
        \item Finally, consider an agent $z_i$ for some $i \in [3]$. For all $a \in N \setminus \{z_1, z_2, z_3\}$ it holds that $v_{z_i}(a) = v_{a}(z_i) = - \beta < 0$, so $z_i$ cannot make a CIS deviation to any coalition containing such an agent $a$. Moreover, if $v_{z_i}(z_j) > 0$, then $v_{z_j}(z_i) < 0$ so $z_i$ cannot deviate to any other coalition as well.
        \end{itemize}

    To prove the reverse direction, suppose that $\pi'$ is CIS in $G'$ with $d(\pi, \pi') \leq \frac{m}{3} + 3$.
    Let $\pi'(w^*)$ denote the coalition in $\pi'$ that contains most agents in $W$. This coalition is unique since $\frac{1}{2}\lvert W \rvert > m$. 
    Since no agent has a positive valuation for $z_3$, it holds that $z_3 \not \in \pi'(w^*)$. 
    Thus, it also holds that $z_1, z_2 \not \in \pi'(w^*)$.
    Let $D = \{d_1,\dots d_r\}$ be the agents that change their coalition in a shortest sequence of changes from $\pi$ to $\pi'$. Moreover, let $\pi^{(i)}$ denote the partition after $d_1, \dots, d_i$ have changed their coalition in this sequence and let $\pi^{(i)}(w^*)$ denote the coalition in $\pi^{(i)}$ containing most agents in $W$. Since $\frac{1}{2} \lvert W \rvert > m$, $\pi^{(i)}(w^*)$ is well-defined. Clearly, it hold that $\pi^{(0)} = \pi$ and $\pi^{(r)} = \pi'$. 
    Note that the order of the agents in $D$ does not matter, so without loss of generality, we may assume that $d_1 = z_1, d_2 = z_2, d_3 = z_3$. In $\pi^{(3)}$, any $e_i$ can make a CIS deviation to $\pi^{(3)}(w^*) = W$. 
    Let $R_{\mathcal{F}} : = D\cap \mathcal{F}$ and suppose for contradiction that $\lvert R_{\mathcal{F}} \rvert < \frac{m}{3}$. Then, there are at least $m - 3 \lvert R_{\mathcal{F}} \rvert > 0$ agents in $E$ for which there is no $F \in \pi'(w^*) \cap R_{\mathcal{F}}$ with $v_F(e) < 0$. Denote these agents by $E^*$. Then, for any $i > 3$ with $d^{(i)} \not \in \mathcal{F}$, at most two agents in $E^*$ that could make a CIS deviation in $\pi^{(i - 1)}$, cannot make a CIS deviation in $\pi^{(i)}$. 
    
    Hence, there need to be at least $\frac{1}{2}\lvert E^* \rvert \geq  \frac{m - 3\lvert R_{\mathcal{F}} \rvert}{2}$ many such $i$'s since each $F \in R_{\mathcal{F}}$ blocks at most 3 agents in $E$ from making a CIS deviation.
    Therefore, by assumption that $\lvert R_{\mathcal{F}} \rvert < \frac{m}{3}$, it holds that
    \begin{align*}
    r - 3 \geq \frac{1}{2}\lvert E^* \rvert + \lvert R_{\mathcal{F}} \rvert &\geq \frac{1}{2}(m- \lvert R_{\mathcal{F}}\rvert) \\ &> \frac12(m - \frac{m}{3}) = \frac{m}{3}.
    \end{align*}    
    Since $D$ corresponds to a shortest sequence of coalition changes, this implies that $d(\pi, \pi') > \frac{m}{3} + 3$, which is a contradiction. Therefore, we have that $\lvert R_{\mathcal{F}} \rvert = \frac{m}{3}$ and that $R_{\mathcal{F}}$ is a set cover of $E$ since no $e \in E$ can make a CIS deviation in $\pi'$. Finally, by $ \lvert R_{\mathcal{F}} \rvert \leq \frac{m}{3}$, $R_{\mathcal{F}}$ is an exact cover of $E$.
\end{proof}

\subsection{NS and IS}

\ISsymmetric*

\begin{proof}
    We first prove hardness for \textsc{IS-1-1-Sym-Altered}. At the end of the proof, we explain how to slightly adapt the reduction for the non-symmetric case.
    Let $(E, \mathcal{S}, k)$ be a set cover instance. Without loss of generality, we can assume that $1 < k < \lvert E \rvert-1, \lvert \mathcal{S} \rvert - 1$ and that $\mathcal{S}$ is a set cover of $E$, as otherwise the problem is trivial.
    Let $M = 4(\lvert E \rvert + \lvert \mathcal{S} \rvert)$. Define a symmetric \ac{ASHG} $G$ (see \Cref{fig:is_hardness}) with agent set $N = E \cup \mathcal{S} \cup X \cup Y \cup \{z_1, z_2\}$ with $\lvert X \rvert = M^2$ and $\lvert Y \rvert = M^3$, where $X = X_1 \cup X_2$ with $\lvert X_1 \rvert = 3\lvert X_2 \rvert$. Let its valuations be defined by 
    \begin{itemize}
        \item $v(e, S) = \alpha$ for all $S \in \mathcal{S}$ and $e \in E \setminus S$,
        \item $v(e, e') = v(e, x) = v(x,x')  = \alpha$ for all $e, e' \in E$ and $x, x' \in X$,
        \item $v(S, x_1) = \alpha$ for all $S \in \mathcal{S}$ and $x_1 \in X_1$,
        \item $v(z_1, x) = v(z_1, y) = \alpha$ for all $x \in X$ and $y \in Y$,
        \item $v(z_2, y) = v(y, y')= \alpha$ for all $y, y' \in Y$,
        \item $-\beta$ for all other valuations,
    \end{itemize}
    where we set $\alpha := \alpha(\lvert N \rvert)$ and $\beta := \beta(\lvert N \rvert)$. Since $\alpha$ and $\beta$ are polynomial-time computable functions, $G$ is polynomial in the encoding size of $(E, \mathcal{S})$.
    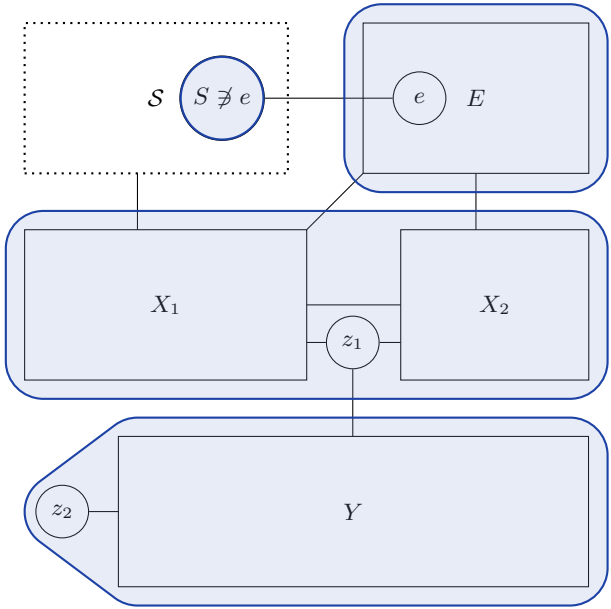
\begin{figure}[ht]
    \centering
\begin{tikzpicture}[>=stealth,
        every node/.style={font=\small},
        main/.style={circle,draw,minimum size=7mm,font=\small},
        rect/.style={draw, align=center}]
        
    \node[rect, minimum width = 3cm, minimum height = 2cm] (E)  at (2.25,0) {$E$};

    \node[rect, dotted, thick, minimum width = 3.5cm, minimum height = 2cm] (S*)  at (-2, 0) {$\mathcal{S}$};

    \node[rect, minimum width = 3.75cm, minimum height = 2cm] (X1)  at (-1.875, -2.75) {$X_1$};

    \node[rect, minimum width = 2.5cm, minimum height = 2cm] (X2)  at (2.5, -2.75) {$X_2$};

    \node[main] (zp) at (-3.25, -5.5) {$z_2$};
    \node[main] (z) at (0.6125, -3.25) {$z_1$};

    \node[rect, minimum width = 6.25cm, minimum height = 2cm] (Y) at (0.6215,-5.5) {$Y$};
    
    \node[main] (e) at (1.5, 0) {$e$};
    \node[main, font = \footnotesize] (Se) at (-1.125, 0) {$S \not\ni e$};

    \begin{scope}[every node/.style={font=\small}]
                \path
                    (e)  edge[-] node[] {} (Se);
                    
            \end{scope}     
            
    \draw[-] (0, -2.75) -- node[] {} (1.25, -2.75);
    \draw[-] (-2.25, -1) -- node[] {} (-2.25, -1.75);
    \draw[-] (2.25, -1) -- node[] {} (2.25, -1.75);
    \draw[-] (0.75, -1) -- node[] {} (0, -1.75);
    \draw[-] (z) -- node[] {} (0,-3.25);
    \draw[-] (z) -- node[] {} (1.25,-3.25);
    \draw[-] (zp) -- node[] {} (-2.5,-5.5);
    \draw[-] (z) -- node[] {} (0.6125, -4.5);

    \node (s) at (-1.125, 0.0001) {};
    \draw[thick,CoalitionColor, fill=CoalitionColor!50, fill opacity=0.2] 
    \convexpath{Se, s}{.55cm};

    \node (Etr) at (3.5, 0.75) {};
    \node (Ebr) at (3.5, -0.75) {};
    \node (Etl) at (1, 0.75) {};
    \node (Ebl) at (1, -0.75) {};
    \draw[thick,CoalitionColor, fill=CoalitionColor!50, fill opacity=0.2] 
    \convexpath{Etr, Ebr, Ebl, Etl}{.5cm};

    \node (Xtr) at (3.5, -2) {};
    \node (Xbr) at (3.5, -3.5) {};
    \node (Xtl) at (-3.5, -2) {};
    \node (Xbl) at (-3.5, -3.5) {};
    \draw[thick,CoalitionColor, fill=CoalitionColor!50, fill opacity=0.2] 
    \convexpath{Xtr, Xbr, Xbl, Xtl}{.5cm};

    \node (Ytr) at (3.5, -4.75) {};
    \node (Ybr) at (3.5, -6.25) {};
    \node (Ytl) at (-2.25, -4.75) {};
    \node (Ybl) at (-2.25, -6.25) {};
    \draw[thick,CoalitionColor, fill=CoalitionColor!50, fill opacity=0.2] 
    \convexpath{Ytr, Ybr, Ybl,zp, Ytl}{.5cm};
\end{tikzpicture}
\caption{\label{fig:is_hardness}Illustration of $G$ from the reduction in \Cref{IS_completeness_symmetric_FEGs}. Undotted Rectangles are cliques, and dotted rectangles are independent sets. There is an edge $\{i, j\}$ if and only if $v(i, j) = \alpha$. Coalitions in $\pi$ are highlighted in blue.}
\end{figure}
    We claim that $\pi = \{E, \{S\}_{S \in \mathcal{S}}, \{z_1\} \cup X, \{z_2\} \cup Y\}$ is IS in $G$. To see this, we perform a case analysis on the types of agents in $G$ to show that no agent can make an IS deviation.
    \begin{itemize}
        \item For an agent $e \in E$, it holds that $u_e(\pi(e)) = \alpha (\lvert E \rvert - 1)$, so $e$ can only increase her utility by deviating to $\{z_1\} \cup X$. However, agent $z_1$ blocks her from making such a deviation since $v(z_1, e) = -\beta < 0$. 
        \item An agent $S \in \mathcal{S}$ cannot make an IS deviation since any other coalition in $\pi$ contains an agent $b$ with $w(S, b) = -\beta <  0$ that blocks $S$ from joining her coalition. 
        \item An agent $a \in X \cup Y \cup \{z_2\}$ cannot deviate since she has maximal utility in $\pi(a)$.
        \item Agent $z_1$ can only increase her utility by deviating to $\{z_2\} \cup Y$, but agent $z_2$ blocks her from making such a deviation since $v(z_1, z_2) = -\beta < 0$.
    \end{itemize}
    Alter the valuation between agents $z_1$ and $z_2$ such that $v'(z_1, z_2) = \alpha$ and denote the altered game by $G'$. We claim that a set cover $\mathcal{C} \subseteq \mathcal{S}$ of $E$ with $\lvert \mathcal{C} \rvert = k$ exists if and only if $G'$ has an IS partition $\pi'$ with $d(\pi, \pi') \leq k + 1$.

    To this end, let $\mathcal{C \subseteq S}$ denote such a set cover of $E$. Let $\pi' = \{ E, \{S\}_{S \in \mathcal{S \setminus C}}, \mathcal{C} \cup X, \{z_1, z_2\} \cup Y \}$. Clearly, it holds that $d(\pi, \pi') \leq k+1$. To see that $\pi'$ is IS, we again perform a case analysis to show that no agent can make an IS deviation.
    \begin{itemize}
        \item An agent $e \in E$ can only increase her utility by deviating to $\mathcal{C} \cup X$. However, since $\mathcal{C}$ is a set cover of $E$, there is some $S \in \mathcal{C}$ with $e \in S$ and $w(e, S) = -\beta < 0$. Hence, $e$ cannot make an IS deviation.
        \item For an agent $S \in \mathcal{C}$, it holds that $u_S(\pi(S)) = \alpha ( \lvert X_1 \rvert + k -1) - \beta (\lvert X_2 \rvert)\geq 0$ since we assumed that $\alpha \geq \beta$. Hence, agent $S$ cannot deviate to the empty coalition. Any other coalition contains an agent $b$ with $v(b, S) < 0$, so $S$ cannot make an IS deviation.
        \item An agent $S \in \mathcal{S\setminus C}$ cannot make an IS deviation since any other nonempty coalition in $\pi'$ contains an agent $b$ with $w(b, S) < 0$ that blocks $S$ from joining this coalition.
        \item An agent $a \in \mathcal{C} \cup X \cup Y \cup \{z_1, z_2\}$ cannot deviate since she has maximal utility in $\pi'(a)$.
    \end{itemize}
    To prove the reverse direction, suppose that $\pi'$ is IS in $G'$ with $d(\pi, \pi') \leq k + 1$. Since any agent in $\{z_2\} \cup Y$ has negative valuations for all other agents except $z_1$, and $z_1$ has maximal utility for the coalition in $\pi'$ that contains the most agents in $Y$ since $d(\pi, \pi') \leq k +1$, it holds that $\{z_1, z_2\} \cup Y \in \pi'$. 

    Let $R_E$ and $R_{\mathcal{S}}$ denote the agents in $E$ and $\mathcal{S}$ that were moved from their coalition in $\pi$ to another. Let $f: E \rightarrow \mathcal{S}$ map each $e \in E$ to some $S \in \mathcal{S}$ with $e \in S$. We claim that $\mathcal{C} := f(R_E) \cup R_{\mathcal{S}}$ is a set cover of $E$ with $\lvert C \rvert \leq k$. By assumption, it holds that $\lvert R_E \rvert + \lvert R_{\mathcal{S}} \rvert \leq k$ since $z_1$ has moved to another coalition as well. To see that $\mathcal{C}$ is a set cover of $E$, suppose that there is some $e \in E$ that is not covered by $\mathcal{C}$. 
    Denote the coalition in $\pi'$ that contains most agents in $X$ by $\pi'(x^*)$. Then, by construction of $\mathcal{C}$, there is no $S \in \pi'(x^*)$ with $w(e, S) < 0$. Since $e \not \in R_E$ and $d(\pi, \pi') \leq k + 1$, it holds that $e \not \in \pi'(x^*)$. Hence, $e$ can make an IS deviation to $\pi'(x^*)$. However, this is a contradiction to $\pi'$ being IS, so $\mathcal{C}$ is in fact a set cover of $E$. If $\lvert \mathcal{C} \rvert < k$, one can obtain a set cover $\mathcal{C'}$ with $\lvert \mathcal{C'} \rvert = k$ by adding arbitrary sets in $\mathcal{S \setminus C}$ to $C$.

    For the non-symmetric case, the proof works analogously. The only valuations we need to change are $v_{z_1}(z_2) = \alpha$ and $v_{z_2}(z_1) = - \beta$. Then, after the valuation of $z_2$ for $z_1$ is altered to $v'_{z_2}(z_1) = \alpha$, the same argument as in the symmetric case applies.
\end{proof}

\AEGsNP*

\begin{proof}
    We first prove hardness for the symmetric case, and explain afterwards how to adapt the reduction for the non-symmetric case.
    
    In symmetric AEGs, Nash stability and individual stability are equivalent due to the following observation. Suppose that $v(a, b) < 0$ and $\pi(a) \neq \pi(b)$, then $u_a(\pi(b)) \leq \lvert N \rvert - 2 + v(a, b) = -2 < 0$. Hence, an agent can only make an NS deviation to another coalition if no agent in this coalition has a negative valuation for it.
    
    Hence, it suffices to prove that \textsc{IS-1-1-Sym-Altered} is NP-complete in AEGs.
    
    Let $(E, \mathcal{F})$ be an X3C instance. Without loss of generality, we can assume that $\mathcal{F}$ is a set cover of $E$ and that $\lvert E \rvert$ is divisible by $3$, as otherwise the problem is trivial. 
    Let $M = \lvert E \rvert + \lvert \mathcal{F} \rvert + 7$ and define $G$ to be a symmetric AEG (see \Cref{fig:aegs_hardness}) with agent set $N = E \cup \mathcal{F} \cup X_1 \cup X_2 \cup Y \cup Y_{\mathcal{F}} \cup Z \cup \{z_1, z_2\} \cup W$, where $\lvert X_1 \rvert = \lvert X_2 \rvert = M^2, \lvert Y \rvert = M^3, \lvert W \rvert = M^4$ and $Y_{\mathcal{F}} = \{y_F : F \in \mathcal{F}\}$. Define its valuations by
    \begin{itemize}
        \item $v(e, F) = 1$ for all $F \in \mathcal{F}$ and $e \in E \setminus F$,
        \item $v(e, x_1) = 1$ for all $e \in E$ and $x_1 \in X_1$,
        \item $v(e, e') = v(e, y) = v(y,y')  = 1$ for all $e, e' \in E$ and $y, y' \in Y \cup Y_{\mathcal{F}}$,
        \item $v(F, x_2) = 1$ for all $F \in \mathcal{F}$ and $x_2 \in X_2$,
        \item $v(F, y) = 1$ for all $F \in \mathcal{F}$ and $y \in Y$,
        \item $v(F, y_{F'}) = 1$ for all $F, F' \in \mathcal{F}$ with $F \neq F'$,
        \item $v(z_1, y) = v(z_1, w) = 1$ for all $y \in Y \cup Y_{\mathcal{F}}$ and $w \in W$,
        \item $v(z_2, w) = v(w, w') = 1$ for all $w, w' \in W$,
        \item $-n$ for all other valuations,
    \end{itemize}
    where $n = \lvert N \rvert$. Clearly, $G$ is polynomial in the encoding size of $(E, \mathcal{F})$.
    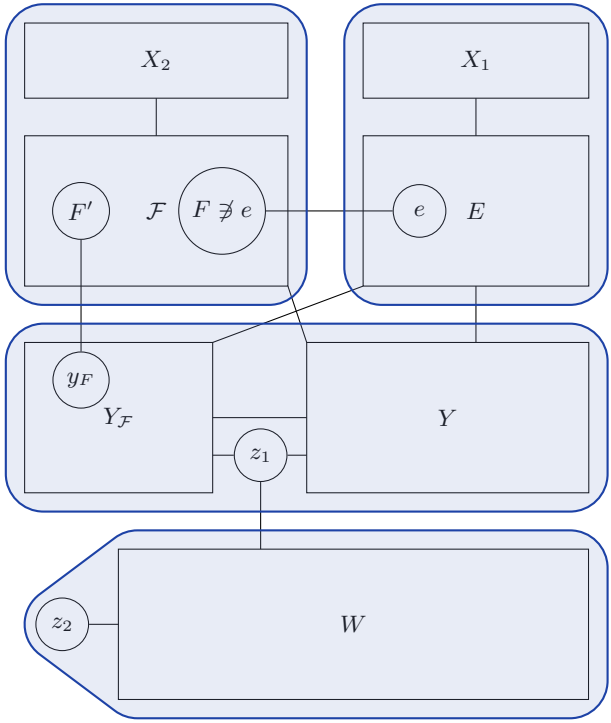
\begin{figure}[ht]
    \centering
\begin{tikzpicture}[>=stealth,
        every node/.style={font=\small},
        main/.style={circle,draw,minimum size=7mm,font=\small},
        rect/.style={draw, align=center}]
        
    \node[rect, minimum width = 3cm, minimum height = 1cm] (X) at (2.25, 2) {$X_1$};
    \node[rect, minimum width = 3cm, minimum height = 2cm] (E)  at (2.25,0) {$E$};

    \node[rect, minimum width = 3.5cm, minimum height = 1cm] (X2) at (-2, 2) {$X_2$};
    \node[rect, minimum width = 3.5cm, minimum height = 2cm] (F*)  at (-2, 0) {$\mathcal{F}$};

    \node[rect, minimum width = 3.75cm, minimum height = 2cm] (X)  at (1.875, -2.75) {$Y$};

    \node[rect, minimum width = 2.5cm, minimum height = 2cm] (XF)  at (-2.5, -2.75) {$Y_{\mathcal{F}}$};

    \node[main] (zp) at (-3.25, -5.5) {$z_2$};
    \node[main] (z) at (-0.6125, -3.25) {$z_1$};

    \node[rect, minimum width = 6.25cm, minimum height = 2cm] (Z) at (0.6215,-5.5) {$W$};

    \node[main] (e) at (1.5, 0) {$e$};
    \node[main] (Se) at (-1.125, 0) {$F \not\ni e$};
    \node[main] (Fp) at (-3, 0) {$F'$};
    \node[main] (xf) at (-3, -2.25) {$y_F$};

    \begin{scope}[every node/.style={font=\small}]
                \path
                    (e)  edge[-] node[] {} (Se)
                    (xf) edge[-] node[] {} (Fp);
                    
            \end{scope}     

    \draw[-] (-0.25, -1) -- node[] {} (0, -1.75);
    \draw[-] (2.25, 1) -- node[] {} (2.25, 1.5);
    \draw[-] (0, -2.75) -- node[] {} (-1.25, -2.75);
    \draw[-] (2.25, -1) -- node[] {} (2.25, -1.75);
    \draw[-] (0.75, -1) -- node[] {} (-1.25, -1.75);
    \draw[-] (z) -- node[] {} (0,-3.25);
    \draw[-] (z) -- node[] {} (-1.25,-3.25);
    \draw[-] (zp) -- node[] {} (-2.5,-5.5);
    \draw[-] (z) -- node[] {} (-0.6125, -4.5);
    \draw[-] (-2, 1) -- node[] {} (-2, 1.5);

    \node (Stl) at (-3.5, 2.25) {};
    \node (Sbl) at (-3.5, -0.75) {};
    \node (Str) at (-0.5, 2.25) {};
    \node (Sbr) at (-0.5, -0.75) {};
    \draw[thick,CoalitionColor, fill=CoalitionColor!50, fill opacity=0.2] 
    \convexpath{Sbl, Stl, Str, Sbr}{.5cm};

    \node (Etr) at (3.5, 2.25) {};
    \node (Ebr) at (3.5, -0.75) {};
    \node (Etl) at (1, 2.25) {};
    \node (Ebl) at (1, -0.75) {};
    \draw[thick,CoalitionColor, fill=CoalitionColor!50, fill opacity=0.2] 
    \convexpath{Etr, Ebr, Ebl, Etl}{.5cm};

    \node (Xtr) at (3.5, -2) {};
    \node (Xbr) at (3.5, -3.5) {};
    \node (Xtl) at (-3.5, -2) {};
    \node (Xbl) at (-3.5, -3.5) {};
    \draw[thick,CoalitionColor, fill=CoalitionColor!50, fill opacity=0.2] 
    \convexpath{Xtr, Xbr, Xbl, Xtl}{.5cm};

    \node (Ytr) at (3.5, -4.75) {};
    \node (Ybr) at (3.5, -6.25) {};
    \node (Ytl) at (-2.25, -4.75) {};
    \node (Ybl) at (-2.25, -6.25) {};
    \draw[thick,CoalitionColor, fill=CoalitionColor!50, fill opacity=0.2] 
    \convexpath{Ytr, Ybr, Ybl,zp, Ytl}{.5cm};
\end{tikzpicture}
\caption{\label{fig:aegs_hardness}Illustration of the symmetric AEG $G$ from the reduction in \Cref{NS_IS_hardness_AEGs}. Rectangles are cliques. There is an edge $\{i, j\}$ if and only if $v(i, j) = 1$. Coalitions in $\pi$ are highlighted in blue.}
\end{figure}
We claim that $\pi = \{E \cup X_1, \mathcal{F} \cup X_2, \{z_1\} \cup Y \cup Y_F, \{z_2\} \cup W\}$ is IS in $G$. To see this, we perform a case analysis to show that no type of agent in $G$ can make an IS deviation. Clearly, every agent $a \in N$ has $u_a(\pi(a)) \geq 0$, so we only need to argue about IS deviation to nonempty coalitions.
\begin{itemize}
    \item For an agent $e \in E$, it holds that $v(z_1, e) = v(z_2, e)  = v(x_2, e) = -n$ for $x_2 \in X_2$. Since any other nonempty coalition contains $z_1, z_2$ or an $x_2$, $e$ cannot make an IS deviation.
    \item Analogously, for $a \in N \setminus E$ any nonempty coalition other than $\pi(a)$ contains an agent with a negative valuation for $a$. Thus, $a$ cannot make an IS deviation.
\end{itemize}
Alter the valuation between agents $z_1$ and $z_2$ such that $v'(z_1, z_2) = 1$ and denote the altered game by $G'$. We claim that an exact cover $\mathcal{C \subseteq F}$ of $E$ exists if and only if $G'$ has an IS partition with $d(\pi, \pi') \leq \frac{2 \lvert E \rvert}{3} + 1 =: k + 1$.
To this end, let $\mathcal{C \subseteq F}$ denote an exact cover of $E$. Let $\pi' = \{E \cup X_1, (\mathcal{F \setminus C}) \cup X_2, \mathcal{C} \cup Y \cup Y_{\mathcal{F \setminus C}}, Y_{\mathcal{C}}, \{z_1, z_2\} \cup W\}$, where $Y_{\mathcal{C}} := \{y_F: F \in \mathcal{C}\}$ and $Y_{\mathcal{F \setminus C}} := Y_{\mathcal{F}} \setminus Y_{\mathcal{C}}$. Clearly, it holds that $d(\pi, \pi') \leq k + 1$ since $\lvert \mathcal{C} \rvert= \frac{\lvert E \rvert}{3} = \frac{k}{2}$. To see that $\pi'$ is IS in $G'$, we perform a case analysis to show that no agent can make an IS deviation. Again, it suffices to consider IS deviations to nonempty coalitions since $u_a(\pi'(a)) \geq 0$ for all $a \in N$.
\begin{itemize}
    \item An agent $e \in E$ cannot deviate to $\mathcal{C} \cup Y \cup Y_{\mathcal{F \setminus C}}$ since $\mathcal{C}$ is a set cover, so there is some $F \in \mathcal{C}$ with $e \in F$ and $v(e, F) = -n$. Because $u_e(Y_{\mathcal{C}}) = \lvert C \rvert \leq k \leq u_e(\pi'(e))$, $e$ cannot deviate to $Y_{\mathcal{C}}$. Any other nonempty coalition either contains an $x_2 \in X_2$ with $v(e, x_2) =-n$ or $z_1$ with $v(e, z_1) = -n$. Hence, $e$ cannot make an IS deviation.
    
    \item An agent $F \in \mathcal{C}$ cannot make an IS deviation since $u_F(\pi'(F)) \geq M^3 \geq u_F((\mathcal{F \setminus C}) \cup X_2), u_F(Y_{\mathcal{C}})$ and any other nonempty coalition contains an agent with a negative valuation for $F$.

    \item An agent $F \in \mathcal{F\setminus C }$ cannot make an IS deviation to $\mathcal{C} \cup Y \cup Y_{\mathcal{F\setminus C}}$ since $y_F \in Y_{\mathcal{F \setminus C}}$ with $v(F, y_F) = -n$. Any other nonempty coalition also contains an agent with a negative valuation for $F$ and $u_F(\pi'(F)) \geq 0$, so she cannot make an IS deviation.

    \item An agent $a \in Y \cup Y_{\mathcal{F}} \cup X_1 \cup X_2$ cannot make an IS deviation since every other nonempty coalition contains an agent with a negative valuation for $a$.

    \item An agent $a \in \{z_1, z_2\} \cup W$ has $u_a(\pi'(a)) \geq M^4 \geq \lvert u_a(\pi'(b)) \rvert$ for any $b \in N \setminus (\{z_1, z_2\} \cup W)$. Hence, her utility is maximal in her coalition, so she cannot make an IS deviation.
\end{itemize}
To prove the reverse direction, suppose that $\pi'$ is IS in $G'$ with $d(\pi, \pi') \leq k + 1$.
Note that any $a, a' \in \{z_2\} \cup W$ have exactly the same valuation for all agents other than themselves, and $v(a, a') = 1$, so they must be in the same coalition. Hence, there is some $S \in \pi'$ with $\{z_2\} \cup W \subseteq S$. Since $\lvert W \rvert > 5 M^3 >  \lvert N \setminus W \rvert$ and any agent in $\{z_2\} \cup W$ has negative valuations for all other agents except $z_1$, it must hold that $\{z_1, z_2\} \cup W = S \in \pi'$.

Let $R_E, R_{\mathcal{F}}$ and $R_{Y_{\mathcal{F}}}$ denote the agents in $E, \mathcal{F}$ and $Y_{\mathcal{F}}$ that were moved from their coalition in $\pi$ to another.
Denote the coalition in $\pi'$ that contains most agents in $Y$ by $\pi'(y^*)$. This coalition is unique since $\frac12\lvert Y \rvert > k + 1$. Since any $y, y'$ have exactly the same valuations for all agents other than $y, y'$, and $v(y, y') = 1$, it holds that $Y \subseteq \pi'(y^*)$.

Note that $\pi'(y^*)$ is the only coalition that an agent $F \in \mathcal{F}$ can join to potentially increase her utility.
However, $F \in \pi'(y^*)$ only holds if $y_F \not \in \pi'(y^*)$, i.e., $y_F \in R_{Y_{\mathcal{F}}}$. Therefore, we have that $\lvert R_{\mathcal{F}} \rvert \leq \lvert R_{Y_{\mathcal{F}}} \rvert$, which implies 
\begin{align}
    \lvert R_E \rvert + 2 \lvert R_{\mathcal{F}} \rvert \leq \lvert R_E \rvert + \lvert R_{\mathcal{F}} \rvert + \lvert R_{Y_{\mathcal{F}}} \rvert \leq \frac{2\lvert E \rvert}{3}
    \label{eq:aeg_ineq}
\end{align}
since $z_1$ has moved to another coalition as well.
Thus, we have that $R_{\mathcal{F}}$ does not cover at least
\begin{align}
    \lvert E \rvert - 3\lvert R_{\mathcal{F}}\rvert \leq  \lvert E \rvert - 3\left(\frac{\lvert E \rvert}{3} - \frac{\lvert R_E \rvert}{2} \right) = \frac{3\lvert R_E \rvert}{2} \label{eq:aeg_covering}
\end{align}
many elements of $E$ since each set in $\mathcal{F}$ has exactly three elements.

By $d(\pi, \pi') \leq k +1 < \lvert E \rvert$, it holds that $E \setminus R_E \neq \emptyset$.
Since $\lvert Y \rvert\geq  \lvert E \rvert + \lvert X \rvert + 2k$, any $e \in E \setminus R_E$ can increase her utility by deviating to $\pi'(y^*)$ unless there is some agent in $\pi'(y^*)$ with a negative valuation for $e$. The only such agent is an $F \in \mathcal{F}$ with $e \in F$.
However, by \Cref{eq:aeg_covering}, at least $\frac{3\lvert R_E \rvert}{2}$ elements of $E$ are not covered by $R_{\mathcal{F}}$, so at least $\frac{\lvert R_E \rvert}{2}$ agents in $E$ can IS deviate to $\pi'(y^*)$. Since $\pi'$ is IS, it follows that $\lvert R_E \rvert = 0$, which implies that $R_{\mathcal{F}}$ is a set cover of $E$.
By \Cref{eq:aeg_ineq}, we conclude that $\lvert R_{\mathcal{F}} \rvert \leq \frac{\lvert E \rvert}{3}$ and that $R_{\mathcal{F}}$ is an exact cover of $E$.

For the non-symmetric case, the proof works analogously. The only valuations we need to change are $v_{z_1}(z_2) = 1$ and $v_{z_2}(z_1) = -n$. After the valuation of $z_2$ for $z_1$ is altered to $v'_{z_2}(z_1) = 1$, the same argument as in the symmetric case applies. Note that the altered game is symmetric, so IS and NS are in fact equivalent. In the original game, it is easy to check that $\pi$ is Nash stable.

\end{proof}

\NSsymFEGHard*
We give two separate reductions for FEGs and AFGs that follow a similar idea.
\begin{proof}[Proof of \Cref{thm:fegs_hardness} for FEGs]
    We first prove hardness for \textsc{NS-1-1-Sym-Altered}, and explain afterwards how to slightly adapt the reduction for the non-symmetric case.
    
   The reduction is from Restricted Exact Cover by 3-Sets (RX3C), which is the same as X3C with the addition that each element of $E$ is in exactly three sets contained in the set $\mathcal{F}$.
   Note that $\lvert E \rvert = \lvert \mathcal{F} \rvert =: 3k$ holds in this case. RX3C is known to be NP-complete \citep{gonzalez1985clustering}.
   
   Let $(E, \mathcal{F})$ be an RX3C instance such that $k \geq 6$. Without loss of generality, we may assume that $k$ is even. If $k$ is odd, replace the instance with the disjoint union of two identical copies of the original instance. Define a symmetric FEG $G$ (see \Cref{fig:fegs_hardness}) with agent set $N = E \cup \mathcal{F} \cup T \cup W \cup X \cup Y \cup Z$ with $X = X_1 \cup X_2$ and $Y = Y_1 \cup Y_2$, where $\lvert T \rvert = 2k-1,\lvert W \rvert = k, \lvert X_1 \rvert = \lvert X_2 \rvert = 3k^2, \lvert Y_1 \rvert = \lvert Y_2 \rvert = \frac{k}{2}+1,$ and $\lvert Z \rvert = 3k - 3$.
   The valuations of $G$ are given by 
   \begin{itemize}
       \item $v(F, F') = 1$ if and only if $F \cap F' = \emptyset$ for all $F, F' \in \mathcal{F}$,
       \item $v(t, F) = v(t, t') = 1$ for all $F \in \mathcal{F}$ and $t, t' \in T$,
       \item $v(e, F) = 1$ for all $F \in \mathcal{F}$ and $e \in E \setminus F$,
       \item $v(e, e') =1$ for all $e, e' \in E$,
       \item $v(e, w) = v(w, w') = 1$ for all $e \in E$ and $w, w' \in W$,
       \item $v(e, x_2) =v(e, y_1) = v(e,z) = 1$ for all $e \in E, x_2 \in X_2, y_1\in Y_1, z \in Z$,
       \item $v(w, x_1) = v(w, y) = v(w, z) = 1$ for all $w \in W, x_1 \in X_1, y \in Y, z \in Z$,
       \item $v(a, b) = 1$ for all $a, b \in X \cup Y \cup Z$,
       \item $-1$ for all other valuations.
   \end{itemize}

\begin{figure}[ht]
    \centering
\begin{tikzpicture}[>=stealth,
        every node/.style={font=\small},
        main/.style={circle,draw,minimum size=7mm,font=\small},
        rect/.style={draw, align=center}]

    \node[rect, minimum width = 2.5cm, minimum height = 1.5cm] (Z)  at (2.5, 0) {$Z$};

    \node[rect, minimum width = 1.5cm, minimum height = 1.5cm] (Y1) at (0, 0) {$Y_1$};

    \node[rect, minimum width = 1.5cm, minimum height = 1.5cm] (Y2) at (-2, 0) {$Y_2$};

    \node[rect, minimum width = 3cm, minimum height = 1.5cm] (X1) at (-1.25, -2.5) {$X_1$};

    \node[rect, minimum width = 3cm, minimum height = 1.5cm] (X2) at (2.25, -2.5) {$X_2$};

    \node[rect, minimum width = 3cm, minimum height = 1.5cm] (E) at (2.25, 2.5) {$E$};

    \node[rect, minimum width = 2.5cm, minimum height = 1.5cm] (W) at (-1.5, 2.5) {$W$};

    \node[rect, minimum width = 1.5cm, minimum height = 1.5cm] (T) at (-1, 5) {$T$};

    \node[rect, minimum width = 3cm, minimum height = 1.5cm, dotted, thick] (F) at (2.25, 5) {$\mathcal{F}$};

    \node[main, font = \tiny] (Fe) at (3, 5) {$F \not \ni e$};
    \node[main, font = \footnotesize] (e) at (3, 2.5) {$e$};
    
    \node (y2) at (-1.36, -0.11) {};
    \node (z) at (1.36, -0.11) {};
    \node (Wbl2) at (-2.72, 1.87) {};
    \node (Ebr2) at (3.72, 1.87) {};
    \node (X1tr) at (3.72, -1.87) {};
    \node (X2tl) at (-2.72, -1.87) {};
    
    \begin{scope}[every node/.style={font=\small}]
                \path
                    (Fe) edge[-] node[] {} (e)
                    (Wbl2) edge[-, bend right = 15] node[] {} (X2tl)
                    (Ebr2) edge[-, bend left = 15] node [] {} (X1tr)
                    (y2)  edge[-, bend left = 45] node[] {} (z);
            \end{scope}     

    \draw[-] (-0.25, 5) -- node[] {} (0.75, 5);
    \draw[-] (-0.25, 2.5) -- node[] {} (0.75, 2.5);
    \draw[-] (-1.25, 0) -- node[] {} (-0.75, 0);
    \draw[-] (0.75, 0) -- node[] {} (1.25, 0);
    \draw[-] (0.25, -2.5) -- node[] {} (0.75, -2.5);
    \draw[-] (-2, -0.75) -- node[] {} (-1.25, -1.75);
    \draw[-] (0, -0.75) -- node[] {} (-1.25, -1.75);
    \draw[-] (2.5, -0.75) -- node[] {} (-1.25, -1.75);
    \draw[-] (-2, -0.75) -- node[] {} (2.25, -1.75);
    \draw[-] (0, -0.75) -- node[] {} (2.25, -1.75);
    \draw[-] (2.5, -0.75) -- node[] {} (2.25, -1.75);
    \draw[-] (-2, 0.75) -- node[] {} (-1.5, 1.75);
    \draw[-] (0, 0.75) -- node[] {} (-1.5, 1.75);
    \draw[-] (2.5, 0.75) -- node[] {} (-1.5, 1.75);
    \draw[-] (0, 0.75) -- node[] {} (2.25, 1.75);
    \draw[-] (2.5, 0.75) -- node[] {} (2.25, 1.75);

    \node (Ttl) at (-1.5, 5.5) {};
    \node (Tbl) at (-1.5, 4.5) {};
    \node (Ftr) at (3.5, 5.5) {};
    \node (Fbr) at (3.5, 4.5) {};
    \draw[thick,CoalitionColor, fill=CoalitionColor!50, fill opacity=0.2] 
    \convexpath{Ttl, Ftr, Fbr, Tbl}{.5cm};

    \node (Wtl) at (-2.5, 3) {};
    \node (Etr) at (3.5, 3) {};
    \node (Ebr) at (3.5, 2) {};
    \node (Wbl) at (-2.5, 2) {};
    \draw[thick,CoalitionColor, fill=CoalitionColor!50, fill opacity=0.2] 
    \convexpath{Wtl, Etr, Ebr, Wbl}{.5cm};

    \node (Y2tl) at (-2.5, 0.5) {};
    \node (Ztr) at (3.5, 0.5) {};
    \node (X2br) at (3.5, -3) {};
    \node (X1bl) at (-2.5, -3) {};
    \draw[thick,CoalitionColor, fill=CoalitionColor!50, fill opacity=0.2] 
    \convexpath{Y2tl, Ztr, X2br, X1bl}{.5cm};
\end{tikzpicture}
\caption{\label{fig:fegs_hardness}Illustration of the symmetric FEG $G$ from the reduction in \Cref{thm:fegs_hardness}. Undotted rectangles are cliques, and in the dotted rectangle it holds that $v(F, F') = 1$ if and only if the corresponding sets are disjoint. There is an edge $\{i, j\}$ if and only if $v(i, j) = 1$. Coalitions in $\pi$ are highlighted in blue.}
\end{figure}
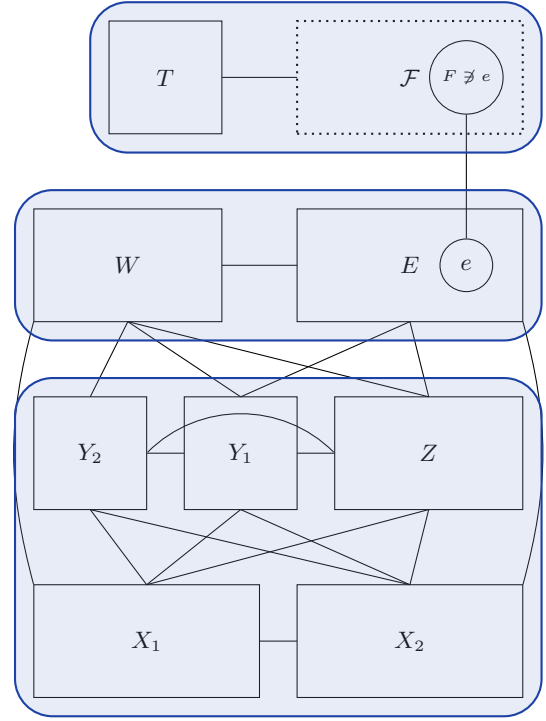
   
Clearly, $G$ is polynomial in the encoding size of $(E, \mathcal{F})$. We claim that $\pi = \{E \cup W, \mathcal{F} \cup T, X \cup Y \cup Z\}$ is NS in $G$. To see this, we perform a case analysis.
\begin{itemize}
    \item For an agent $e \in E$ it holds that $u_e(X \cup Y \cup Z)= \lvert Z \rvert = 3k-3$ and $u_e(\mathcal{F}\cup T) = (3k-6) - (2k- 1) = k - 5$ since exactly $3$ agents in $\mathcal{F}$ have a negative valuation for her. By $u_e(\pi(e)) = 4k - 1$, she cannot deviate.
    \item For an agent $w \in W$ it holds that
    $u_w(\pi(w)) = \lvert W \rvert - 1 + \lvert E \rvert = 4k-1 = \lvert X_1 \rvert - \lvert X_2 \rvert +\lvert Y \rvert + \lvert Z \rvert = u_w(X \cup Y \cup Z) \geq 0$. Thus, this agent does not want to deviate since she has a nonpositive utility for any other coalition in $\pi$.

    \item For an agent $F \in \mathcal{F}$, it holds that $u_F(W \cup E) = -\lvert W \rvert + \lvert E \rvert - 2 \cdot 3 = 2k - 6$.
    Moreover, we have that $u_F(\pi(F)) \geq \lvert T \rvert + (\lvert \mathcal{F} \rvert - 1) - 2 \cdot 6 = 5k - 14 \geq 2k-6 \geq 0$ since $k \geq 6$ and the set $F$ intersects at most $6$ other sets in $\mathcal{F} \setminus \{F\}$ because each set in $\mathcal{F}$ is of size three and each $e \in E$ is contained in exactly three sets in $\mathcal{F}$. Finally, agent $F$ has a negative utility for every other nonempty coalition in $\pi$, so she cannot deviate.  
    
    \item An agent $a \in X \cup Y \cup Z$ cannot deviate to another coalition since $u_a(\pi(a)) \geq 6k^2  \geq N \setminus (\pi'(a))$.

    \item  An agent $t \in T$ cannot deviate to another coalition since she has a negative valuation for any agent $a \in N \setminus \pi(t)$ and $u_t(\pi(t)) =  5k - 2 \geq 0$.
\end{itemize}

Let $\tilde{w}$ and $\tilde{x}_2$ denote agents in $W$ and $X_2$, respectively. Alter the valuation between these agents such that $v'(\tilde{w}, \tilde{x}_2) = 1$ and denote the altered game by $G'$. We claim that an exact cover $\mathcal{C \subseteq F}$ of $E$ exists if and only if $G'$ has an NS partition $\pi'$ with $d(\pi, \pi') \leq 2k$. 

To this end, let $\mathcal{C}$ denote such an exact cover and let $\pi' = \{E \cup \mathcal{C}, (\mathcal{F \setminus C}) \cup T, W \cup X \cup Y \cup Z\}$. Clearly, it holds that $d(\pi, \pi') = 2k$ since $\lvert \mathcal{C}\rvert = \lvert W \rvert = k$. To see that $\pi'$ is NS, we perform a case analysis.
\begin{itemize}
    \item For an agent $e \in E$ it holds that $u_e(\pi'(e)) = \lvert \mathcal{C} \rvert - 1 + \lvert E \rvert - 1 = 4k - 2 > 4k - 3 = \lvert W \rvert + \lvert Z \rvert = u_e(W \cup X \cup Y \cup Z) \geq 0$ since $\mathcal{C}$ is an exact cover of $E$. Furthermore, it holds that $u_e((\mathcal{F \setminus C}) \cup T) = \lvert T \rvert + 2k - 2\cdot2 = 4k - 5$ since $e$ is contained in exactly two sets in $\mathcal{F \setminus C}$. Hence, $e$ cannot deviate since she has a nonpositive utility for any other coalition.
    \item For an agent $C \in \mathcal{C}$ it holds that $u_C(\pi'(C)) = \lvert \mathcal{C} \rvert - 1 + \lvert E \rvert - 2\cdot 3 = 4k-7 =
    \lvert \mathcal{F \setminus C} \rvert - 2 \cdot 3 + \lvert T \rvert \geq u_C((\mathcal{F \setminus C}) \cup T) \geq 0$ since $\mathcal{C}$ is an exact cover, the set $C$ intersects at least 3 sets in $\mathcal{F \setminus C}$, and $\lvert C \cap E \rvert = 3$. Thus, agent $C$ cannot deviate since she has a nonpositive utility for any other coalition in $\pi'$.
    
    \item For an agent $F \in \mathcal{F \setminus C}$ it holds that $u_F(\pi'(F)) \geq \lvert \mathcal{F \setminus C} \rvert - 8 + \lvert T \rvert = 4k - 9 \geq \lvert \mathcal{C} \rvert - 4 + \lvert E \rvert - 6 \geq u_F(\mathcal{C} \cup E) \geq 0$ since any set in $\mathcal{F \setminus C}$ intersects at most $3$ other sets in $(\mathcal{F \setminus C}) \setminus\{F\}$ and at least $2$ sets in $\mathcal{C}$.
    To see this, note that every element of $F$ is contained in some set in $\mathcal{C}$ exactly once, so each element of $F$ is contained in exactly one set in $\mathcal{F \setminus C} \setminus \{F\}$, which are at most 3 sets since $\lvert F \rvert = 3$. Moreover, since $F \not \in \mathcal{C}$, we have that $F$ either intersects two or three sets in $\mathcal{C}$.
    
    Finally, agent $F$ has a negative valuation for every other coalition in $\pi'$, so she cannot deviate.
    
    \item For an agent $w \in W$ it holds that $u_w(\pi'(w)) \geq \lvert W \rvert - 1 + \lvert Y \rvert + \lvert Z \rvert = 5k - 2 \geq 4k = u_w(\mathcal{C} \cup E) \geq 0$. Since this agent has a nonpositive utility for any other coalition in $\pi'$, she cannot deviate.
    \item An agent $a \in X \cup Y \cup Z$ cannot deviate to another coalition since $u_a(\pi'(a)) \geq 6k^2 \geq N \setminus(\pi'(a))$.
    \item An agent $t \in T$ has a nonpositive utility for any other coalition, and it holds that $u_t(\pi(t)) = 4k-2\geq 0$. Hence, she cannot deviate.

\end{itemize}

To prove the reverse direction, suppose that $\pi'$ is NS in $G'$ with $d(\pi, \pi') \leq 2k$. First, note that $\pi$ is not NS in $G$ since agent $\tilde{w}$ can make an NS deviation. 
By $d(\pi, \pi') \leq 2k, X\cup Y \cup Z \in \pi$, and $k \geq 6$, there is some coalition $C \in \pi'$ with $\lvert C \cap (X \cup Y \cup Z) \rvert \geq \lvert X \cup Y \cup Z \rvert - 2k = 6k^2 + 2k - 1$. Since any $a \in X  \cup Y \cup Z$ has a utility of at least $6k^2$ for $C$ and $\lvert N \rvert - 6k^2 < 0$, we have that $a \in C$. Hence, there is some coalition in $\pi'$, denoted by $\pi'(X) \in \pi'$, with $X \cup Y \cup Z \subseteq \pi'(X)$.

For all $a \in \mathcal{F} \cup T$ and $x \in X$, it holds that $v(a, x) = -1$, which implies that \[u_a(\pi'(X)) \leq \lvert N\setminus X\rvert -\lvert X \rvert\leq (13k-2) - 6k^2 < 0,\] so we have that $\pi'(X) \cap (E \cup \mathcal{F} \cup T) = \emptyset$. 
Hence, it must hold that $\tilde{w} \in \pi'(X)$ since \[u_{\tilde{w}}(\pi'(X)) \geq \lvert Y\rvert + \lvert Z \rvert + \lvert X_1 \rvert - (\lvert X_2\rvert -1) + 1 = 4k + 1\]
and $\tilde{w}$ has a positive valuation for at most $\lvert W \rvert - 1 + \lvert E \rvert = 4k-1$ agents outside of $\pi'(X)$.
By an analogous argument, it follows that $W \subseteq \pi'(X)$.
Hence, we have that $u_e(\pi'(X)) \geq \lvert W \rvert + \lvert Z \rvert = 4k - 3$ for all $e \in E$. 

Let $\ell := \lvert \pi'(X) \cap E\rvert$. For $\ell > 0$, we have that for any agent $e \in \pi'(X)$, $u_e(\pi'(e)) \leq (3k - 2) + (k - \ell) = 4k- 2 - \ell$ and $u_e(\pi'(X)) \geq \lvert W \rvert + \lvert Z \rvert + \ell = 4k - 3 + \ell$, so any such agent can deviate. Since $\pi'$ is NS, we have that $\ell = 0$, i.e., $\pi'(X) \cap E = \emptyset$.
Moreover, there is some coalition $\pi'$, denoted by $\pi'(E)$, with $E \subseteq \pi'(E)$, as otherwise some agent in $E$ could deviate to $\pi'(X)$.
Then, since $\pi'$ is Nash stable, $\lvert W \rvert = k$, and $d(\pi, \pi') \leq 2k$, we have that $\lvert \pi'(E)\rvert \leq 4k$ and $u_e(\pi'(E)) \geq 4k-3$ for all $e \in E$. Hence, every $e \in E$ must have a positive valuation for at least $k-1$ many agents in $\pi'(E) \cap \mathcal{F}$.
Hence, every $e \in E$ is contained in exactly one set in $\pi'(E) \cap \mathcal{F}$, meaning that this is an exact cover of $E$.

For the non-symmetric case, the proof works analogously. The only valuation we need to change is $v_{\tilde{x_2}}(\tilde{w}) = -1$. After the valuation of $\tilde{w}$ is altered for $\tilde{x_2}$, the same argument as in the symmetric case applies.
\end{proof}

\begin{proof}[Proof of \Cref{thm:fegs_hardness} for AFGs]
    We first prove hardness for \textsc{NS-1-1-Sym-Altered}, and explain afterwards how to slightly adapt the reduction for the non-symmetric case.
    
    Let $(E, \mathcal{F})$ be an RX3C instance such that $k \geq 8$. Define a symmetric AFG $G$ with agent set $N = E \cup \mathcal{F} \cup T \cup W \cup X \cup Y \cup Z$, where $\lvert T \rvert = 2k - 1, \lvert W \rvert = k - 1, \lvert X \rvert = 8k, \lvert Y \rvert = k - 1$, and $\lvert Z \rvert = 3k - 1$. The valuations of $G$ are given by
    \begin{itemize}
        \item $v(F, F') = n$ if and only if $F \cap F' = \emptyset$ for all $F, F' \in \mathcal{F}$,
        \item $v(t, F) = v(t, t') = n$ for all $F \in \mathcal{F}$ and $t, t' \in T$,
        \item $v(e, F) = n$ for all $F \in \mathcal{F}$ and $e \in E \setminus F$,
        \item $v(e, e') = n$ for all $e, e' \in E$,
        \item $v(e, w) = v(w, w') = n$ for all $e \in E$ and $w, w' \in W$,
        \item $v(e, z) = n$ for all $e \in E$ and $z \in Z$,
        \item $v(w, y) = v(w, z) = n$ for all $w \in W, y \in Y$, and $z \in Z$,
        \item $v(a, b) = n$ for all $a, b \in X \cup Y \cup Z$,
        \item $-1$ for all other valuations.
    \end{itemize}

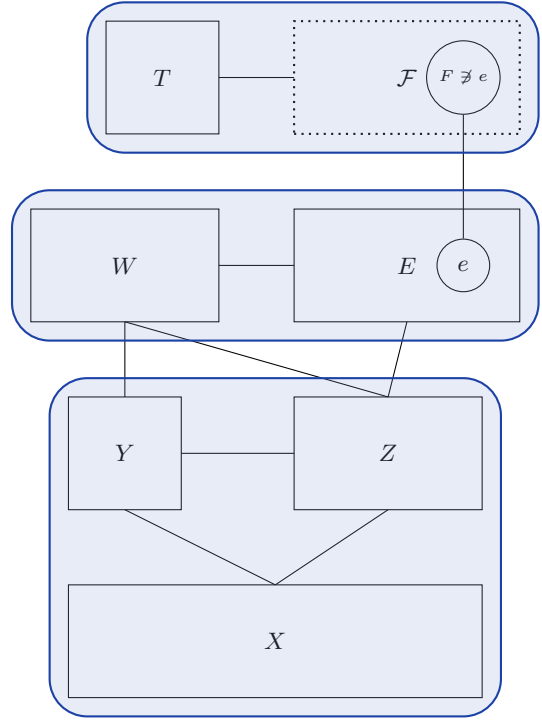
\begin{figure}[h]
    \centering
\begin{tikzpicture}[>=stealth,
        every node/.style={font=\small},
        main/.style={circle,draw,minimum size=7mm,font=\small},
        rect/.style={draw, align=center}]

    \node[rect, minimum width = 2.5cm, minimum height = 1.5cm] (Z)  at (2, 0) {$Z$};

    \node[rect, minimum width = 1.5cm, minimum height = 1.5cm] (Y1) at (-1.5, 0) {$Y$};

    \node[rect, minimum width = 5.5cm, minimum height = 1.5cm] (X1) at (0.5, -2.5) {$X$};

    \node[rect, minimum width = 3cm, minimum height = 1.5cm] (E) at (2.25, 2.5) {$E$};

    \node[rect, minimum width = 2.5cm, minimum height = 1.5cm] (W) at (-1.5, 2.5) {$W$};

    \node[rect, minimum width = 1.5cm, minimum height = 1.5cm] (T) at (-1, 5) {$T$};

    \node[rect, minimum width = 3cm, minimum height = 1.5cm, dotted, thick] (F) at (2.25, 5) {$\mathcal{F}$};

    \node[main, font = \tiny] (Fe) at (3, 5) {$F \not \ni e$};
    \node[main, font = \footnotesize] (e) at (3, 2.5) {$e$};
    
    \node (y2) at (-1.36, -0.11) {};
    \node (z) at (1.36, -0.11) {};
    \node (Wbl2) at (-2.72, 1.87) {};
    \node (Ebr2) at (3.72, 1.87) {};
    \node (X1tr) at (3.22, -1.87) {};
    \node (X2tl) at (-2.22, -1.87) {};
    
    \begin{scope}[every node/.style={font=\small}]
                \path
                    (Fe) edge[-] node[] {} (e);
            \end{scope}     

    \draw[-] (-0.25, 5) -- node[] {} (0.75, 5);
    \draw[-] (-0.25, 2.5) -- node[] {} (0.75, 2.5);
    \draw[-] (-0.75, 0) -- node[] {} (0.75, 0);
    \draw[-] (-1.5, 0.75) -- node[] {} (-1.5, 1.75);
    \draw[-] (2, 0.75) -- node[] {} (-1.5, 1.75);
    \draw[-] (2, 0.75) -- node[] {} (2.25, 1.75);
    \draw[-] (-1.5, -0.75) -- node[] {} (0.5, -1.75);
    \draw[-] (2, -0.75) -- node[] {} (0.5, -1.75);

    \node (Ttl) at (-1.5, 5.5) {};
    \node (Tbl) at (-1.5, 4.5) {};
    \node (Ftr) at (3.5, 5.5) {};
    \node (Fbr) at (3.5, 4.5) {};
    \draw[thick,CoalitionColor, fill=CoalitionColor!50, fill opacity=0.2] 
    \convexpath{Ttl, Ftr, Fbr, Tbl}{.5cm};

    \node (Wtl) at (-2.5, 3) {};
    \node (Etr) at (3.5, 3) {};
    \node (Ebr) at (3.5, 2) {};
    \node (Wbl) at (-2.5, 2) {};
    \draw[thick,CoalitionColor, fill=CoalitionColor!50, fill opacity=0.2] 
    \convexpath{Wtl, Etr, Ebr, Wbl}{.5cm};

    \node (Y2tl) at (-2, 0.5) {};
    \node (Ztr) at (3, 0.5) {};
    \node (X2br) at (3, -3) {};
    \node (X1bl) at (-2, -3) {};
    \draw[thick,CoalitionColor, fill=CoalitionColor!50, fill opacity=0.2] 
    \convexpath{Y2tl, Ztr, X2br, X1bl}{.5cm};
\end{tikzpicture}
\caption{\label{fig:afgs_hardness}Illustration of the symmetric AEG $G$ from the reduction in \Cref{thm:fegs_hardness}. Undotted rectangles are cliques, and in the dotted rectangle it holds that $v(F, F') = n$ if and only if the corresponding sets are disjoint. There is an edge $\{i, j\}$ if and only if $v(i, j) = n$, otherwise the valuation is $-1$. Coalitions in $\pi$ are highlighted in blue.}
\end{figure}

Clearly, $G$ is polynomial in the encoding size of $(E, \mathcal{F})$. We claim that $\pi = \{E \cup W, \mathcal{F} \cup T, X \cup Y \cup Z\}$ is NS in $G$. To see this, we perform a case analysis. Note that in AFGs whenever an agent $a$ has more friends in a coalition $C_1$ than in $C_2$, we have that $u_a(C_1) > u_a(C_2)$.

\begin{itemize}
    \item For an agent $e \in E$, we have that $u_e(\pi(e)) = n( \lvert W \rvert + \lvert E \rvert - 1) = n(4 k - 2) > 3nk > u_e(X \cup Y \cup Z), u_e(T \cup \mathcal{F}) > 0$ since $k \geq 8$. Hence, $e$ cannot deviate.
    \item For an agent $w \in W$, we have that $u_w(\pi(w)) = n(4k - 2) > n(4k - 2) - \lvert X \rvert = u_w(X \cup Y \cup Z)$. Since $w$ has no friends in $T \cup \mathcal{F}$ and $\emptyset$, she cannot deviate. 
    \item For an agent $F \in \mathcal{F}$, we have that $u_F(\pi(F)) \geq n(\lvert T \rvert + \lvert \mathcal{F} \rvert - 7) - 6 = n(5k - 8) - 6 \geq 3nk \geq u_F(W \cup E)$ since $k \geq 8$ and the set $F$ intersects at most 6 other sets in $\mathcal{F} \setminus \{F\}$. Since agent $F$ has no friends in $X \cup Y \cup Z$ and $\emptyset$, she cannot deviate.
    \item An agent $t \in T$ does not have any friends outside her coalition, so she cannot deviate.
    \item An agent $a \in X \cup Y \cup Z$ has $12k - 3$ friends in $\pi(a)$ and at most $\lvert N \setminus \pi(a) \rvert = \lvert E\cup \mathcal{F} \cup W \cup T \rvert = 9k - 2$ friends in any other coalition, so she cannot deviate.
\end{itemize}

Let $\tilde{w}$ be an agent in $W$ and let $\tilde{x}$ be an agent in $X$. Alter the valuation between these two agents, such that $v'(\tilde{w}, \tilde{x}) = n$ and denote the altered game by $G'$. We claim that an exact cover $\mathcal{C \subseteq F}$ of $E$ exists if and only if $G'$ has an NS partition $\pi'$ with $d(\pi, \pi') \leq 2k - 1$. 

To this end, let $\mathcal{C}$ denote such an exact cover and let $\pi' = \{E \cup \mathcal{C}, (\mathcal{F \setminus C}) \cup T, W \cup X \cup Y \cup Z\}$. Clearly, it holds that $d(\pi, \pi') = 2k - 1$ since $\lvert \mathcal{C} \rvert = k$ and $\lvert W \rvert = k - 1$. To see that $\pi'$ is NS in $G'$, we perform a case analysis. 

\begin{itemize}
    \item For an agent $e \in E$, we have that $u_e(\pi'(e)) = n(\lvert E \rvert - 1 + \lvert \mathcal{C} \rvert - 1) - 1= n(4k - 2) - 1 > n(\lvert W \rvert + \lvert Z \rvert) - \lvert Y \rvert - \lvert X \rvert = u_e(W \cup X \cup Y \cup Z) > 0$ since $e$ is contained in exactly one set in the exact cover $\mathcal{C}$. Moreover, we have that $u_e((\mathcal{F \setminus C}) \cup T) \leq n(\lvert \mathcal{F \setminus C} \rvert)  = 2nk$, so $e$ cannot deviate.
    \item For an agent $C \in \mathcal{C}$, we have that $u_C(\pi'(C)) = n(\lvert E \rvert - 3 + \lvert \mathcal{C} \rvert - 1) - 3 = n(4k - 4) - 3$ since $C$ is disjoint to any set in $\mathcal{C} \setminus \{C\}$ and $\lvert E \cap C\rvert = 3$. Moreover, we have that $u_C((\mathcal{F \setminus C}) \cup T) \geq n(\lvert T \rvert + \lvert \mathcal{F \setminus C}\rvert - 3) - 3 = n(4k - 4) - 3$ since $C$ intersects at least $3$ sets in $\mathcal{F \setminus C}$.
    Hence, it holds that $u_C(\pi'(C)) = u_C((\mathcal{F \setminus C}) \cup T)$. $C$ does not deviate to another coalition since she does not have a friend in $W \cup X \cup Y \cup Z$ and $\emptyset$.
    \item For an agent $F \in \mathcal{F \setminus C}$, we have that  $u_F(\pi'(F)) \geq n(\lvert T \rvert - \lvert \mathcal{F} \rvert - 4) - 3 = n(4k - 5) - 3 > n(4k - 5)- 5 = n(\lvert E \rvert - 3 + \lvert \mathcal{C} \rvert - 2) - 5 \geq u_F(\mathcal{C} \cup E)$ since $F$ intersects at most $3$ sets in $(\mathcal{F \setminus C}) \setminus \{F\}$, at least 2 sets in $\mathcal{C}$, and we have that $\lvert F \cap E \rvert = 3$. $F$ does not have a friend in $W \cup X \cup Y \cup Z$ and $\emptyset$, so she cannot deviate.
    \item For an agent $t \in T$, we have that $u_t(\pi'(t)) = n(\lvert T \rvert - 1 + \lvert \mathcal{ F \setminus C}\rvert) = n(4k - 2) > nk > u_t(E \cup \mathcal{C})$. Since $t$ does not have friends in $W \cup X \cup Y \cup Z$ and $\emptyset$, she cannot deviate.
    \item An agent $w \in W$ has $\lvert W \rvert - 1 + \lvert Y \rvert + \lvert Z \rvert = 5k - 2$ many friends in $\pi'(w)$ and at most $\lvert E \rvert = 3k$ many friends in any other coalition, so she cannot deviate.
    \item An agent $a \in X \cup Y \cup Z$ has $13k - 4$ friends in $\pi(a)$ and at most $\lvert N \setminus \pi(a) \rvert = \lvert E\cup \mathcal{F} \cup T \rvert = 8k - 1$ friends in any other coalition, so she cannot deviate.
\end{itemize}

To prove the reverse direction, suppose that $\pi'$ is NS in $G'$ with $d(\pi, \pi') \leq 2k - 1$. First, note that $\pi$ is not NS in $G$ since $\tilde{w}$ can make an NS deviation.

By $d(\pi, \pi') \leq 2k - 1$ and $X \cup Y \cup Z \in \pi$, there is some coalition $C \in \pi'$ with $\lvert C \cap (X \cup Y \cup Z)\rvert \geq \lvert X \cup Y \cup Z \rvert - (2k - 1) = 10k -1$. Since any $a \in X \cup Y \cup Z$ has at least $10k - 1$ many friends in $C$ and at most $\lvert N \rvert - (10k - 1) < \lfloor \frac{N}{2} \rfloor$ many friends in any other coalition, we have that $a \in C$. Hence, there is some coalition in $\pi'$, denoted by $\pi'(X) \in \pi'$, with $X \cup Y \cup Z \subseteq \pi'(X)$.

Therefore, we have that $\tilde{w}$ has at least $\lvert \{\tilde{x}\} \cup Y \cup Z \rvert = 4k - 1$ friends in $\pi'(X)$ and at most $\lvert W \setminus \{\tilde{w}\} \rvert + \lvert E \rvert = 4k - 2$ many friends in $N \setminus \pi'(X)$, so $\tilde{w} \in \pi'(X)$. An analogous argument implies that $W \subseteq \pi'(X)$.

Let $\ell := \lvert \pi'(X) \cap E\rvert$. For any agent $e \in E \setminus \pi'(X)$, we have that $e$ has at most $(3k - 1) + (k - \ell) = 4k - 1 - \ell$ many friends in $\pi'(e)$ and at least $4k - 2 + \ell$ many friends in $\pi'(X)$, so any such agent can make a NS deviation unless $\ell = 0$. Thus, we have that $\pi'(X) \cap E = \emptyset$.

Now, suppose that there are two coalitions $C_1, C_2 \in \pi'$ with $C_1 \neq C_2$ and $\emptyset \neq C_i \cap E \neq E$ for $i \in \{1, 2\}$. By $d(\pi, \pi') \leq 2k - 1, E \cup W \in \pi$, and $\lvert W \rvert = k - 1$, we have that $\lvert C_i \cap E \rvert \leq k$ for some $i \in \{1, 2\}$. Without loss of generality, we may assume that $i = 1$.
Since $C_1 \subseteq E \cup \mathcal{F} \cup T$, any $e \in C_1 \cap E$ has at most $(k - 1) + (3k - 3) = 4k- 3$ friends in $C_1$ because $e$ is contained in exactly $3$ sets in $\mathcal{F}$. However, this is a contradiction to $C_1$ being NS because any $e \in E$ has at least $\lvert W \cup Y \rvert = 4k - 2$ friends in $\pi'(X)$. Hence, we have that there is some coalition in $\pi'$, denoted by $\pi'(E) \in \pi'$, with $E \subseteq \pi'(E)$.

Then, every $e \in E$ must have at least $4k - 2$ friends in $\pi'(E)$ and $\lvert \pi'(E) \rvert \leq 4k$, so every $e \in E$ must have at least $k - 1$ many friends in $\pi'(E) \cap \mathcal{F}$. Hence, every $e \in E$ is contained in exactly one set in $\pi'(E) \cap \mathcal{F}$, meaning that this is an exact cover of $E$.

For the non-symmetric case, the proof works analogously. The only valuation we need to change is $v_{\tilde{x_2}}(\tilde{w}) = -1$. After the valuation of $\tilde{w}$ is altered for $\tilde{x_2}$, the same argument as in the symmetric case applies.
\end{proof} 

\section{Missing Proofs in Section \ref{sec:AverageAnalysis}}

In this appendix, we present the proofs of \Cref{sec:AverageAnalysis} missing from the main body of our paper. 

\existencelimes*
\begin{proof}
    Denote by  \[w_{X,n}^{\mathcal{G}}(m) = \max_{\substack{G^m \in \mathcal{G}^{m+1}(n) \\ \pi_0}}\min_{\pi^m}\frac{\sum_{i=1}^m d(\pi_{i-1}, \pi_{i})}{m}\]
    the maximum average number of changes of coalitions for any sequence of length $m$ for a set of $n$ agents. Since any coalition can be reached from any other coalition with at most $n-1$ changes we get that  $0 \leq w_{X,n}^{\mathcal{G}}(m) \leq n-1$ and thus by Bolzano-Weierstrass there exist a subsequence $(a_{m_t})_{t \in \mathbb{N}}$ that converges to some $a \in \mathbb{R}$.
    
    It holds that every $m = q \cdot y + r$,  $w_{X,n}^{\mathcal{G}}(m) \leq
    \frac{q \cdot w_{X,n}^{\mathcal{G}}(y) + r\cdot n}{m} \leq w_{X,n}^{\mathcal{G}}(y) + \frac{r \cdot n}{m}$ since we can divide the sequence in parts of length $y$ and upper bound the changes for the remaining part by $n$. 
    So let $\epsilon > 0$. Choose $s \in \mathbb{N}$ such that $\frac{1}{s} \leq \epsilon$. Then, by the convergence of the subsequence, there exists an $m_1$ such that $\lvert w_{X,n}^{\mathcal{G}}(m_1) - a \rvert  \leq \frac{1}{2s}$. Now choose $m_2 = 2s \cdot  m_1 \cdot n$ and let $m \geq m_2$.

    Then we get that
    \begin{align*}
    w_{X,n}^{\mathcal{G}}(m) & \leq \frac{(m-m_1) \cdot w_{X,n}^{\mathcal{G}}(m_1) + m_1 \cdot n}{m}\\
    &\leq w_{X,n}^{\mathcal{G}}(m_1) + \frac{m_1 \cdot n}{m}\\    
    & \leq a + \frac{1}{2s} + \frac{m_1 \cdot n}{2s \cdot m_1 \cdot n} \\
    &= a + \frac{1}{s} \leq a + \epsilon.
    \end{align*}
    The first inequality follows from the fact that we can write $m = q \cdot m_1 +r$ for some $r \in \{0\} \cup [m_1 -1]$.
    Therefore, we can divide any sequence of length $m$ into $q$ sequences of length $m_1$ and one sequence of length $r$. Next we upper bound the number of necessary changes for each of the $q$ sequences of size $m_1$ by $m_1 \cdot w_{X,n}^{\mathcal{G}}(m_1)$ and the number of necessary changes for the sequence of length $r$ by $r \cdot n$. Finally, we use the fact that $w_{X,n}^{\mathcal{G}}(m_1) \leq n$ to get the desired inequality. 

    For the other direction choose $m_3 \in \mathbb{N}$ such that 
    $m_3 \geq m \cdot n \cdot 2s$ and $\lvert w_{X,n}^{\mathcal{G}}(m_3) - a \rvert  \leq \frac{1}{2s}$. Such an $m_3$ exists due to $(a_{m_t})_{t \in \mathbb{N}}$.
    Then, we get that $w_{X,n}^{\mathcal{G}}(m_3) \geq a - \frac{1}{2s}$ and that
\begin{align*}
    w_{X,n}^{\mathcal{G}}(m_3) &\leq \frac{(m_3-m) \cdot w_{X,n}^{\mathcal{G}}(m) + m \cdot n}{m_3}\\
    &\leq w_{X,n}^{\mathcal{G}}(m) + \frac{m \cdot n}{m_3} \\
    &\leq w_{X,n}^{\mathcal{G}}(m) + \frac{m \cdot n}{m \cdot n \cdot 2s} \\
    &\leq w_{X,n}^{\mathcal{G}}(m) + \frac{1}{ 2s}. 
\end{align*}
Thus, we can conclude that
\begin{align*}
    a - \epsilon \leq a- \frac{1}{s} \leq  w_{X,n}^{\mathcal{G}}(m_3) - \frac{1}{2s}  &\leq w_{X,n}^{\mathcal{G}}(m).  \\
\end{align*}
This proves that $\lvert a - w_{X,n}^{\mathcal{G}}(m)\rvert \leq \epsilon$.   
\end{proof}

\CISasymmetricPotential*
\begin{proof}
    First, observe that for CIS partitions in FENGs, every CIS deviation of an agent $a$ increases the social welfare by at least $1$. This follows since no agent in the coalition of the deviating agent can have a positive valuation towards her. To be a valid CIS deviation, it also has to hold that every agent in the coalition $a$ deviates to has a non-negative valuation towards her. Finally, agent $a$ increases her utility by at least $1$.

    Furthermore, in FENGs, the update of a valuation between two agents can decrease the social welfare of the current partition by at most $2$.
    Finally, observe that the social welfare of every coalition is upper bounded by $n(n-1)$ and lower bounded by $-n(n-1)$.

Therefore, $m$ deviations can have decreased the social welfare by at most a total of $2m$. Furthermore, in the beginning, the social welfare was at least $-n(n-1)$ and at the end of the sequence it can be at most $n(n-1)$. Thus, it holds that the number of deviations is at most $2n(n-1) + 2m$ and therefore we get that

\begin{align*}
d_{\text{CIS}}^{\text{FENG}}(n) &= \lim_{m\to \infty} \max_{\substack{G^m \in \mathcal{G}^{m+1}(n) \\ \pi_0}}\min_{\pi^m}\frac{\sum_{i=1}^m d(\pi_{i-1}, \pi_{i})}{m} \\&\leq \lim_{m\to \infty} \max_{G^m \in \mathcal{G}^{m+1}(n)}\frac{2n(n-1)+ 2m}{m}
\\&= \lim_{m\to \infty} \frac{2n(n-1)}{m} + \frac{2m}{m}
\\&= 2.
\end{align*}
    
We can further improve this bound to $1$ as follows. Consider the valuation for agent $y$ for another agent $z$, where $y$ and $z$ are fixed.
Each time we decrease $v_y(z)$ to $-1$, at least one increasing update is required to be able to decrease this same valuation again. Furthermore, for every sequence containing an update from $1$ to $0$ to $-1$, we can find a shorter sequence that directly updates $1$ to $-1$. Thus, without loss of generality, we can assume that a sequence maximizing the total distance is alternating between $1$ and $-1$.
Thus, if we alter the valuation of an edge $x$ many times, we can decrease the valuation in total by at most $\lceil \frac{x}{2} \rceil \cdot 2$. Therefore, in the limit, only every second deviation can decrease the social welfare, which improves the bound to $1$.
\end{proof}

\thmsymfegs*
\begin{proof}
Every deviation in a symmetric game increases the social welfare by at least $2$ since the deviating agent $a$ increases her utility by at least one (and due to symmetry, the same holds for at least one other agent in $N \setminus \{a\}$).  
Furthermore, in FENGs, the update of a valuation between two agents can decrease the social welfare of the current partition by at most $4$ (changing a valuation from $1$ to $-1$ decreases the utility at most by two for each of the two corresponding agents). 
Finally, observe that the social welfare of every coalition is upper bounded by $n(n-1)$ and lower bounded by $-n(n-1)$.

Therefore, $m$ deviations can decrease the social welfare by at most a total of $4m$. Furthermore, in the beginning, the social welfare was at least $-n(n-1)$ and at the end of the sequence, it can be at most $n(n-1)$. Thus, it holds that the number of deviations is at most $\frac{2n(n-1) + 4m}{2}$
and therefore we get that

\begin{align*}
d_{\text{X}}^{\text{symFENG}}(n) &= \lim_{m\to \infty} \max_{\substack{G^m \in \mathcal{G}^{m+1}(n) \\ \pi_0}}\min_{\pi^m}\frac{\sum_{i=1}^m d(\pi_{i-1}, \pi_{i})}{m} \\&\leq \lim_{m\to \infty} \max_{G^m \in \mathcal{G}^{m+1}(n)}\frac{2n(n-1)+ 4m}{2m}
\\&= \lim_{m\to \infty} \frac{2n(n-1)}{2m} + \frac{4m}{2m}
\\&= 2.
\end{align*}

Similar to the proof of \Cref{theorem:CIS_FEG_amortized},
we can improve this bound to $1$. Consider a specific valuation between two agents $y$ and $z$.
Each time we decrease $v(y, z)$ to $-1$, at least one increasing update is required to be able to decrease the valuation again. Furthermore, for every sequence containing an update from $1$ to $0$ to $-1$, we can find a shorter sequence that directly updates $1$ to $-1$. Thus, without loss of generality, we can assume that a sequence maximizing the total distance is alternating between $1$ and $-1$.
Thus, if we alter the valuation of an edge $x$ many times, we can decrease the valuation in total by at most $\lceil \frac{x}{2} \rceil \cdot 2$. Therefore, in the limit, only every second deviation can decrease the social welfare, which improves the above bound to $1$.

\end{proof}

\end{document}